\newtheorem{definition}{Definition}
\newtheorem{proposition}{Proposition}
\newtheorem{theorem}{Theorem}
\newtheorem{corollary}{Corollary}
\newcommand{\beq}{\begin{equation}}
\newcommand{\eeq}{\end{equation}}
\newcommand{\beqa}{\begin{eqnarray}}
\newcommand{\eeqa}{\end{eqnarray}}
\begin{document}

\title{Combinatorial aspects of the Sachdev-Ye-Kitaev model}

\author{M. Laudonio \and R. Pascalie \and A. Tanasa}

\maketitle

\begin{abstract}
The Sachdev-Ye-Kitaev (SYK) model is a model of $q$ interacting fermions whose large N limit is dominated by melonic graphs. In this review we first present a diagrammatic proof of that result by direct, combinatorial analysis of its Feynman graphs. Gross and Rosenhaus have then proposed a generalization of the SYK model which involves fermions with different flavors. In terms of Feynman graphs, these flavors can be seen as reminiscent of the colors used in random tensor theory. Applying modern tools from random tensors to such a colored SYK model, all leading and next-to-leading orders diagrams of the 2-point and 4-point functions in the large $N$ expansion can be identified. We then study the effect of non-Gaussian average over the random couplings in a complex, colored version of the SYK model. Using a Polchinski-like equation and random tensor Gaussian universality, we show that the effect of this non-Gaussian averaging  leads to a modification of the variance of the Gaussian distribution of couplings at leading order in $N$. We then derive the form of the effective action to all orders.

%Insert your abstract here. Include keywords, PACS and mathematical
%subject classification numbers as needed.
\keywords{Keywords: Sachdev-Ye-Kitaev model, tensor models, melonic graphs
\\
MSC codes: 81T18, 81T99, 83E99, 70S05, 05C30}
%\and More}
%\PACS{PACS code1 \and PACS code2 \and more}
%\subclass{MSC code1 \and MSC code2 \and more}
\end{abstract}

\tableofcontents

\newpage
%%%%%%%%%%%%%
\section{Introduction} \label{intro}
%%%%%%%%%%%%%

The Sachdev-Ye model \cite{sachdev} was introduced within a condensed matter framework in the early nineties. This model attracted a certain interest within the condensed matter community. Thus, the 2-point function computation in the large $N$ limit was performed in \cite{georges}. In a series of talks \cite{kitaev}, Kitaev
introduced a simplified  version of this model and showed
it can be a particularly interesting toy-model for AdS/CFT physics. The model, called ever since the Sachdev-Ye-Kitaev (SYK) model, has attracted a huge amount of interest for both condensed matter and high energy physics, see for example \cite{maldacena}, \cite{polchinski}, \cite{Gross}, or the review articles \cite{Sarosi} or \cite{Rosenhaus}.

More specifically, the SYK model is a quantum-mechanical model with $N$ fermions with random interactions involving $q$ of these fermions at a time. Each coupling $J$ is a variable drawn from a random Gaussian distribution. From a theoretical phyisc point of view, the model has three remarkable properties: it is solvable at strong coupling, maximally chaotic and presents an emergent conformal symmetry both spontaneously and explicitly broken.

%The SYK model is the only toy-model known so far to enjoy all these properties (other known models have some of these properties but no other model has all of them). 
%It is this fact which led to the huge amount of interest within the high energy physics community.

A crucial property for the above mentioned solvability of the SYK model is that it is dominated by melonic graphs in the large $N$ limit. Remarkably, those graphs had been known to dominate the large $N$ limit of random tensor models \cite{gurau-book}, $N$ being here the size of the tensor. This is true for the colored tensor model \cite{continuum_TM}, the multiorientable model \cite{DRT}, \cite{io-mo}, the $O(N)^3$-invariant model \cite{ct} and so on. Even for tensor models whose large $N$ limit does no consist of melonic graphs, the universality class of melonic graphs is easily stumbled upon \cite{bonzomr}. 

The large $N$ dominance of melonic graphs in both the SYK model and tensor models triggered interesting developments, starting from the Gurau-Witten model \cite{Witten} and the model $O(N)^3$-invariant model \cite{CTKT}, which are reformulations of the SYK model using fermionic tensor fields without quenched disorder. This has motivated $1/N$ expansions for new tensorial models \cite{IrreducibleTensorsCarrozza}, \cite{SymmetricTraceless}, \cite{TensorialGrossNeveu}, \cite{TwoSymmetricTensors} and the new field of tensor quantum mechanics \cite{SpectraTensorModels}, \cite{Sextic}, \cite{Carrozza:2018psc}, \cite{Benedetti:2019eyl}.

We also consider in this review paper a version of the SYK model containing $q$ flavors (or colors) of complex fermions, each of them appearing once in the interaction. This model is very close in the spirit to the colored tensor model (see the book \cite{gurau-book}) and it is a particular case of a complex version of the Gross-Rosenhaus SYK generalization proposed in \cite{Gross}. This particular version of the SYK model has already been studied in \cite{complete}, \cite{gurau-ultim}, \cite{DartoisCFT} and \cite{Fusy:2018xax}.

In this model and in the Gurau-Witten model, combinatorial methods originating from the study of tensor models can be used, for instance to identify the Feynman graphs which contribute at a given order in the $1/N$ expansion \cite{complete}.
However, combinatorial proofs have been of limited use so far in the original SYK model. One reason is that colors (or flavors) have been crucial to most combinatorial results in tensor models but the Feynman graphs of the SYK model have no colors (only recently new methods have been found to deal with tensor models without colors \cite{IrreducibleTensorsCarrozza}, \cite{SymmetricTraceless}, \cite{TensorialGrossNeveu}, \cite{TwoSymmetricTensors}, but have not been applied so far to the original SYK model). 
%to the best of our knowledge).

In this review, we first follow \cite{Bonzom:2018jfo} and present the proof of the dominance of melonic graphs of the SYK model, see Theorem \ref{thm} below. 
%One might expect it to be difficult without the notion of colors. As it turns out, even for graphs without colors, the set of melonic graphs is simple enough that our direct, combinatorial approach remains quite elementary. It revolves around the fact that ultimately, melonic graphs are characterized by their 2-cuts, so we can study the large $N$ behavior of Feynman graphs under 2-cuts.
We then describe the study done in \cite{Bonzom:2017pqs}, of the real colored SYK model using some elementary version of very recent combinatorial techniques for edge-colored graphs \cite{blr}. Those techniques have been developed in order to go (successfully to some extent \cite{bonzomr}) beyond the melonic phase in ordinary tensor models. We will explain here how to extract the leading order (LO), the next-to-leading order (NLO) of the 2-point and 4-point functions using the most simple version of those techniques. The LO reproduces trivially melons and chains (known as ladders in \cite{maldacena}), and give new graphs at NLO. We also included some details on the next-to-next-to-leading order (NNLO) of the 2-point function to show that this method is fairly straightforward to apply. 

We then  consider a complex colored SYK model with non-Gaussian disorder. Following the approach proposed in \cite{Krajewski} for tensor models and group field theory (see also \cite{Krajewski2}, \cite{Krajewski3} and \cite{Krajewski4}),  
we first use a Polchinski-like flow equation to obtain Gaussian universality. This Gaussian universality result for the colored tensor model was initially proved in \cite{universality}. Let us also mention here that this universality result for colored tensor models was also exploited in \cite{Bonzom}, in a condensed matter physics setting.
%, to identify an infinite universality class of infinite-range $p-$spin glasses with non-Gaussian correlated quenched distributions. 
We further obtain the effective action of the model and show that the effect of the non-Gaussian disorder is a modification of the variance of the Gaussian distribution of couplings at leading order in $N$.

This review paper is organized as follows. In the following section, we introduce the SYK model, its Feynman graphs and we give the proof of the melonic dominance of the SYK model. In subsection \ref{sec:defcolored} we first recall the Gross-Rosenhaus generalization of the SYK model and specify the versions we will study. 
In the following subsection, we exhibit what the LO and NLO vacuum, two- and four-point diagrams are, using a simple method alternative to \cite{gurau-ultim}. In subsection \ref{sec:nongaus} we express the non-Gaussian potential as a sum over particular graphs and show the Gaussian universality using a Polchinski-like equation. We then study the effective action of the model. Finally, in the last section, we present some concluding remarks and perspectives.

Throughout the text, Feynman graphs are represented with $q=4$, $q$ being here the number of fermions interacting at each vertex. 

%We only discuss the diagrammatics of the $1/N$ expansions and leave out completely the Feynman amplitudes.

\section{The Sachdev-Ye-Kitaev model}
\label{sec:SYK}

\subsection{Definition of the model and its Feynman graphs}
\label{sec:Def}

%The SYK model contains $N$ real fermions $\psi_i$ ($i=1,\ldots, N$), with $q-$fold random coupling, $q$ being here an even integer.
%The action writes:
%\beqa
%\label{act:syk}
%S_{\mathrm{SYK}}=\int d\tau \left( 
%\frac 12 \sum_{i=1}^N \psi_i \frac{d} {dt} \psi_i
%- \frac{i^{q/2}}{q!}
%\sum_{i_1,\ldots, i_q=1}^N
%j_{i_1\ldots i_q}\psi_{i_1}\ldots \psi_{i_q}
%\right)
%\eeqa
%Note that the most widely studied version of the SYK model has real fermions, as above. This was done in \cite{kitaev} and subsequent papers.

The SYK model is a qunatum mechanical model with $N$ Majorana fermions 
$\psi_i$ ($i=1,\ldots, N$)
coupled via a $q$-body random interaction ($q$ being here an even integer)
\beqa
\label{act:syk}
S_{\mathrm{SYK}}=\int d\tau \left( 
\frac 12 \sum_{i=1}^N \psi_i \frac{d} {dt} \psi_i
- \frac{i^{q/2}}{q!}
\sum_{i_1,\ldots, i_q=1}^N
j_{i_1\ldots i_q}\psi_{i_1}\ldots \psi_{i_q}
\right)
\eeqa
%\begin{equation}
%H_{{\mathrm{SYK}}} = i^{q/2} \sum_{i_1, \dotsc, i_q} J_{i_1\dotsb i_q} \psi_{i_1}(t) \dotsm \psi_{i_q}(t),
%\label{H_SYK}
%\end{equation}
where $J_{i_1\dotsb i_q}$ is the coupling constant. 
Furthermore, the model is quenched, which by definition means that 
the coupling $J$ 
%Quenching means that it 
is a random tensor with a Gaussian distribution such that 
\begin{equation} \label{Covariance}
\langle J_{i_1\dotsb i_q}\rangle=0 \qquad \text{and} \qquad \langle J_{i_1\dotsb i_q} J_{j_1\dotsb j_q}\rangle = (q-1)! J^2 N^{-(q-1)} \prod_{m=1}^{q} \delta_{i_m,j_m}.
\end{equation}
The fields $\psi_i(t)$ satisfy fermionic anticommutation relations $\{\psi_i(t),\psi_j(t)\} = \delta_{i,j}$. This anticommutation property excludes graphs with tadpoles (also known as loops, in a graph theoretical language); the model being $(0+1)-$dimensional, the Feynman amplitude of such a graph is zero.

\medskip

%%%%%%%%%%%%
%\subsection{Stranded structure of the Feynman graphs}
%\label{sec:Stranded}

In a Feynman graph of the SYK model, the interaction term  is represented by a vertex with $q$ incident fermionic lines. Each fermionic edge $m=1,\dotsc, q$ carries an index $i_m=1, \dotsc, N$ which is contracted at the vertex with a coupling constant $J_{i_1 \dotsb i_q}$. 
The free energy expands onto those connected, $q$-regular and tadpoleless (or loopless, in a graph theoretical language) graphs.

The so-called average over the disorder is done using standard QFT Wick contractions between pairs of $J$s, with covariance \eqref{Covariance}. An additional edge is thus 
adjacent to each vertex. We represent this additional edge as a dashed edge and we call it a {\it disorder edge}. An example of such a Feynman graph of the SYK model is given  in Fig.~\ref{SYK_graph}.

\begin{figure}[ht]
\centering
\includegraphics[scale=0.65]{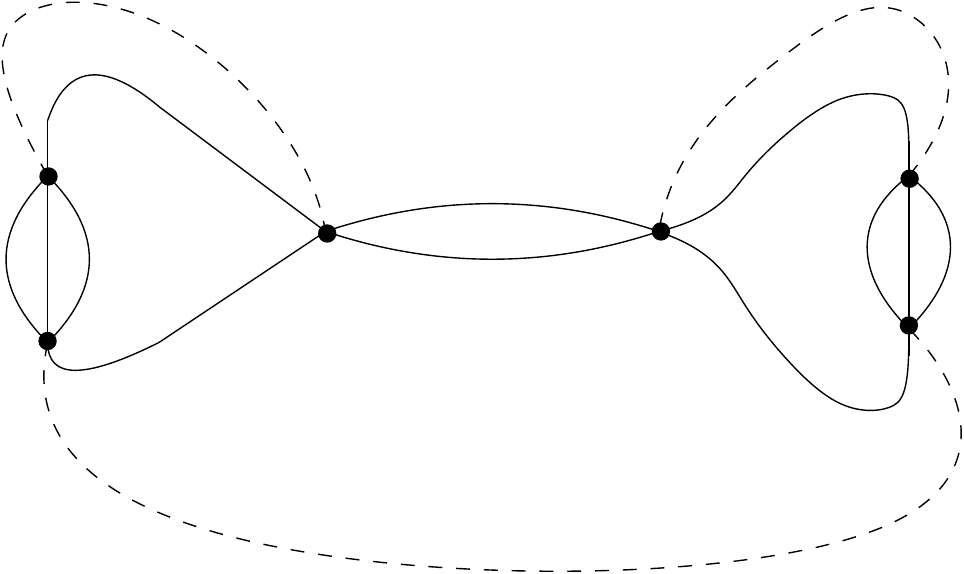}
\caption{An example of a Feynman graph of the $q=4$ SYK model.}
\label{SYK_graph}
\end{figure}

The above description of the Feynman graphs 
%is however not enough to describe the $1/N$ expansion as it 
ignores the indices of the random couplings. Indeed, a disorder edge propagates  $q$ field indices, where the field index of fermionic line incident on a vertex is identified with the index of a fermionic line at another vertex. We thus %have to 
represent a disorder edge as an edge made of $q$ strands, where each strand connects fermionic edges as follows:
\begin{equation}
\langle J_{i_1\dotsb i_q} J_{j_1\dotsb j_q}\rangle = \begin{array}{c} \includegraphics[scale=.4]{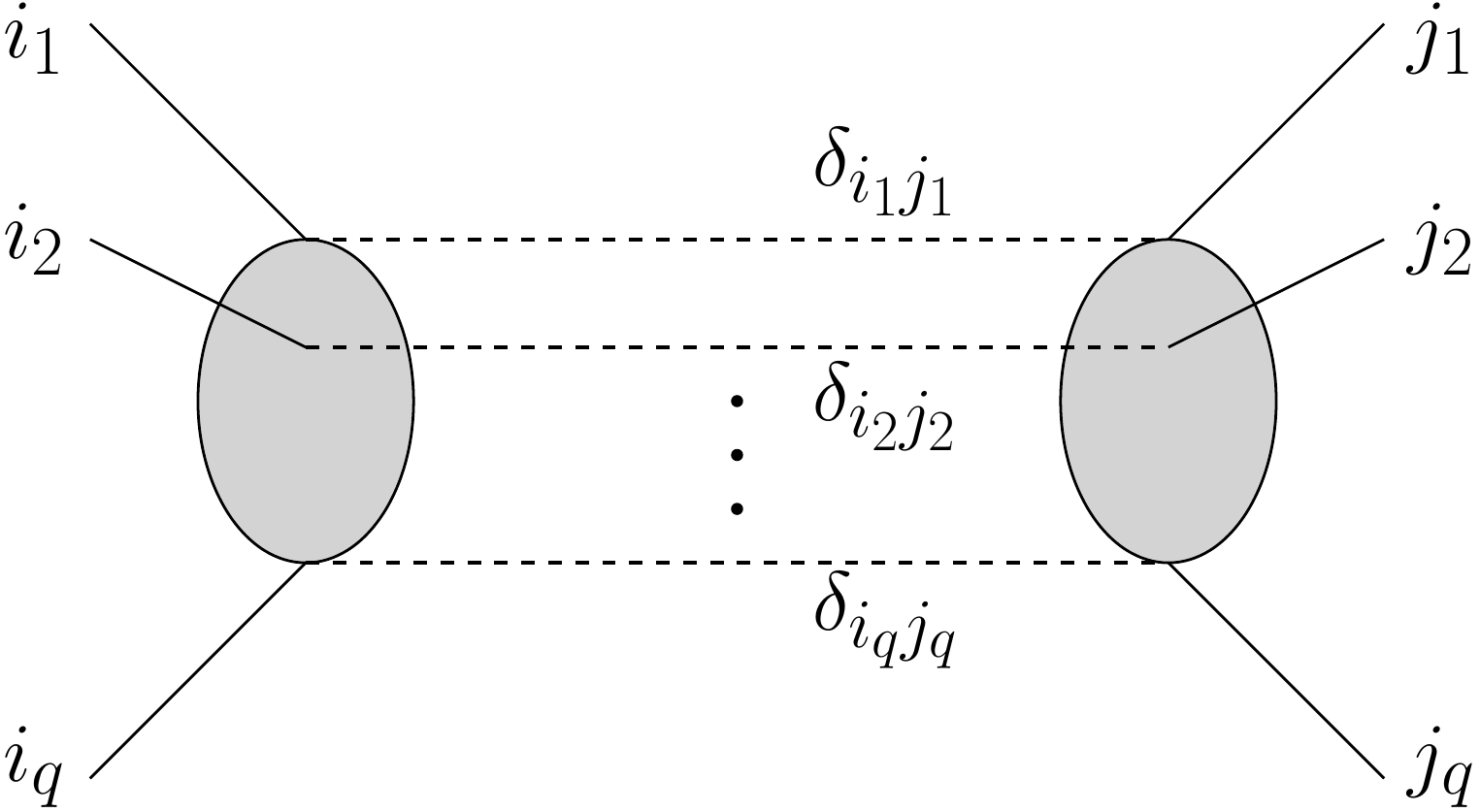} \end{array}
\end{equation}
Here the grey discs represent the Feynman vertices.

We denote by $\mathbbm{G}$ the set of Feynman graphs of the SYK model. 
%with stranded disorder lines. 
For $G\in \mathbbm{G}$, we further denote $G_0\subset G$ the $q$-regular graph obtained by removing the strands of the disorder lines, see Fig.~\ref{SYK_graph0}. 
%D
%ue to the quenching averaging the fermionic free energy over the disorder, 
Note that the graph $G_0$ has to be connected. Moreover, 
%due to the Wick contractions, 
each vertex of the graph $G$ has exactly one adjacent disorder edge. This implies that the graphs $G$ and $G_0$ have an even number of vertices.
 
\begin{figure}[ht]
\centering
\includegraphics[scale=0.75]{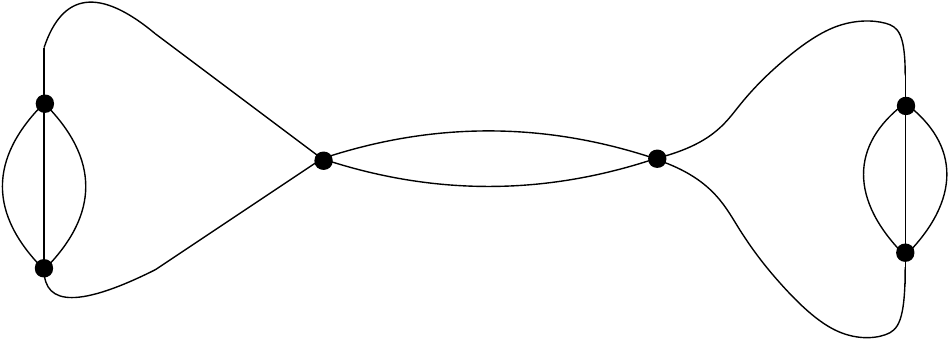}
\caption{The graph obtained after deleting of the disorder lines of the graph of Fig.~\ref{SYK_graph}.}
\label{SYK_graph0}
\end{figure}

Let us now give the following definition:

\begin{definition}
%[Faces]
A cycle made of alternating fermionic lines and strands of disorder lines is called a face. We denote $F(G)$ the number of faces of $G\in\mathbbm{G}$.
\end{definition}

Let us consider graphs with two vertices and $G_{0,\min} = \begin{array}{c} \includegraphics[scale=.25]{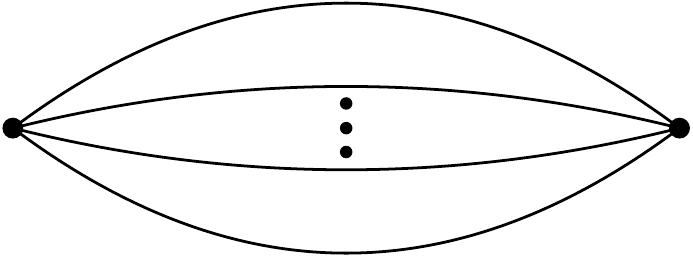} \end{array}$. There are $q!$ such graphs corresponding to permutations of the strands of the disorder line connecting these two vertices. However, among these $q!$ graphs there is only one which maximizes the number of vertices.
%There is a single graph with two vertices
This graph is:
\begin{equation}
G_{\min} = \begin{array}{c} \includegraphics[scale=.45]{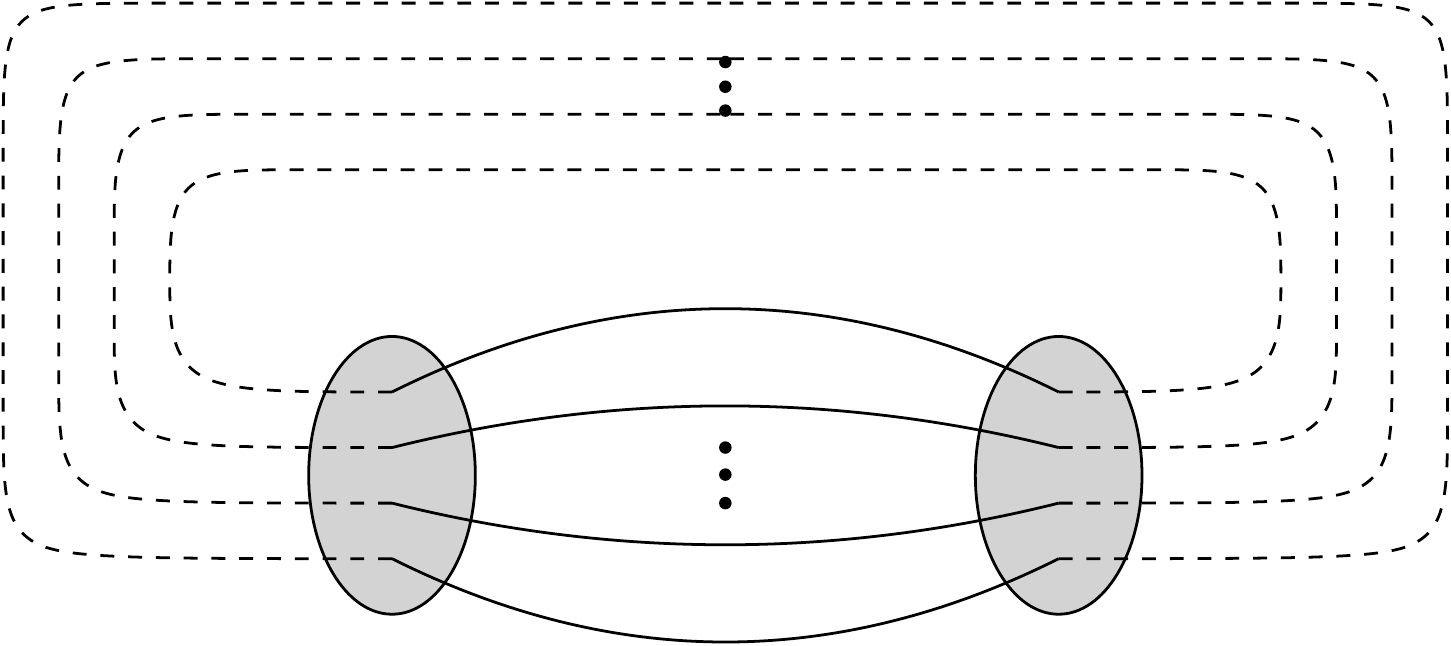} \end{array}
\end{equation}
%and $G_{0,\min} = \begin{array}{c} \includegraphics[scale=.25]{2VertexGraph0SYK.pdf} \end{array}$.

%The field index is preserved along each fermionic edge and along each strand of the disorder edges. This means that there is a free sum for each cycle made of those edges, thereby contributing to a factor $N$.

When computing the Feynman amplitude of an SYK graph in the large $N$ limit, there is a contribution of 
%$G\in\mathbbm{G}$ thus receives 
a factor $N$ per face. The Feynman amplitude also receives a factor $N^{-(q-1)}$ for each disorder edge. The
large $N$ limit Feynman amplitude of an SYK graph is thus
$$N^{\delta(G)} $$
where
\begin{equation} \label{scaling}
%W(G) = 
\delta(G) = F(G) - (q-1) V(G)/2
\end{equation}
where $V(G)$ is the number of vertices. 
We call the parameter $\delta (G)$ the {\bf SYK degree} of the Feynman graph $G$.
To find the 
dominant graphs in this
large $N$ limit, we thus need to find the graphs which maximize the number of faces at fixed number of vertices.

\subsection{Diagrammatic proof of the large $N$ melonic dominance}
\label{sec:proofmelon}

This subsection follows the original article \cite{Nador}. Let us first give the following definitions:

\begin{definition}
We call {\bf dipole} the following 2-point graph:
\begin{equation}
D = \begin{array}{c} \includegraphics[scale=0.4]{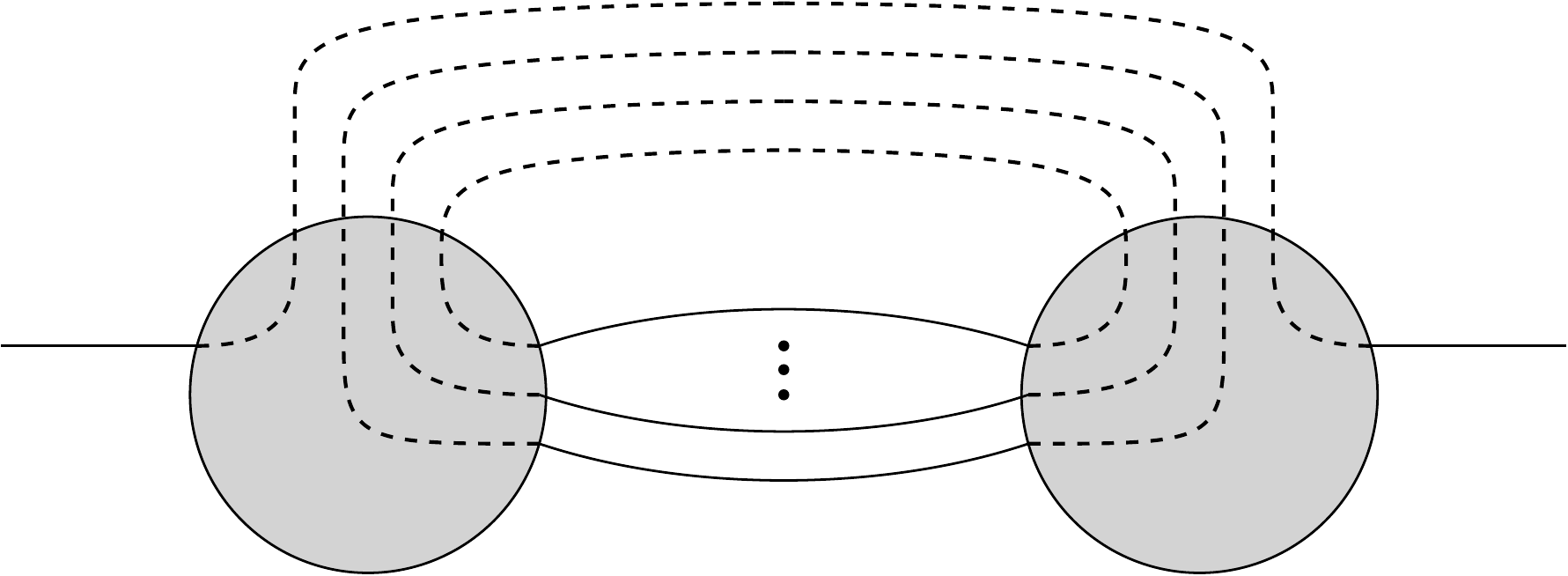} \end{array}
\end{equation}
\end{definition}

Let us note that a dipole is made of two vertices connected by $(q-1)$ fermionic lines and a disorder line. {\it A priori}, there are $q!$ ways of connecting the strands of the disorder line. Among these $q!$ possibilities, we chose for $D$ the one  which creates the maximal number of faces.

\begin{definition}
\label{def:melonicmove}
A {\bf melonic move} is the insertion of a dipole on a fermionic line:
\begin{equation}
\begin{array}{c} \includegraphics[scale=.4]{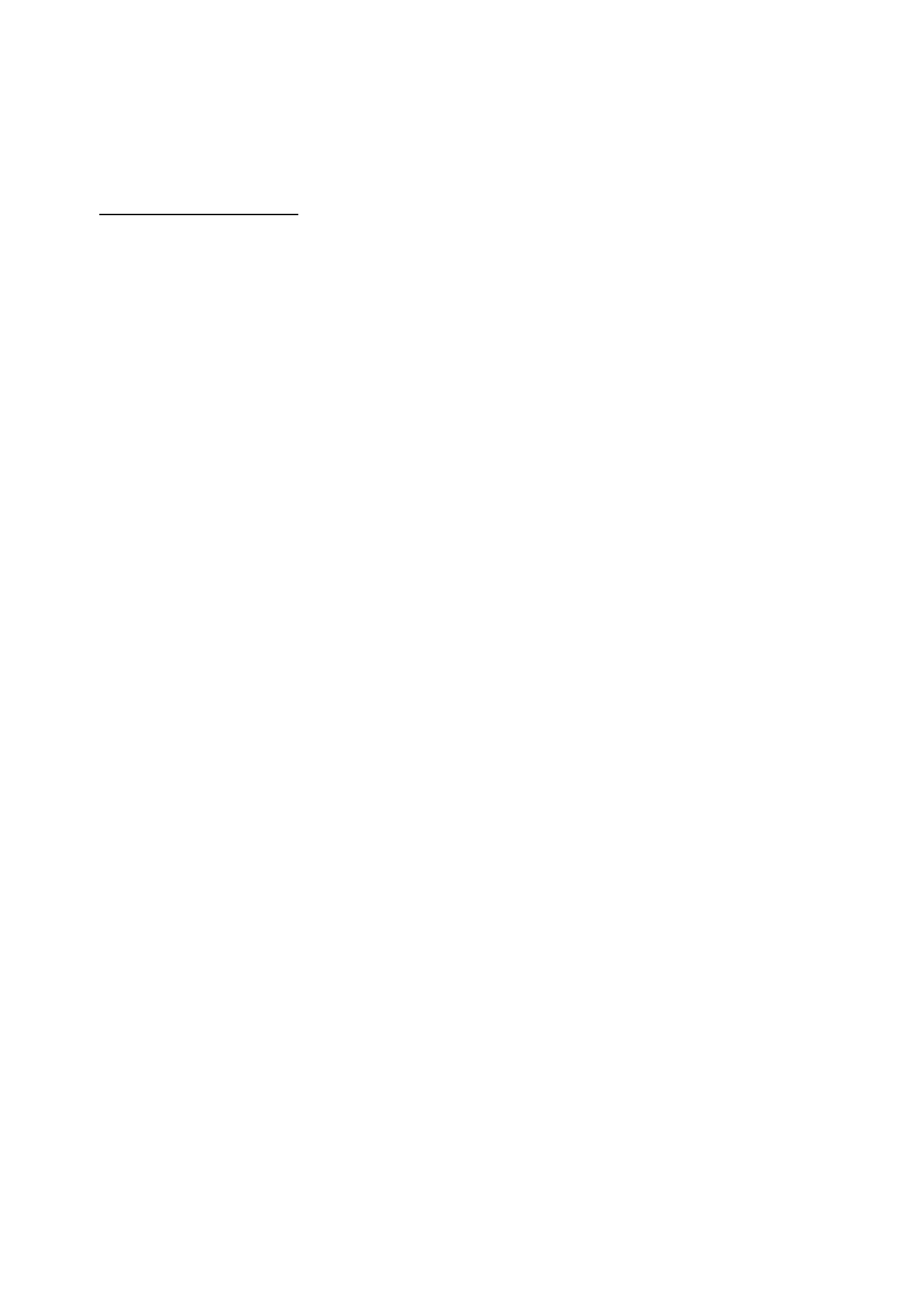} \end{array} \qquad \to \qquad \begin{array}{c} \includegraphics[scale=0.4]{MelonInsertion.pdf} \end{array}
\end{equation}
\end{definition}

\begin{definition}
A {\bf melonic graph} is a graph obtained from the graph $G_{\min}$ by iterated melonic moves, in any order.
\end{definition}

An example of such a melonic SYK graph is given in Fig.~\ref{SYK_melon_ex}.
\begin{figure}[ht]
\centering
\includegraphics[scale=0.7]{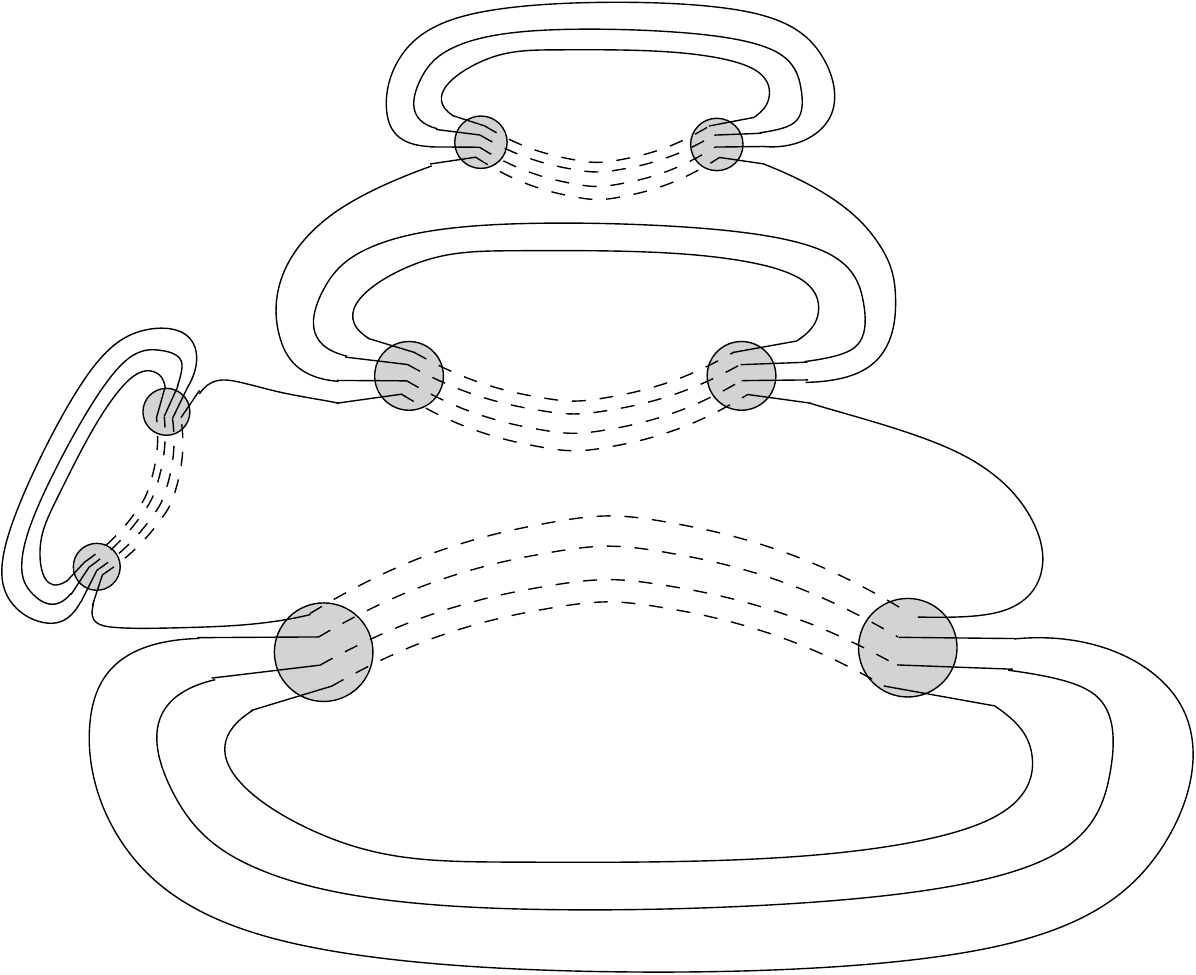}
\caption{An example of melonic graph}
\label{SYK_melon_ex}
\end{figure}

%%%%%%%%%%%%%%
%\section{Proof of the melonic dominance in the large $N$ limit} \label{sec:Proof}
%%%%%%%%%%%%%%

%%%%%%%%%%
\subsubsection{Some properties of melonic graphs} \label{sec:Melons}
%%%%%%%%%%

%The number of faces of melonic graphs is easily found.

\begin{proposition} \label{prop:Melons}
A melonic move adds two vertices and $q-1$ faces to a graph. The number of faces of melonic graphs is
\begin{equation}
F(G) = q + (q-1)\frac{V(G)-2}{2}.
\end{equation}
Thus one has $\delta(G)=1$ for melonic graphs.
\end{proposition}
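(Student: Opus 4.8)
The plan is to prove the three assertions in sequence, treating the first---that a single melonic move adds exactly two vertices and $q-1$ faces---as the combinatorial core, and then deducing the closed formula for $F(G)$ and the value $\delta(G)=1$ by a short induction on the number of melonic moves used to build $G$ from $G_{\min}$.

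First I would dispose of the vertex count, which is immediate from Definition \ref{def:melonicmove}: a melonic move replaces one fermionic line by a copy of the dipole $D$, which contributes its two vertices, so $V$ increases by $2$. The face count is the substantive point. I would fix coordinates on $D$, labelling the $q$ fermionic half-edges at each of the two new vertices $v_1,v_2$ by positions $1,\dots,q$; position $1$ carries the two external legs (the cut ends of the fermionic line on which the move acts) and positions $2,\dots,q$ carry the $q-1$ internal fermionic lines joining $v_1$ to $v_2$. Since the dipole is by definition the strand configuration maximising faces, the disorder strands join position $m$ of $v_1$ to position $m$ of $v_2$. I would then trace faces using the elementary fact that, a face being a cycle alternating fermionic lines and disorder strands, each fermionic edge and each disorder strand lies in exactly one face. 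For each $m=2,\dots,q$ the internal fermionic line and the matching disorder strand close into a two-cycle, producing $q-1$ new faces; at position $1$ the strand merely reroutes the unique face that formerly passed along the original fermionic line, so nothing is created or destroyed there. This gives the net gain of $q-1$ faces.

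With the per-move increments $(V,F)\mapsto(V+2,\,F+(q-1))$ established, I would check the base case directly. The graph $G_{\min}$ is the two-vertex vacuum graph in which all $q$ fermionic lines join $v_1$ to $v_2$ with the face-maximising strand choice, so each position contributes one two-cycle and $F(G_{\min})=q$ while $V(G_{\min})=2$; this agrees with $q+(q-1)\frac{V-2}{2}$ at $V=2$. Every melonic graph arises from $G_{\min}$ by iterated moves, so an induction---each move preserving the formula since $q+(q-1)\frac{V-2}{2}+(q-1)=q+(q-1)\frac{(V+2)-2}{2}$---yields the second assertion. The third is then a one-line substitution into \eqref{scaling}: $\delta(G)=\bigl(q+(q-1)\frac{V-2}{2}\bigr)-(q-1)\frac{V}{2}=q-(q-1)=1$.

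The step I expect to be the main obstacle is the face-tracing for a single move, and in particular the claim that the position-$1$ strand preserves the ambient face rather than splitting or merging it. The care needed is to argue locally: everything outside the inserted dipole is untouched, so only the single face running through the chosen fermionic line can be affected, and following its strand shows it survives intact, merely lengthened by a detour through $v_1$ and $v_2$. I would phrase this so as not to invoke any global feature of the host graph, which is precisely what makes the increment independent of where and in what order the moves are performed.
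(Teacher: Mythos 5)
Your proposal is correct and follows essentially the same route as the paper's proof: establish that each melonic move adds two vertices and $q-1$ faces, verify the base case $F(G_{\min})=q$ at $V=2$, induct over the moves, and substitute into \eqref{scaling} to get $\delta(G)=1$. The only difference is that you spell out the local face-tracing argument (the $q-1$ internal line/strand two-cycles plus the rerouted ambient face) for the increment, which the paper simply asserts follows directly from the definition of the melonic move; this is a welcome filling-in of detail rather than a different approach.
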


\begin{proof}
The first statement follows directly from the definition of the melonic move (see Definition \ref{def:melonicmove} above). The number of faces is then obtained by induction. Indeed, $F(G_{\min}) = q$ at $V(G)=2$ for $G_{\min}$, the only melonic graph with two vertices. The induction is completed by using the first statement. The identity $\delta(G)=1$ 
for melonic graphs
follows from the expression of $\delta(G)$ in the definition \eqref{scaling}. %\qed
\end{proof}

Note that, by definition, one can always find a dipole in a melonic graph. 
%It will be convenient to 
Let us also
notice that there is always more than one such dipole.

\begin{proposition} \label{prop:Dipoles}
A melonic graph with at least four vertices has at least two dipoles.
\end{proposition}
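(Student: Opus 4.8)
\section*{Proof proposal}

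The plan is to argue by induction on the number of vertices $V(G)$, tracking how the set of dipoles of a melonic graph evolves under a single melonic move. Since, by definition, a melonic graph with $V(G)=2k$ is obtained from $G_{\min}$ by $k-1$ melonic moves, it suffices to understand how one move changes the number of dipoles and then to check the smallest case by hand.

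First I would settle the \emph{base case} $V(G)=4$. Such a graph arises from $G_{\min}$ (two vertices joined by $q$ fermionic lines and a disorder line) by inserting a dipole on one of its $q$ fermionic lines. All $q$ choices give isomorphic graphs, and in the result the two original vertices remain joined by $q-1$ fermionic lines together with their disorder line, hence still form a dipole, while the two inserted vertices form the freshly created dipole. This exhibits two distinct dipoles, so the claim holds at $V(G)=4$.

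The heart of the argument is a \emph{local lemma} that I would prove next: a single melonic move creates (at least) the one new dipole it inserts, and destroys \emph{at most one} pre-existing dipole. The creation part is immediate from Definition \ref{def:melonicmove}. For the destruction part, the key observation is that a melonic move subdivides exactly one fermionic line $e$, so the only old dipoles that can cease to be dipoles are those whose $q-1$ fermionic lines include $e$. Because a dipole is determined by the unordered pair of vertices it joins, and a fermionic line has a single pair of endpoints, the line $e$ lies in at most one dipole; every other old dipole keeps its $q-1$ fermionic lines and its disorder line untouched (recall each vertex of a dipole carries exactly one further fermionic half-edge, so subdividing an edge outside the dipole leaves its internal structure, and hence its maximality of faces, intact).

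Granting the lemma, the \emph{inductive step} is short: a melonic graph $G$ with $V(G)\ge 6$ is obtained by a melonic move from a melonic graph $G'$ with $V(G')=V(G)-2\ge 4$, which has at least two dipoles by the induction hypothesis. The move kills at most one of them and adds a new dipole supported on the two new vertices, so $G$ retains at least two dipoles, the surviving old one being distinct from the new one since their vertex sets are disjoint. The step I expect to require the most care is the destruction count in the local lemma: one must verify cleanly that subdividing $e$ cannot break two dipoles at once, which is exactly where the precise combinatorial definition of a dipole (a single disorder line joining a unique incident pair of vertices) does the real work.
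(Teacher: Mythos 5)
Your proposal is correct and follows essentially the same route as the paper: induction on the number of vertices, checking the unique four-vertex melonic graph by hand, and then observing that a melonic move creates one new dipole while destroying at most one old one (your ``local lemma'' is exactly the paper's case split on whether the subdivided edge $e$ joins the two vertices of a dipole or not). The justification that $e$ can lie in at most one dipole, which you flag as the delicate point, is sound and matches the paper's reasoning.
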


\begin{proof}
We proceed by induction on the number of vertices. There is a single melonic graph with four vertices,
\begin{equation}
\begin{array}{c} \includegraphics[scale=.2]{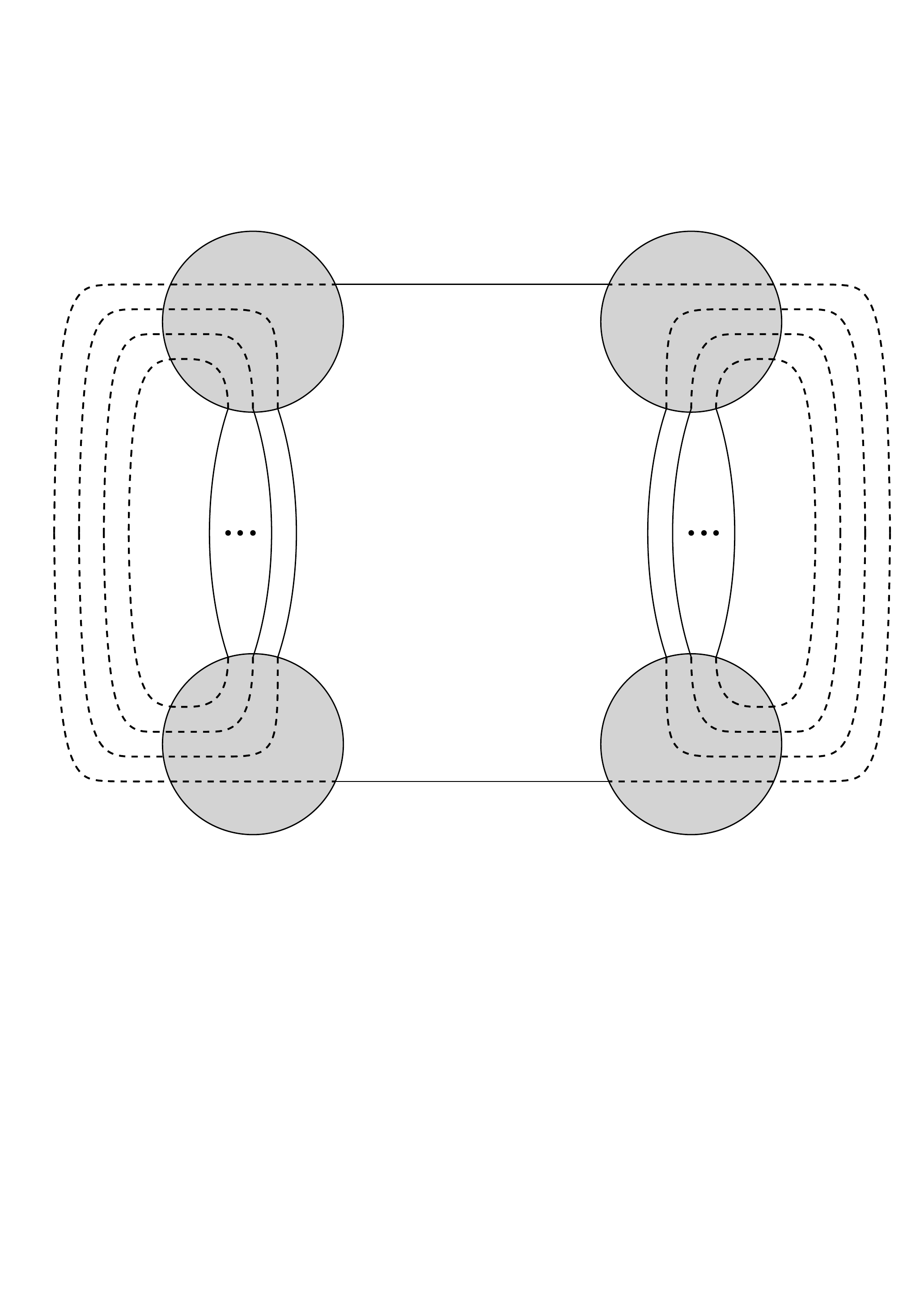} \end{array}
\end{equation}
One can directly check that this graph has indeed has two dipoles.

Assume that the proposition holds for graphs with at most $V-2 \geq 4$ vertices and let $G$ be a melonic graph with $V$ vertices. By construction, 
the graph $G$ can be obtained by a melonic move on the fermionic edge $e$ of the graph $G'$, a melonic graph with $V-2$ vertices. From the induction hypothesis, the graph $G'$ has at least two dipoles. If $e$ is not an edge connecting the two vertices of a dipole, then the melonic move $G'\to G$ increases the number of dipoles. If $e$ connects two vertices of a dipole in the graph $G'$, then the total number of dipoles 
is unchanged between $G'$ and $G$. This comes from the fact that  cutting the edge $e$ destroys one dipole, but the melonic move itself adds one. This concludes the proof.
%\qed
\end{proof}

Melonic graphs satisfy a gluing rule which generalizes the melonic move. Let $G_1, G_2\in\mathbbm{G}$ be two melonic graphs and $e_1$ in $G_1$, $e_2$ in $G_2$ two fermionic lines. If one cuts open $e_1$ in $G_1$ and $e_2$ in $G_2$, then there are two ways to glue the half-edges of $e_1$ with those of $e_2$. To avoid this ambiguity, we use orientations.

\begin{definition}
If $(G, e)$ is a graph $G$ with an oriented fermionic line $e$, denote $G^{(e)}$ the 2-point graph obtained by cutting $e$ into two half-edges with their induced orientations. For two such graphs $(G_1, e_1)$ and $(G_2, e_2)$, denote $G_1^{(e_1)} \star G_2^{(e_2)}$ the unique connected graph obtained by gluing $G_1^{(e_1)}$ with $G_2^{(e_2)}$ in the only way which respects the orientations of the half-edges,
\begin{eqnarray}
G_1^{(e_1)} = \begin{array}{c} \includegraphics[scale=.5]{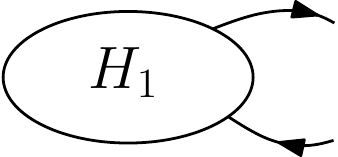} \end{array} \qquad G_2^{(e_2)} = \begin{array}{c} \includegraphics[scale=.5]{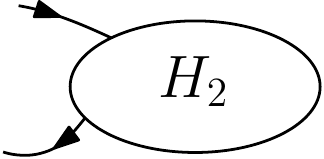} \end{array} 
\nonumber\\ 
\Rightarrow \quad 
G_1^{(e_1)} \star G_2^{(e_2)} = \begin{array}{c} \includegraphics[scale=.5]{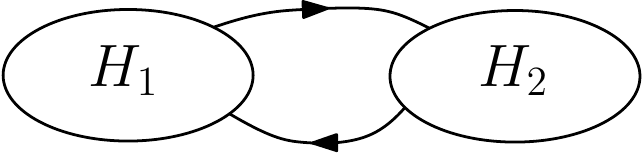} \end{array}
\end{eqnarray}
\end{definition}

\begin{proposition} \label{prop:MelonicGluing}
Let $G_1, G_2\in\mathbbm{G}$ be two melonic graphs and $e_1$ in $G_1$, $e_2$ in $G_2$ two oriented fermionic lines. Then $G_1^{(e_1)} \star G_2^{(e_2)}$ is a melonic graph.
\end{proposition}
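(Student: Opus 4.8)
The plan is to argue by induction on the number of vertices $V(G_2)$, exploiting the symmetry of the construction: the graph $G_1^{(e_1)}\star G_2^{(e_2)}$ is nothing but the $2$-point graph $G_2^{(e_2)}$ spliced into the fermionic edge $e_1$ of $G_1$ (equivalently, $G_1^{(e_1)}$ spliced into $e_2$ of $G_2$), the orientations fixing the unique gluing. The whole strategy is to display $G_1^{(e_1)}\star G_2^{(e_2)}$ as the outcome of a single melonic move applied to some melonic graph; since a melonic move preserves melonicity by definition, this closes the argument at each step.

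For the base case $V(G_2)=2$ one has $G_2=G_{\min}$, and cutting any fermionic line of $G_{\min}$ produces exactly the dipole, i.e. $G_{\min}^{(e_2)}=D$: indeed $G_{\min}$ is two vertices joined by $q$ fermionic lines and one disorder line, so removing one fermionic line leaves $q-1$ fermionic lines, the disorder line and two free half-edges, which is precisely $D$. Hence $G_1^{(e_1)}\star G_{\min}^{(e_2)}$ is the insertion of a dipole on $e_1$, that is a melonic move (Definition \ref{def:melonicmove}) applied to the melonic graph $G_1$, and is therefore melonic.

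For the inductive step, suppose $V(G_2)\ge 4$ and that the statement holds for all melonic graphs with fewer vertices. By Proposition \ref{prop:Dipoles}, $G_2$ contains at least two dipoles. The internal fermionic lines of distinct dipoles are disjoint, so the single edge $e_2$ can be an internal line of at most one of them; I may therefore select a dipole $\Delta$ none of whose $q-1$ internal fermionic lines is $e_2$. Contracting $\Delta$ — the reverse of a melonic move, which again returns a melonic graph — yields a melonic graph $G_2'$ with $V(G_2)-2$ vertices. Here $e_2$ behaves in one of two ways: either it is disjoint from $\Delta$, in which case it survives as an edge of $G_2'$; or it is one of the two external legs of $\Delta$, in which case the contraction merges it with the other leg into a single edge $\bar e$ of $G_2'$. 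In either case the induction hypothesis applies to $G_1^{(e_1)}\star (G_2')^{(e_2)}$, respectively $G_1^{(e_1)}\star (G_2')^{(\bar e)}$, which is thus melonic.

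It then remains to recover $G_1^{(e_1)}\star G_2^{(e_2)}$ by re-inserting $\Delta$ as a single melonic move, the localized nature of a dipole insertion ensuring that it commutes with the cut at $e_2$ and the gluing to $G_1$. When $e_2$ is disjoint from $\Delta$ this is immediate, $\Delta$ sitting away from the gluing region. When $e_2$ is an external leg of $\Delta$, a short bookkeeping of the four half-edges shows that splicing $G_1^{(e_1)}$ onto $e_2$ while retaining $\Delta$ gives the same graph as first forming $G_1^{(e_1)}\star (G_2')^{(\bar e)}$ and then inserting the dipole $\Delta$ on the fermionic line adjacent to the inserted copy of $G_1$. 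Either way $G_1^{(e_1)}\star G_2^{(e_2)}$ arises from a melonic graph by one melonic move, hence is melonic, completing the induction. I expect the main obstacle to be exactly this last step, and in particular the external-leg case: it is the availability of \emph{two} dipoles from Proposition \ref{prop:Dipoles} that lets me keep $e_2$ off the interior of the contracted dipole, so that the reinsertion is always a legitimate single melonic move.
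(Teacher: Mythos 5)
Your proof is correct and is essentially the paper's own argument with the roles of $G_1$ and $G_2$ exchanged: both proceed by induction on the vertex count of one factor, peel off one dipole, invoke Proposition \ref{prop:Dipoles} to guarantee a second dipole when the cut edge interferes with the chosen one, and exhibit $G_1^{(e_1)}\star G_2^{(e_2)}$ as a single melonic move performed on a smaller spliced graph, with the same two residual subcases (cut edge disjoint from, or an external leg of, the contracted dipole). The one step you assert without proof --- that contracting an \emph{arbitrary} dipole of a melonic graph again yields a melonic graph --- is precisely the implicit assumption the paper itself makes when it ``redefines $G_1'$'' in the first bullet of its proof, so your argument matches the paper's level of rigor.
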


\begin{proof}
The result is proved by induction on the number of vertices of the graph $G_1$. If $G_1$ is melonic graph and has two vertices, then $G_1 = G_{\min}$ and the insertion of $G_1^{(e_1)}$ is the melonic move on $e_2$ (for any orientations of $e_1$ and $e_2$).

Assume the proposition holds for graphs $G'_1$ with $V-2$ vertices and consider a new melonic graph $G_1$ with $V$ vertices. It is obtained from a melonic move performed on a fermionic edge $e_1'$ of the melonic graph $G_1'$. 
One then needs to find the edge $e_1$ in $G_1'$, form $G_1^{'(e_1)}\star G_2^{(e_2)}$, which is melonic from the induction hypothesis, and then perform the melonic move on $e_1'$ to get $G_1^{(e_1)}\star G_2^{(e_2)}$, which will thus be a melonic graph also. This is summarized in the following commutative diagram:
\begin{equation}
\begin{array}{c} \includegraphics[scale=.4]{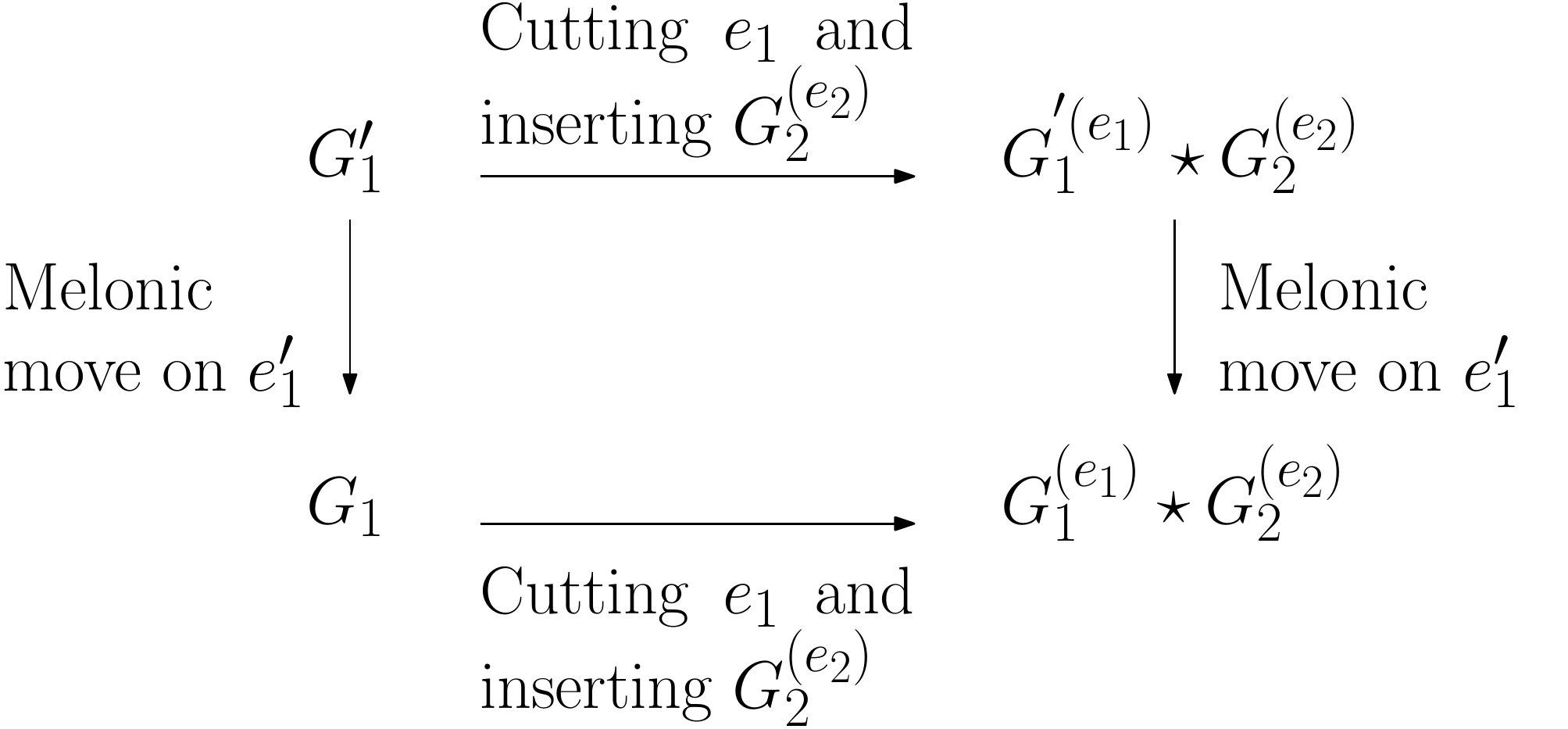} \end{array}
\end{equation}
We thus want to use the path from $G_1'$ to $G_1^{(e_1)}\star G_2^{(e_2)}$ which goes right and then down. When $e_1$ and $e'_1$ are distinct in $G_1'$, this is straightforward:
\begin{equation}
G_1^{'(e_1)} \star G_2^{(e_2)} = \begin{array}{c} \includegraphics[scale=.5]{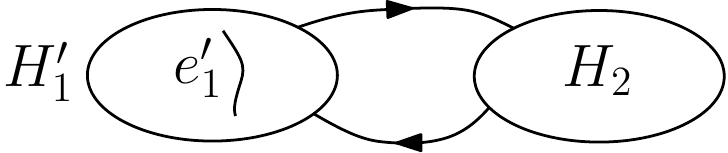} \end{array}
\end{equation}
By the induction hypothesis, this graph is melonic. By definition of the melonic insertion 
%(see again Definition \ref{def:melonicinsertion}, 
the graph remains meloinic 
%Then so it remains 
after the melonic insertion on $e'_1$.
%(by definition).

However, if $e_1$ is incident to or part of the dipole which is inserted from $G_1'$ to $G_1$, it means it does not exist in $G_1'$, as it is created by the melonic move. We distinguish two cases.
\begin{itemize}
\item $e_1$ is a fermionic line connecting the two vertices of the dipole. Then from Proposition \ref{prop:Dipoles} we know that $G_1$ has at least one other dipole. Therefore, one can redefine $G_1'$ has the melonic graph obtained from $G_1$ by removing the latter. Then, the fermionic line $e_1$ can be identified without issues in $G_1'$ and the reasoning above applies.
\item $e_1$ is incident to the dipole, i.e. $(G_1, e_1) = \begin{array}{c} \includegraphics[scale=.4]{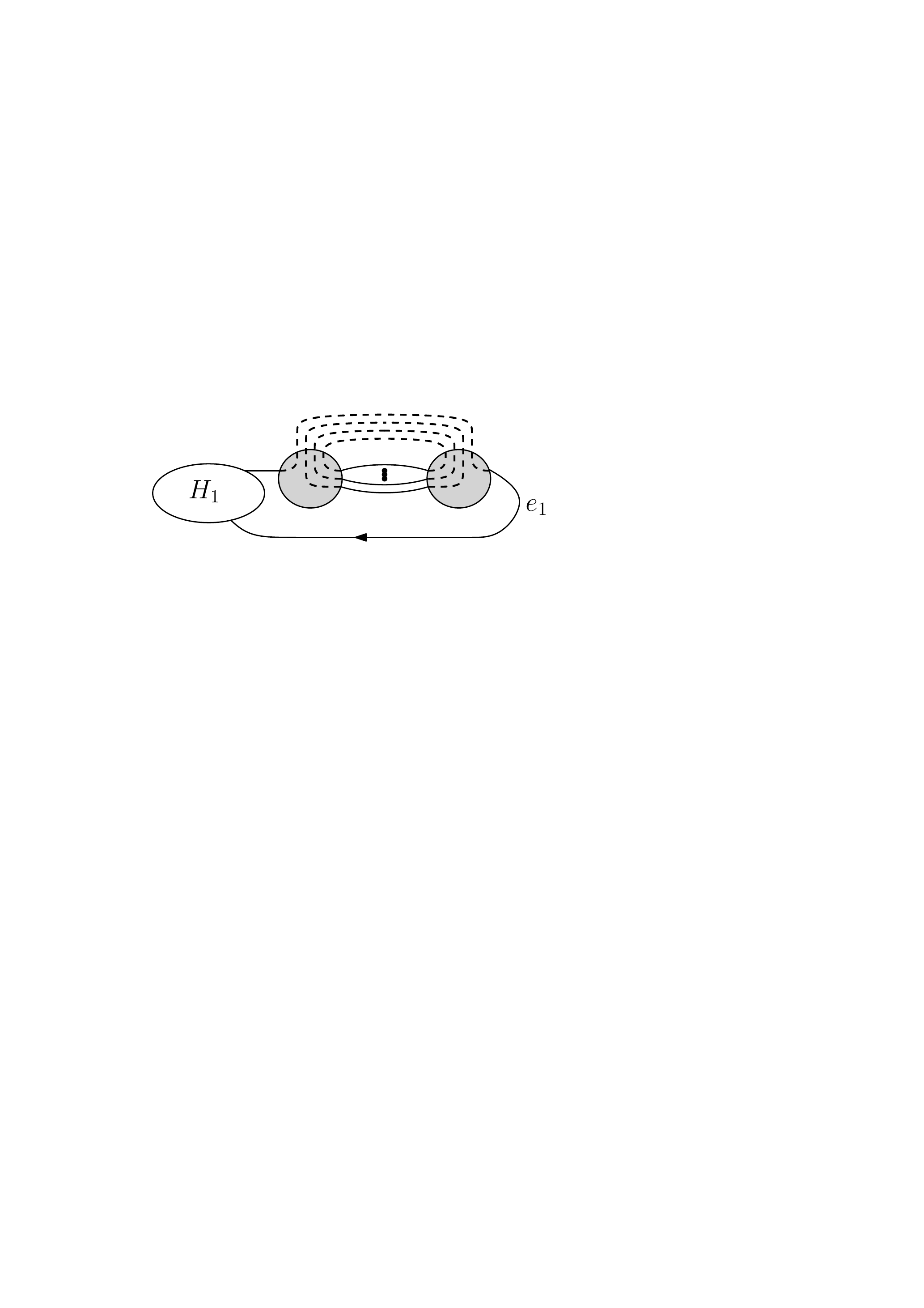} \end{array}$. Then $G^{(e_1)}\star G_2^{(e_2)}$ has the form
\begin{equation}
G^{(e_1)}\star G_2^{(e_2)} = \begin{array}{c} \includegraphics[scale=.4]{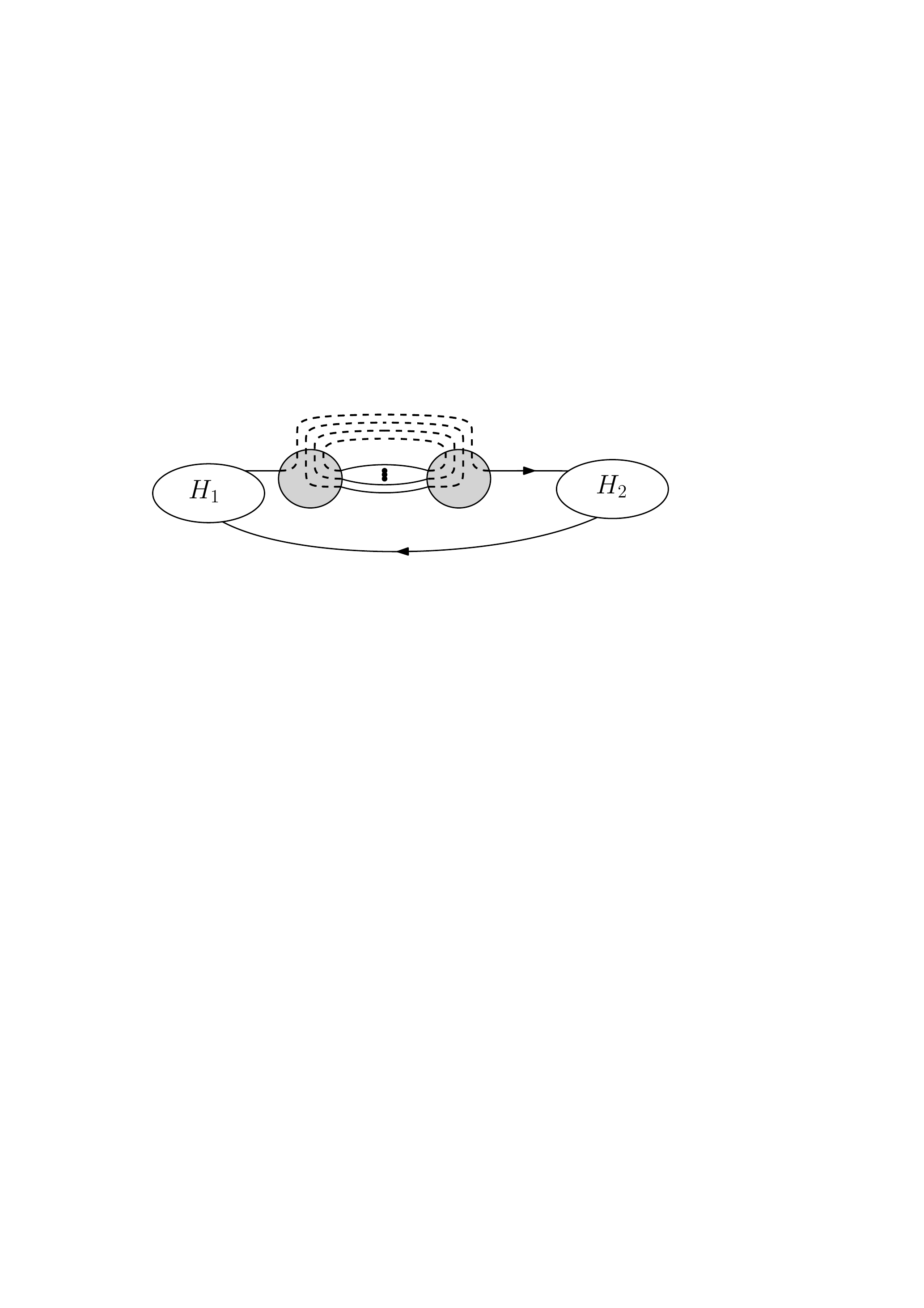} \end{array}
\end{equation}
The graph $G_1'$ is $G'_1 = \begin{array}{c} \includegraphics[scale=.4]{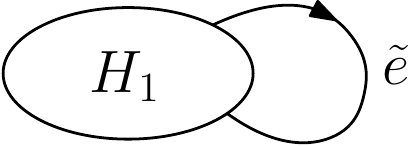} \end{array}$ and is melonic. From the induction hypothesis $G_1^{'(\tilde{e})} \star G_2^{(e_2)} = \begin{array}{c} \includegraphics[scale=.5]{G1StarG2.pdf} \end{array}$ is melonic. Then so is $G^{(e_1)}\star G_2^{(e_2)}$ since it is obtained by a melonic move on $G_1^{'(\tilde{e})} \star G_2^{(e_2)}$.
\end{itemize}
%\qed
\end{proof}

%%%%%%%%%%
\subsubsection{2-cuts} \label{sec:2Cuts}
%%%%%%%%%%

Recall that, following the definition \eqref{scaling} of an SYK degree, we need to identify the graphs which maximize the number of faces at fixed number of vertices. Let us denote the maximal number of faces on $V$ vertices bt 
\begin{equation}
F_{\max}(V) = \max_{\{G\in\mathbbm{G}, V(G) =V\}} F(G)
\end{equation}
and the set of graphs maximizing $F(G)$ at fixed $V$ by
\begin{equation}
\mathbbm{G}_{\max}(V) = \left\{ G\in\mathbbm{G} \text{ s.t. } V(G) = V \quad \text{and} \quad F(G) = F_{\max}(V) \right\}.
\end{equation}

Let us now give the following definition:
\begin{definition} \label{def:2Cut}
A {\bf $2-$cut} is a pair of edges in a graph whose removal (or equivalently cutting) disconnects the graph. 
\end{definition}

Let us now prove that, if there exist two edges in the same face which do not form a 2-cut, the graph is not dominant at large $N$:

\begin{proposition} \label{prop:2Cut}
Let $G\in\mathbbm{G}$ and $e_1, e_2$ two fermionic lines in $G$ which belong to the same face. If $\{e_1, e_2\}$ is not a 2-cut in $G$, then $G\not\in \mathbbm{G}_{\max}(V(G))$.
\end{proposition}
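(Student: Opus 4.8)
The plan is to establish the implication directly and constructively: assuming $e_1,e_2$ lie on a common face and $\{e_1,e_2\}$ is not a 2-cut, I would build from $G$ a new admissible SYK graph $G'$ with the same number of vertices but with exactly one more face. By the definition \eqref{scaling} of the degree this gives $\delta(G')=\delta(G)+1$, hence $F(G)<F_{\max}(V(G))$ and therefore $G\notin\mathbbm{G}_{\max}(V(G))$.

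First I would describe the move. Since each fermionic line carries a single index strand, it is traversed exactly once by the unique face containing it; let $f$ be the common face of $e_1$ and $e_2$. Tracing $f$, cutting the two fermionic lines splits $f$ into two open strand-paths $P_1$ and $P_2$, each terminating on a disorder strand at both of its extremities. The move reglues the four fermionic half-edges in the unique way that closes $P_1$ and $P_2$ into two \emph{separate} alternating cycles, namely by joining the exit half-edge of $e_1$ to the entry half-edge of $e_2$ and the entry half-edge of $e_1$ to the exit half-edge of $e_2$. Of the three pairings of the four half-edges, the original wiring and the remaining third pairing both keep $f$ as a single face; only this one splits it. The move rewires fermionic lines only and leaves every vertex, every disorder line, the count $V(G)$, and the $q$-regularity untouched, so $G'$ has $V(G')=V(G)$.

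Next I would verify the effect on faces. Every face of $G$ distinct from $f$ uses neither $e_1$ nor $e_2$ and is therefore unaffected by the move; the face $f$ is the only one modified, and the chosen regluing replaces it by exactly the two cycles $P_1\cup\{e_1'\}$ and $P_2\cup\{e_2'\}$. Hence $F(G')=F(G)+1$, which is the key gain.

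It then remains to check that $G'$ genuinely lies in $\mathbbm{G}$, i.e. that it is connected and tadpoleless. Connectivity is exactly where the hypothesis is used: because $\{e_1,e_2\}$ is not a 2-cut, the graph $G\setminus\{e_1,e_2\}$ is connected, and $G'$ is obtained from this connected graph by adding back two fermionic lines, so it is connected as well. The step I expect to be the main obstacle is tadpolelessness, since the splitting regluing could a priori join two half-edges incident to the same vertex and create a forbidden self-loop of vanishing amplitude. I would dispose of this by a short case analysis on the relative positions of $e_1$ and $e_2$ along $f$ (in particular the degenerate configurations in which they share a vertex), exploiting the freedom to orient $f$ and, when the given pair forces a self-loop, replacing it by another suitable pair of fermionic lines lying on $f$ so as to always exhibit a face-splitting regluing with no tadpole. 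Granting this, $G'\in\mathbbm{G}$ with $V(G')=V(G)$ and $F(G')>F(G)$, which forces $F(G)<F_{\max}(V(G))$ and hence $G\notin\mathbbm{G}_{\max}(V(G))$, as claimed.
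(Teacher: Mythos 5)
There is a genuine gap: your argument only covers the first of the two cases in the paper's proof, and the hypothesis you invoke is too weak to justify the admissibility of $G'$ in the remaining case. Membership in $\mathbbm{G}$ requires more than connectedness of $G$: the fermionic skeleton $G_0$, obtained by deleting the strands of the disorder lines, must itself be connected (the free energy at fixed couplings expands over connected fermionic graphs, and the disorder lines are added only afterwards). The hypothesis of the proposition is that $\{e_1,e_2\}$ is not a $2$-cut in $G$, which does \emph{not} imply it is not a $2$-cut in $G_0$: disorder lines can bridge the two components of $G_0\setminus\{e_1,e_2\}$, keeping $G\setminus\{e_1,e_2\}$ connected while $G_0\setminus\{e_1,e_2\}$ is disconnected (the paper's example: $G_0$ melonic but the disorder lines attached non-melonically). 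In exactly this situation your face-splitting flip produces a graph $G'$ that is connected but whose skeleton $G_0'$ has two connected components, so $G'\notin\mathbbm{G}$ and the inequality $F(G)<F_{\max}(V(G))$ does not follow. The paper repairs this with an extra idea you are missing: it contracts a disorder line $e_0$ joining the two components of $G_0'$, which removes two vertices and merges the components into an admissible graph $G''\in\mathbbm{G}$ with $F(G)\leq F(G'')$ and $V(G'')=V(G)-2$; it then restores the vertex count by a melonic insertion, yielding $\tilde{G}\in\mathbbm{G}$ with $V(\tilde{G})=V(G)$ and $F(\tilde{G})=F(G'')+q-1>F(G)$ by Proposition \ref{prop:Melons}. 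Note that in this second case the strict gain of faces comes from the melonic insertion, not from the flip itself, which only guarantees $F(G)\leq F(G')$ there (the two newly closed cycles need not both survive as distinct faces once $e_L$ and $e_R$ sit in different components).

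A secondary weak point is your treatment of tadpoles: you propose, when the regluing of $e_1,e_2$ would create a self-loop, to ``replace the pair by another suitable pair of fermionic lines lying on $f$.'' This is circular as stated, since the non-$2$-cut hypothesis is attached to the given pair $\{e_1,e_2\}$ and nothing guarantees that the substitute pair inherits it; you would have to prove separately that a face containing a non-$2$-cut pair always contains one admitting a tadpole-free splitting flip. In the first case (non-$2$-cut in $G_0$), where your move is the same as the paper's, this degenerate configuration should be handled by a direct local analysis of the strand structure rather than by swapping pairs; in the second case the issue is moot, since the paper's contraction-plus-insertion route bypasses the flip's regluing constraints altogether.
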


%In other words, 

\begin{proof}
There are two cases to distinguish: whether $\{e_1, e_2\}$ is a 2-cut or not in $G_0$.
\begin{enumerate}
%[style=wide]
\item $\{e_1, e_2\}$ is not a 2-cut in $G_0$.\\
We draw $G$ as
\begin{equation}
G = \begin{array}{c} \includegraphics[scale=.7]{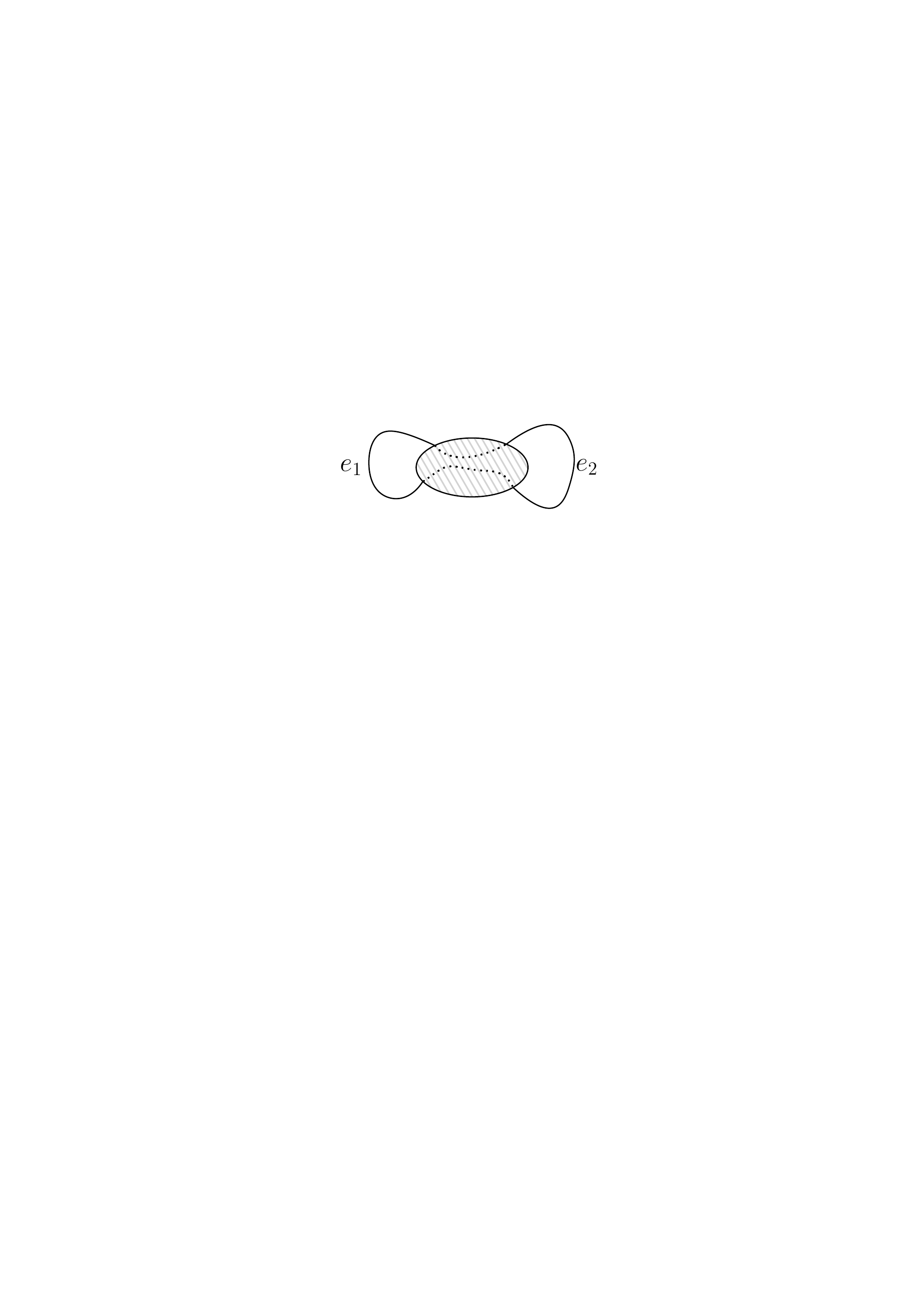} \end{array}
\end{equation}
where the dotted line represents the paths alternating fermionic lines and strands of disorder lines which constitute the face of $e_1$ and $e_2$.

Now consider $G'$ obtained by cutting $e_1$ and $e_2$ and regluing the half-lines in the unique way which creates one additional face,
\begin{equation}
G' = \begin{array}{c} \includegraphics[scale=.7]{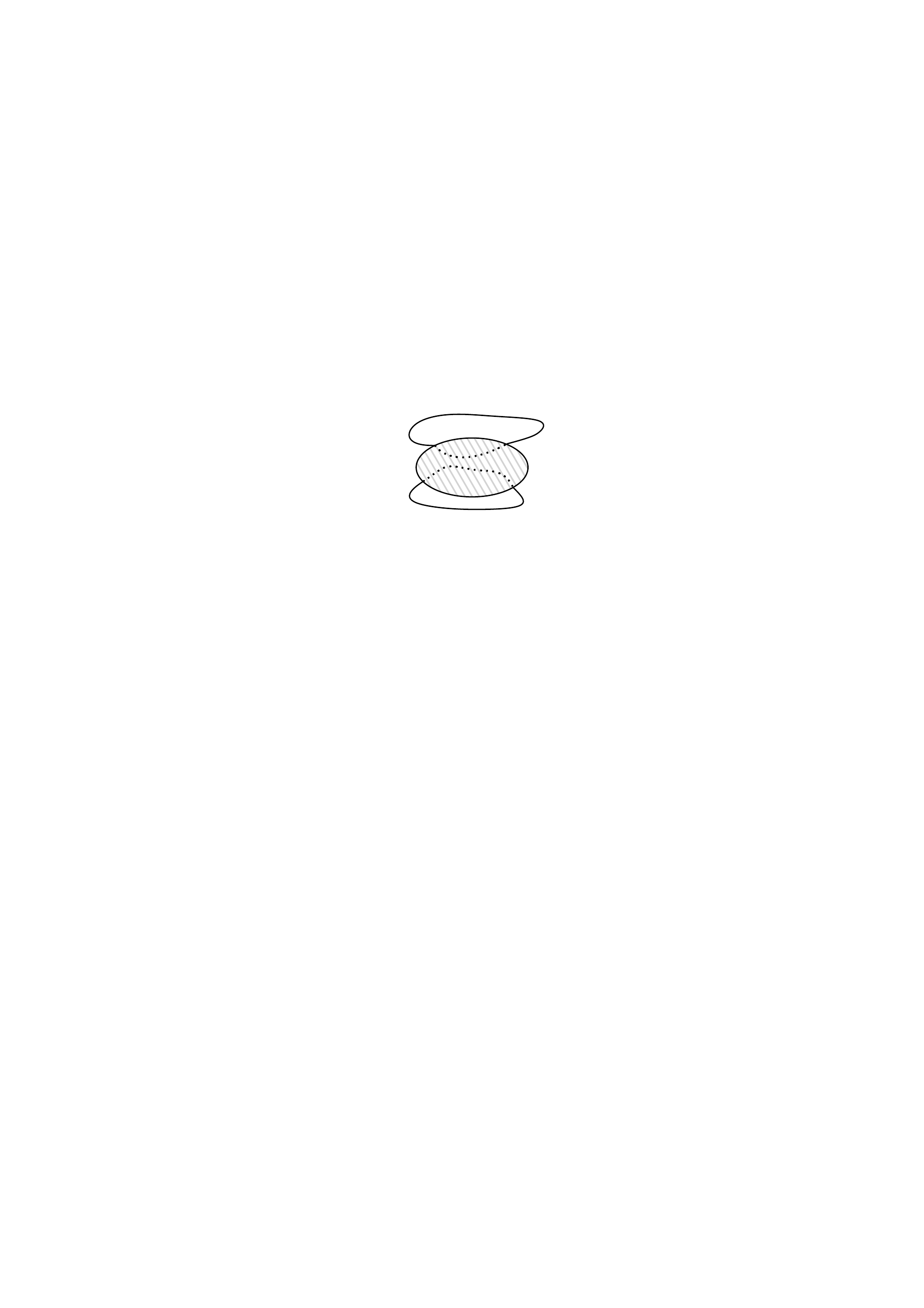} \end{array}
\end{equation}
$G_0'$ is connected since $\{e_1, e_2\}$ is not a 2-cut in $G_0$, and hence $G'\in\mathbbm{G}$. No other faces of $G$ are affected. Therefore $F(G') = F(G)+1$ and thus $G\not\in \mathbbm{G}_{\max}(V(G))$.

\item $\{e_1, e_2\}$ is a 2-cut in $G_0$.\\
An example of this situation is when $G_0$ is melonic but $G$ is not because the disorder lines are added in a way which does not respect melonicity.

In this case, $G$ looks like
\begin{equation}
G = \begin{array}{c} \includegraphics[scale=.7]{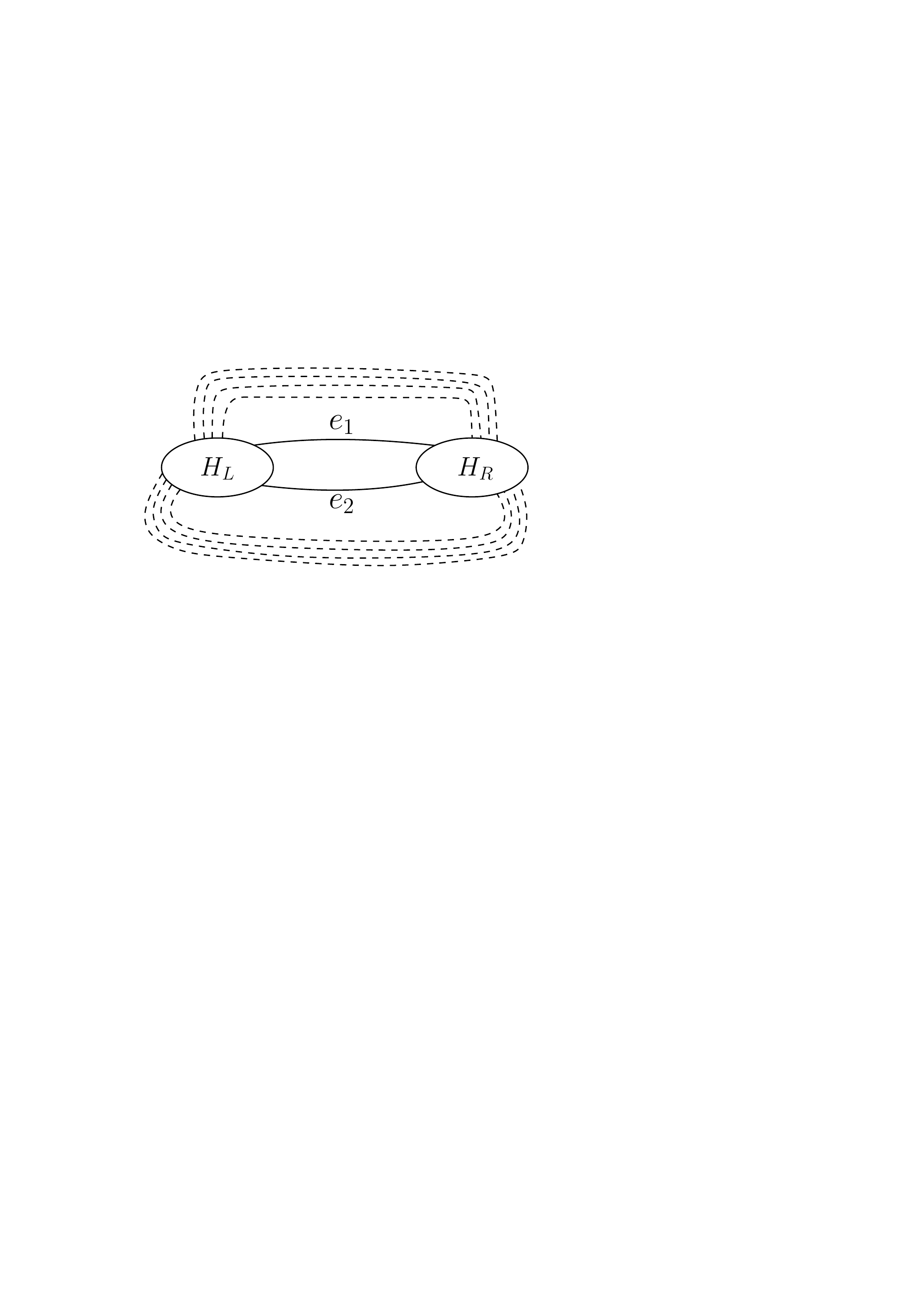} \end{array}
\end{equation}
i.e. $H_L$ and $H_R$ are both connected, and the only lines between them are $e_1, e_2$ and some disorder lines. Consider $G'$ obtained by cutting $e_1$ and $e_2$ and regluing the half-lines as follows
\begin{equation}
G' = \begin{array}{c} \includegraphics[scale=.7]{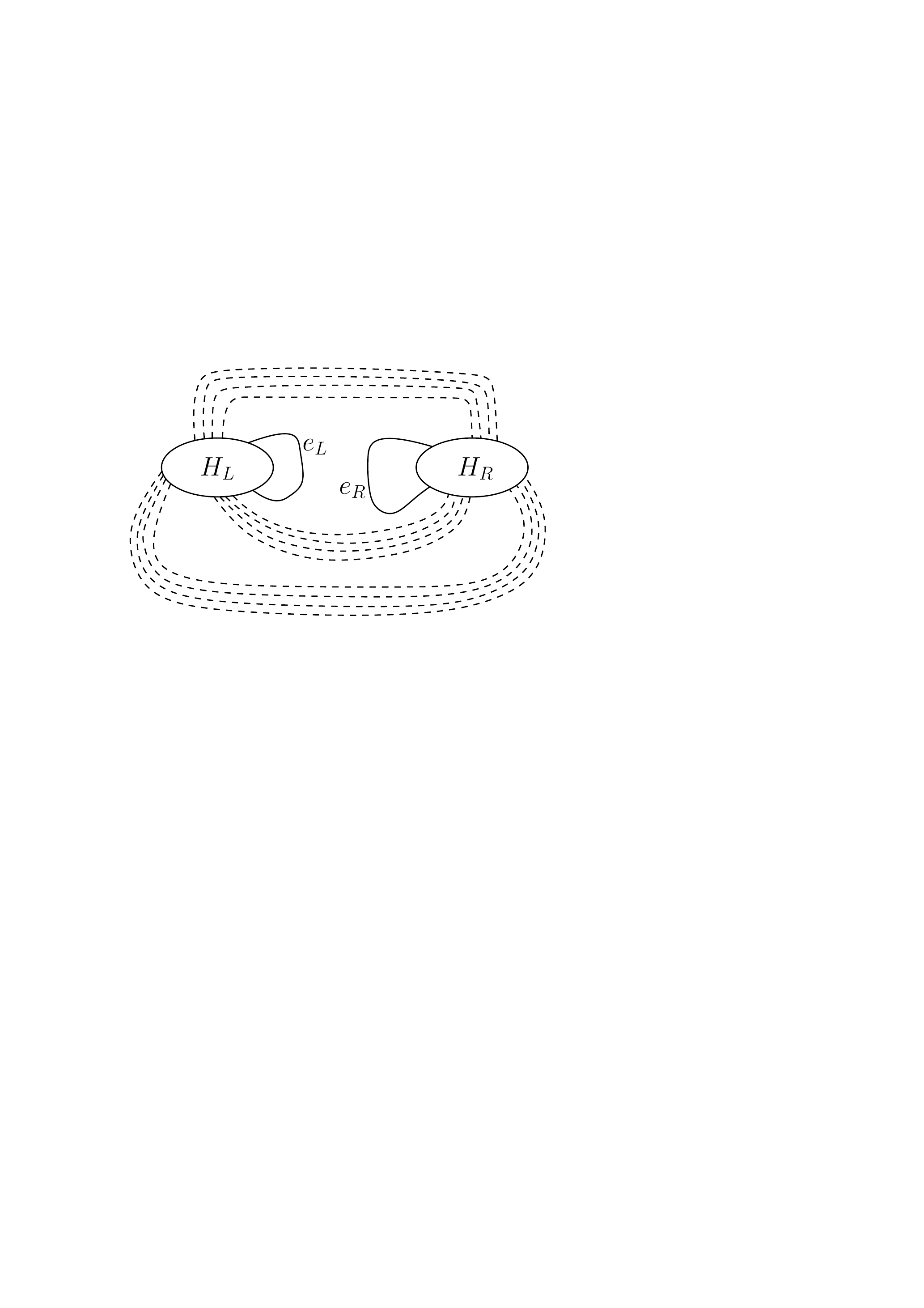} \end{array}
\end{equation}
Notice that $G'\not\in\mathbbm{G}$ since $G'_0$ consists of two connected components $G_{0L}'$ and $G_{0R}'$.

Consider a disorder line $e_0$ between them. It joins two vertices $v_L$ in $G_{0L}'$ and $v_R$ in $G_{0R}'$. We perform the contraction of the disorder line $e_0$ as follows
\begin{equation}
\begin{aligned}
&G' = \begin{array}{c} \includegraphics[scale=.5]{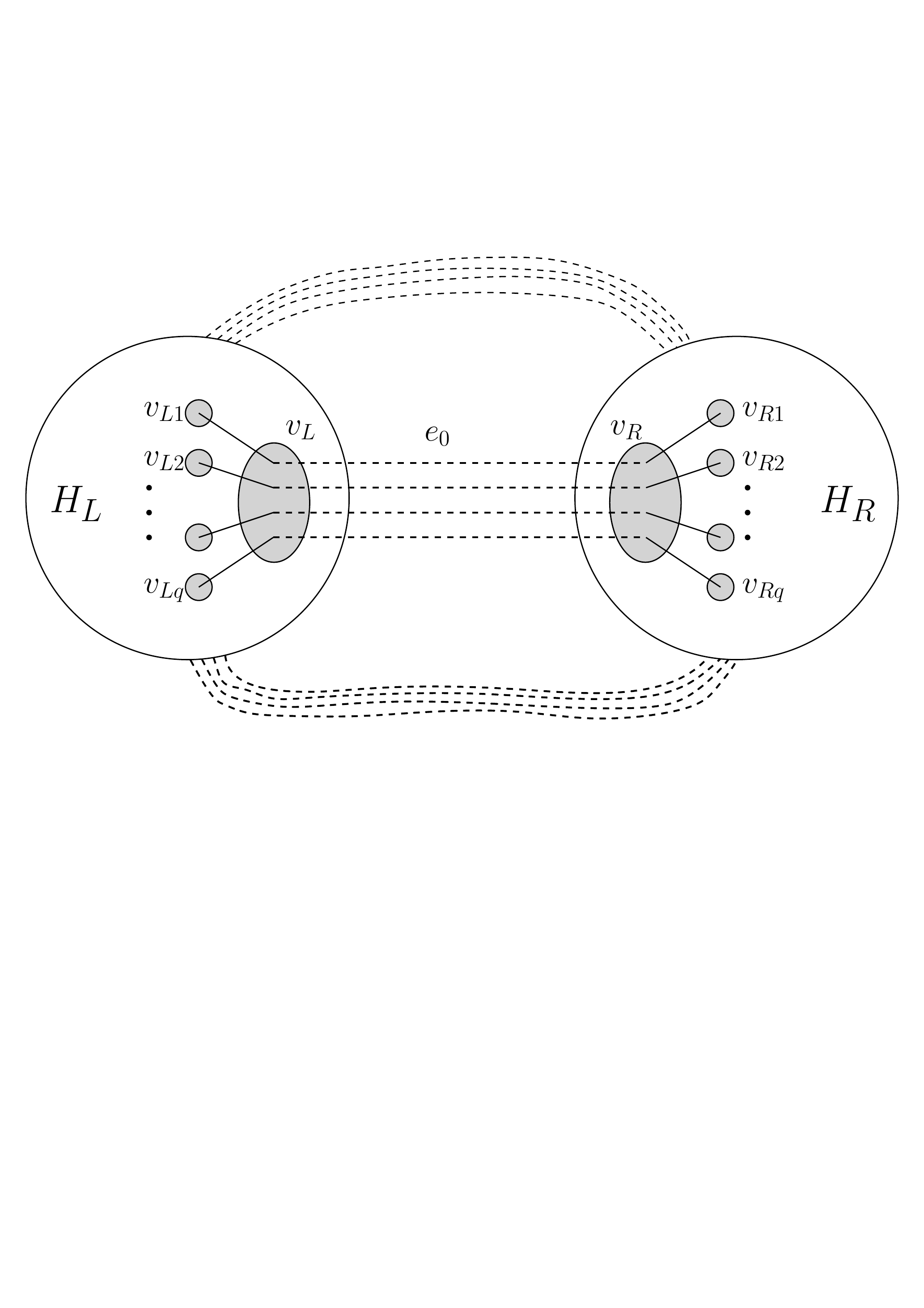} \end{array}\\
\to\quad &G'' = \begin{array}{c} \includegraphics[scale=.5]{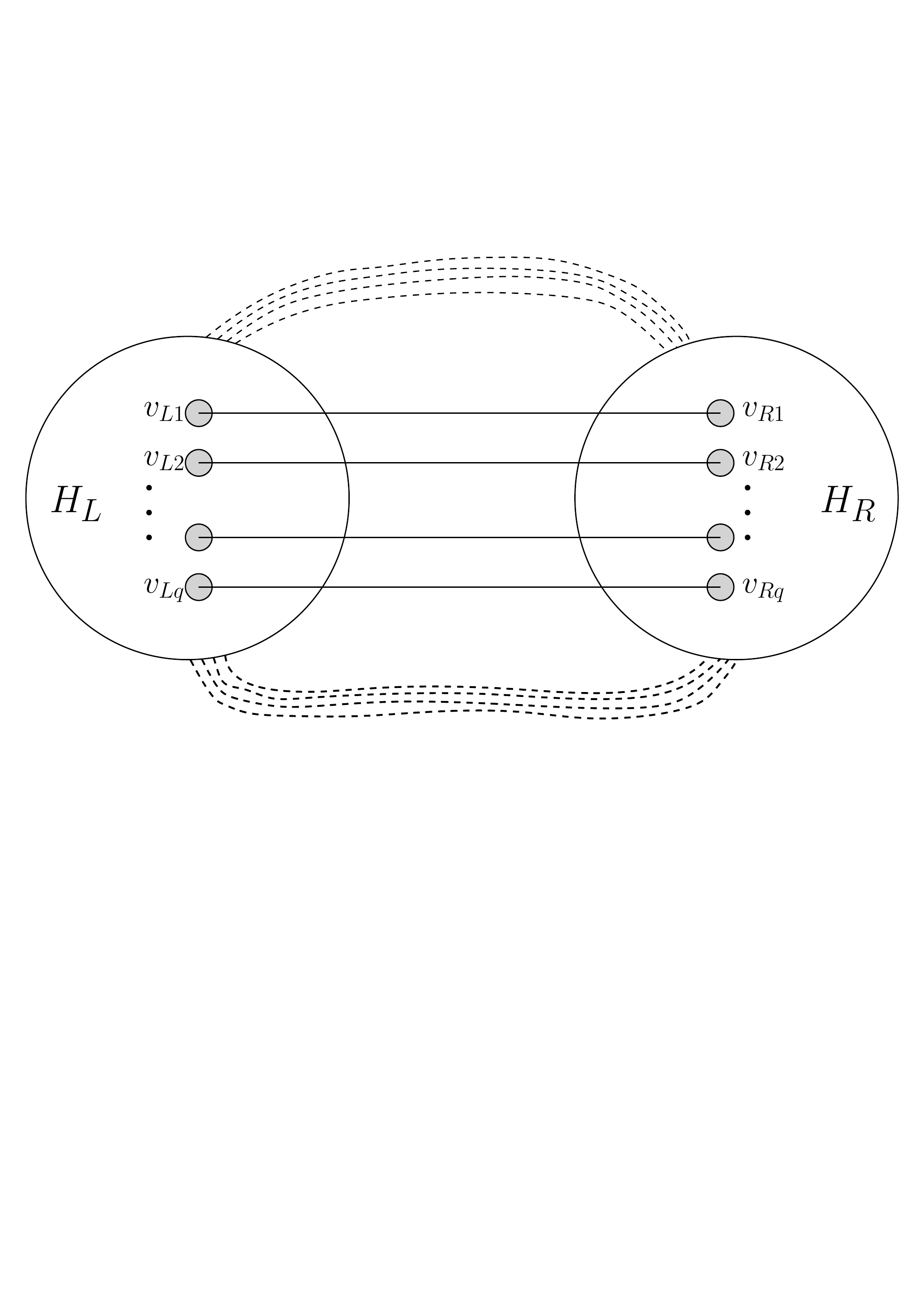} \end{array}
\end{aligned}
\end{equation}
It removes $v_L, v_R$ and $e_0$ and joins the pending fermionic lines which were connected by the strands of $e_0$. The key point is that $G''\in\mathbbm{G}$ now since the contraction of $e_0$ connects the two disjoint components of $G_0'$ by $q$ fermionic lines. 

Let us now analyze the variations of the number of faces from $G$ to $G''$. First from $G$ to $G'$: in $G$ the lines $e_1, e_2$ belong to the same face, while $e_L$ and $e_R$ may or may not belong to the same face in $G'$, hence 
\begin{equation}
F(G) \leq F(G').
\end{equation}
Then the contraction of $e_0$ does not change the number of faces. Indeed, the faces of $G'$ which do not go along $e_0$ are not affected. As for those which go along $e_0$, they follow paths
\begin{equation}
v_{Li}\to v_L\to v_R \to v_{Ri}
\end{equation}
for $i=1, \dotsc, q$ (some of those $q$ paths may belong to common faces). In $G''$, they become paths going directly from $v_{Li}$ to $v_{Ri}$. There is thus a 1-to-1 correspondence between the faces of $G'$ and those of $G''$. Therefore, $F(G) \leq F(G'')$.

%Here for this to be true, it is key that that $e_0$ connects two disjoint components of $G_0'$. 

To conclude the proof, notice that $G''$ has two vertices less than $G$. Therefore we can perform a melonic insertion on any fermionic line of $G''$ to get a graph $\tilde{G}\in\mathbbm{G}$ with $V(G) = V(\tilde{G})$ and %crucially 
\begin{equation}
F(\tilde{G}) = F(G'') + q-1
\end{equation}
as in Proposition \ref{prop:Melons}. For $q>1$ it comes that $F(G)< F(\tilde{G})$ and thus $G\not\in\mathbbm{G}_{\max}(V(G))$.
%\end{description}
\end{enumerate}
%\qed
\end{proof}

%Using the notation of Proposition \ref{prop:2Cut}, Theorem \ref{thm} rewrites as
Let us now prove that
\begin{equation}
\bigcup_{\text{$V$ even}} \mathbbm{G}_{\max}(V) = \left\{ G\in\mathbbm{G} \text{ s.t. } \delta(G)=1 \right\} = \left\{ \text{Melonic graphs}\right\}.
\end{equation}

%%%%%%%%%%
%\subsection{Large $N$ limit} \label{sec:Thm}
%%%%%%%%%%

%We now prove Theorem \ref{thm}.

This is equivalent to 
the main theorem of this subsection, which states:

\begin{theorem} \label{thm}
The weight of $G\in\mathbbm{G}$ is bounded by:
\begin{equation}
\delta(G) \leq 1
\end{equation}
Moreover, the graphs such that $\delta(G) = 1$ are the melonic graphs.
\end{theorem}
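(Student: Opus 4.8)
The plan is to prove the two claims together by characterizing the graphs in $\mathbbm{G}_{\max}(V)$ and showing they coincide with melonic graphs, for which Proposition \ref{prop:Melons} already gives $\delta(G)=1$. Since melonic graphs exist at every even $V$ and achieve $\delta=1$, we immediately get $F_{\max}(V)\geq q+(q-1)(V-2)/2$, i.e. the maximizers satisfy $\delta\geq 1$. The entire difficulty is the reverse inequality: to show that \emph{every} maximizer is melonic (hence $\delta=1$), which then forces $\delta(G)\leq 1$ for all $G$ because any non-maximizer has strictly fewer faces at the same $V$. So the real target is the inclusion $\mathbbm{G}_{\max}(V)\subseteq\{\text{melonic graphs}\}$.

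First I would argue by induction on $V$. The base case $V=2$ is handled by the discussion of $G_{\min}$: among the $q!$ two-vertex graphs, $G_{\min}$ uniquely maximizes the face count, and it is melonic by definition. For the inductive step, I take $G\in\mathbbm{G}_{\max}(V)$ with $V\geq 4$ and aim to exhibit a dipole in $G$ whose removal yields a graph $G''\in\mathbbm{G}_{\max}(V-2)$, which is melonic by induction; then $G$ is melonic as a melonic move applied to $G''$. The key leverage is Proposition \ref{prop:2Cut}: in a maximizer, any two fermionic lines sharing a face \emph{must} form a $2$-cut (otherwise we could increase $F$, contradicting maximality). This is the structural constraint that funnels maximizers toward melonic form.

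The heart of the step is to locate a dipole. I would look at a face of minimal length in $G$ and consider two consecutive fermionic lines $e_1,e_2$ along it; by Proposition \ref{prop:2Cut} they form a $2$-cut, and the two-vertex block they cut off should, by a counting/minimality argument, be forced to be a dipole $D$ connected to the rest of $G$ by exactly these two lines. Concretely, I would show that a maximizer cannot contain any ``non-melonic'' disorder-strand permutation on a short face without violating the $2$-cut condition, so the minimal face must bound a genuine dipole. Removing that dipole gives $G''$ with $V-2$ vertices; combining the face-count bookkeeping of the melonic move (Proposition \ref{prop:Melons}, which says removal costs exactly $q-1$ faces) with maximality shows $G''\in\mathbbm{G}_{\max}(V-2)$, closing the induction.

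The main obstacle I anticipate is rigorously guaranteeing the \emph{existence} of a dipole in an arbitrary maximizer, i.e. converting the local $2$-cut constraints of Proposition \ref{prop:2Cut} into a global statement that some two-vertex sub-block is attached by exactly $q-1$ fermionic lines and one disorder line with the maximal (melonic) strand pattern. The subtlety is that $2$-cuts on fermionic lines control the graph $G_0$ well, but one must also verify the disorder strands are wired melonically rather than in one of the other $q!-1$ ways; Proposition \ref{prop:2Cut} does the work of ruling out the bad wirings by showing any such configuration admits a face-increasing or vertex-reducing move that raises $\delta$. I expect the cleanest route is to apply Proposition \ref{prop:2Cut} repeatedly to peel the graph down, using that each peeling step preserves membership in $\mathbbm{G}_{\max}$, until only a manifestly melonic core remains.
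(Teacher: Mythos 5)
Your overall strategy --- reduce the theorem to the claim that every graph in $\mathbbm{G}_{\max}(V)$ is melonic, then induct on $V$ by locating a dipole, removing it, and checking that removal preserves maximality --- is a legitimate alternative route, and your bookkeeping for the removal step (Proposition \ref{prop:Melons} together with the two inequalities giving $F_{\max}(V)=F_{\max}(V-2)+(q-1)$) is sound. It differs from the paper, which never hunts for a dipole: the paper takes an \emph{arbitrary} pair $\{e_1,e_2\}$ of fermionic lines sharing a face, uses Proposition \ref{prop:2Cut} to conclude it is a 2-cut, writes $G=G_L^{(e_L)}\star G_R^{(e_R)}$ with $F(G)=F(G_L)+F(G_R)-1$, applies the induction hypothesis to \emph{both} closed pieces (each of which may be large), and reassembles melonicity via Proposition \ref{prop:MelonicGluing}. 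This is precisely why the paper proves the gluing proposition; your route would dispense with it, but at the price of a dipole-existence lemma that the paper never needs.

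That lemma is where your proposal has a genuine gap, and the specific mechanism you propose fails. First, in any maximizer with $V\geq 4$ the face of minimal length has length 2: maximizers are (as the theorem asserts) melonic, hence contain dipoles, and each dipole carries $q-1$ faces made of one fermionic line and one strand. A length-2 face contains a \emph{single} fermionic line, so there is no pair to which Proposition \ref{prop:2Cut} applies, and your argument cannot even start on the minimal face. Second, even restricting to faces with at least two fermionic lines, two consecutive fermionic lines $e_1,e_2$ on such a face do form a 2-cut in a maximizer, but the component containing the disorder line whose strand joins them need not be a two-vertex block. Concretely: start from $G_{\min}$, insert a dipole $D$ on a fermionic line, then insert a further dipole on one of the $q-1$ internal lines of $D$. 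The face running through the external strand of $D$ is untouched by the second insertion, and its consecutive pair of fermionic lines now cuts off a four-vertex 2-point subgraph rather than a dipole. So neither face-minimality nor the 2-cut property forces the cut-off block to have two vertices; the ``counting/minimality argument'' you invoke does not exist as described. The gap could plausibly be closed --- for instance, choose among all common-face pairs a 2-cut whose smaller side has the fewest vertices, show that this side has exactly two vertices, and then invoke maximality once more to force the melonic strand wiring (each of the other $q!-1$ wirings loses internal faces) --- but that is a separate piece of work which your proposal neither states nor proves, whereas the paper's split-and-glue induction sidesteps the issue entirely.
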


\begin{proof}
We proceed by induction.
The graph $G_{\min}$ is melonic by definition. It has $F(G_{\min})=q$ and $V(G_{\min})=2$ hence satisfies $\delta(G_{\min}) = 1$. Since it is the only graph on two vertices, the theorem indeed holds on two vertices.

Let $V\geq 4$ even. We assume the theorem is true up to $V-2$ vertices and consider $G\in\mathbbm{G}$ with $V(G) = V$ vertices. 

We need to investigate pairs $\{e_1, e_2\}$ with $e_1, e_2$ two fermionic lines belonging in a common face. Notice that such a pair exists. If it was not the case, then all faces would be of length 2 (i.e. one fermionic line and one disorder line) which implies $G=G_{\min}$, which is impossible since $G$ has $V\geq 4$ vertices.

Let $\{e_1, e_2\}$ be a pair of fermionic edges belonging to  the same face. Due to Proposition \ref{prop:2Cut}, we know that it is a 2-cut in $G$. The graph therefore takes the form
\begin{equation}
G = \begin{array}{c} \includegraphics[scale=.6]{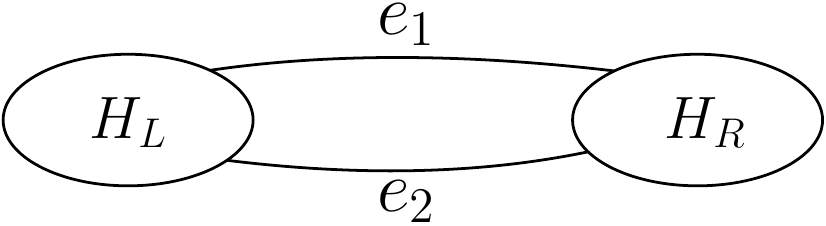} \end{array}
\end{equation}
where $H_L, H_R$ are connected, 2-point graphs (in the sense that $e_1$ and $e_2$ are hanging out). We cut $e_1$ and $e_2$ and glue the resulting half-lines to close $H_L$ and $H_R$ into $G_L, G_R$, and use ``reverse'' orientations as follows
\begin{equation}
G_L = \begin{array}{c} \includegraphics[scale=.6]{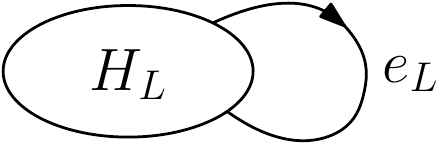} \end{array} \qquad G_R = \begin{array}{c} \includegraphics[scale=.6]{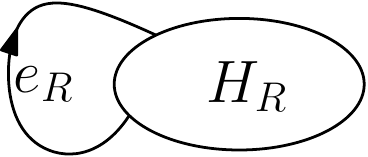} \end{array}
\end{equation}
%In the notation of Proposition \ref{prop:MelonicGluing}, 
We thus have
\begin{equation}
G = G_L^{(e_L)} \star G_R^{(e_R)}.
\end{equation}

Since the edges $e_1$ and $e_2$ belong to the same face, we have 
\begin{equation}
F(G) = F(G_L) + F(G_R) - 1,
\end{equation}
and $F(G)$ is maximal iff $F(G_L)$ and $F(G_R)$ are. From the induction hypothesis, this requires $G_L$ and $G_R$ to be melonic. Then $G = G_L^{(e_L)} \star G_R^{(e_R)}$ is melonic too according to Proposition \ref{prop:MelonicGluing}. %\qed
\end{proof}

%We get as a corollary of Theorem \ref{thm} and Proposition \ref{prop:2Cut} a characterization of melonic graphs.

Let us end this subsection with the following result:

\begin{corollary}
A graph $G\in\mathbbm{G}$ is melonic iff all pairs $\{e_1, e_2\}$ of fermionic lines which belong in a common face are 2-cuts.
\end{corollary}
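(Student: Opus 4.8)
The plan is to prove the corollary as a direct biconditional, using the already-established Theorem~\ref{thm} and Proposition~\ref{prop:2Cut} as the two halves. The forward direction (melonic implies all same-face fermionic pairs are 2-cuts) is the contrapositive of Proposition~\ref{prop:2Cut}: that proposition states that if $e_1,e_2$ lie in a common face but do \emph{not} form a 2-cut, then $G\notin\mathbbm{G}_{\max}(V(G))$; but by Theorem~\ref{thm} every melonic graph attains $\delta(G)=1$ and hence lies in $\mathbbm{G}_{\max}(V(G))$. So if $G$ is melonic, no such non-2-cut same-face pair can exist, which is exactly the claim that all such pairs are 2-cuts.

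For the reverse direction (all same-face fermionic pairs are 2-cuts implies $G$ is melonic), I would argue by induction on the number of vertices $V$, mirroring the structure of the proof of Theorem~\ref{thm}. For the base case $V=2$, the only graph is $G_{\min}$, which is melonic. For the inductive step with $V\geq 4$, I first note that a pair $\{e_1,e_2\}$ of fermionic lines sharing a common face must exist: otherwise every face would have length $2$ and the graph would be $G_{\min}$, contradicting $V\geq 4$. By hypothesis this pair is a 2-cut, so $G$ decomposes as $G=G_L^{(e_L)}\star G_R^{(e_R)}$ with $H_L,H_R$ connected 2-point graphs, exactly as in the proof of Theorem~\ref{thm}. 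The faces-counting identity $F(G)=F(G_L)+F(G_R)-1$ then applies.

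The main obstacle is that the induction hypothesis, as stated for the corollary, concerns the property ``all same-face fermionic pairs are 2-cuts,'' and it is not immediately obvious that $G_L$ and $G_R$ inherit this property from $G$. The cleaner route is to avoid threading the corollary's hypothesis through the recursion and instead invoke the maximality characterization directly: since all same-face pairs in $G$ are 2-cuts, Proposition~\ref{prop:2Cut} gives no obstruction, and I would argue that $G\in\mathbbm{G}_{\max}(V)$, whence $\delta(G)=1$, whence $G$ is melonic by Theorem~\ref{thm}. To make this rigorous I would show that the 2-cut hypothesis forces $F(G)=F_{\max}(V)$: taking the same-face 2-cut $\{e_1,e_2\}$ and splitting into $G_L,G_R$, the identity $F(G)=F(G_L)+F(G_R)-1$ shows $F(G)$ is maximized precisely when $F(G_L),F(G_R)$ are maximal, so an induction on $V$ establishing $G_L,G_R\in\mathbbm{G}_{\max}$ (hence melonic) yields $G$ melonic via Proposition~\ref{prop:MelonicGluing}.

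The subtle point to handle carefully is ensuring the decomposition into $G_L,G_R$ is legitimate, namely that $\{e_1,e_2\}$ being a 2-cut in $G$ yields connected pieces and that the gluing-operation $\star$ recovers $G$; this is precisely the setup borrowed from Theorem~\ref{thm}, so it requires no new work. Once the reverse direction is in place, combining it with the forward direction completes the equivalence, and I would close by remarking that the corollary is essentially a restatement packaging Theorem~\ref{thm} and Proposition~\ref{prop:2Cut} into a single intrinsic combinatorial criterion for melonicity, one that no longer refers explicitly to the degree $\delta(G)$.
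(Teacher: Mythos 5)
Your forward direction is fine and is the intended one: melonic graphs have $\delta(G)=1$, which by Theorem \ref{thm} is the largest possible value of the degree, so a melonic graph lies in $\mathbbm{G}_{\max}(V(G))$, and the contrapositive of Proposition \ref{prop:2Cut} then says that every same-face pair of fermionic lines is a 2-cut. The reverse direction, however, contains a genuine gap, and it is exactly the one you flagged and then claimed to sidestep. Your ``cleaner route'' is circular: you propose to prove by induction on $V$ the implication ``all same-face pairs are 2-cuts $\Rightarrow G\in\mathbbm{G}_{\max}(V)$'', and in the inductive step, after splitting $G=G_L^{(e_L)}\star G_R^{(e_R)}$ along one same-face 2-cut, you invoke ``an induction on $V$ establishing $G_L,G_R\in\mathbbm{G}_{\max}$''. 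But the inductive hypothesis is an implication whose premise is the 2-cut property, so it can be applied to $G_L,G_R$ only if they inherit that property from $G$ --- precisely the step you wanted to avoid. The identity $F(G)=F(G_L)+F(G_R)-1$ does not close the hole: together with the induction it only yields the bound $F(G)\leq F_{\max}(V(G_L))+F_{\max}(V(G_R))-1=F_{\max}(V)$, with equality if and only if $G_L$ and $G_R$ are maximal, and nothing in your hypothesis forces equality. (Likewise, Proposition \ref{prop:2Cut} ``giving no obstruction'' is not evidence of maximality; that proposition is one-directional.) Note the contrast with the proof of Theorem \ref{thm}: there the property threaded through the recursion is maximality of $G$ itself, which the face identity does transfer to $G_L,G_R$; your hypothesis is the 2-cut property, which does not transfer for free.

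The fix is to prove the inheritance lemma outright, which is elementary. If $f_1,f_2\neq e_L$ share a face of $G_L$, they also share a face of $G$ (either a face lying entirely inside $H_L$, or both lie on the face of $e_1,e_2$), so $\{f_1,f_2\}$ is a 2-cut of $G$; every component of $G-f_1-f_2$ other than the one containing $H_R$ sits inside $H_L$ and remains a separate component of $G_L-f_1-f_2$, because $e_L$ merely re-joins the two attachment points that were already joined through $e_1,H_R,e_2$ in $G$. If instead $f_1=e_L$, then $f_2$ lies on the face of $e_1,e_2$ in $G$, so $\{e_1,f_2\}$ is a same-face pair and hence a 2-cut of $G$; were $G_L-e_L-f_2=H_L-f_2$ connected, then $G-e_1-f_2=(H_L-f_2)\cup\{e_2\}\cup H_R$ would be connected, a contradiction. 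With this lemma your induction closes via Proposition \ref{prop:MelonicGluing}. You should also repair the base case: $G_{\min}$ is not the only two-vertex graph (there are $q!$ of them); the non-melonic ones have a face of length at least $4$, hence a same-face pair, and no pair of fermionic lines can disconnect a two-vertex graph, so they violate the hypothesis --- which is what the base case of the reverse direction actually requires.
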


%section{The Gross-Rosenhaus model and the colored SYK model}

\medskip

The large $N$ melonic dominance of the SYK model being now proved, in the rest of this review
we represent SYK graphs with disorder edges as regular edges, without the stranded structure explained in this section.

\section{The colored Sachdev-Ye-Kitaev model}
\label{sec:coloredSYK}

\subsection{Definition of the real and complex model}
\label{sec:defcolored}

The SYK generalization we study in this section contains $q$ flavors of fermions. Moreover, each fermion of a given flavor appears exactly once in the interaction and the Lagrangian couples $q$ fermions together. The action writes:
\begin{equation}
\label{act:noi}
S=\int d\tau \left( 
\frac 12 \sum_{f=1}^q\sum_{i=1}^N \psi_i^f \frac{d} {dt} \psi_i^f
- \frac{i^{q/2}}{q!}\sum_{i_1,\ldots, i_q=1}^N
j_{i_1\ldots i_q}\psi_{i_1}^{1}\ldots \psi_{i_q}^{q},
\right)
\end{equation}
Note that we use superscripts to denote the flavor. Moreover,  in order to simplify the notations,  the model has $q\cdot N$ fermions - we have $N$ fermions of a given flavor.

\medskip

The SYK generalization introduced above is a particular case of the Gross-Rosenhaus generalization  \cite{Gross}. Indeed, Gross and Rosenhaus took a number $f$ of flavors, with $N_a$ fermions of flavor $a$, each appearing $q_a$ times in the interaction, such that $N=\sum_{a=1}^f N_a$ and $q=\sum_{a=1}^f q_a$. In our case, the number $f$ of flavors is equal to $q$, $q_a=1$ and $N_a=N$ (recall that we have now a total of $q\cdot N$ fermions).
The action 
\eqref{act:noi}
is close in spirit to the action of the colored tensor model \cite{colored}, hence the name colored SYK model.

\medskip

Thus, the Feynman graphs obtained through perturbative expansion of the action \eqref{act:noi} are edge-colored graphs where the colors are the flavors. At each vertex, each of the $q$ fermionic fields which interact has one of the $q$ flavors, and each flavor is present exactly once.

An example of such a Feynman graph is given on the right of Fig. \ref{melonSYK} (while on the left side we have a melonic graph of the SYK model). 
%Once again, we have represented the disorder by a dashed line. 
The disorder edge, represented, as in the previous section, as a dashed edge,  can be considered to have the fictitious flavor $0$.

\begin{figure}
\begin{center}
\includegraphics[scale=0.9]{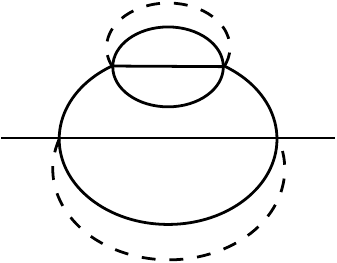}\hspace{2cm} \includegraphics[scale=0.9]{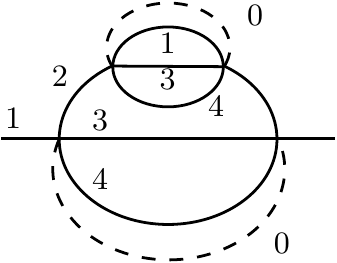}
\caption{\label{melonSYK}Melonic graphs of the SYK and colored SYK models}
\end{center}
\end{figure}

\medskip

%In this paper we will focus on the real model \eqref{act:noi}. 
There is also a complex version of the model \eqref{act:noi}, version initially mentioned in 
\cite{gurau-ultim}. This latter version can be easily obtained by 
%considering the propagation 
%from $\psi$ to $\bar \psi$ and by
using complex fields and by 
 considering the interacting term in \eqref{act:noi} as well as its complex conjugate:
\begin{equation}
\label{act:noir}
\int d\tau \left( 
\frac 12 \sum_{f=1}^q\sum_{i=1}^N \bar \psi_i^f \frac{d} {dt} \psi_i^f
- \frac{i^{q/2}}{q!}\sum_{i_1,\ldots, i_q=1}^N
j_{i_1\ldots i_q}\psi_{i_1}^{1}\ldots \psi_{i_q}^{q}
- \frac{(-i)^{q/2}}{q!}\sum_{i_1,\ldots, i_q=1}^N
\bar j_{i_1\ldots i_q}\bar \psi_{i_1}^{1}\ldots \bar\psi_{i_q}^{q},
\right)
\end{equation}
The Feynman graphs obtained through perturbative expansion of the complex action have the same structure as the one explained above for the real model \eqref{act:noi}. However, in the complex case, one has two types of vertices, which we can refer to as white and black, as it is done in the tensor model literature 
(see, for example,  the book \cite{gurau-book} and references within).  
Each edge connects a white to a black vertex. The Feynman graphs of \eqref{act:noir} are thus the subset of the Feynman graphs of \eqref{act:noi} which are bipartite. This is a feature which  simplifies the diagrammatic analysis of the complex model. 
%In this paper however, we will have to deal with typically non-bipartite graphs.

\subsection{Diagrammatics of the real model} \label{sec:NLO}
%%%%%%%%%%%%%

%At fixed couplings $j_{i_1 \dotsc i_q}$, the Feynman graphs are $q$-regular edge-colored graphs: graphs with a color from $\{1, \dotsc, q\}$ on each edge and such that all colors are incident exactly once one each vertex.

%In order to study the $1/N$ expansion, one must average over the disorder with the covariance
%\begin{equation}
%\langle j_{i_1 \dotsc i_q} j_{l_1 \dotsc l_q} \rangle \sim \frac1{N^{q-1}} \prod_{k=1}^q \delta_{i_k, l_k}
%\end{equation}
%and similarly for $\langle j_{i_1 \dotsc i_q} \bar{j}_{l_1 \dotsc l_q} \rangle$ in the complex case. Each graph is thus turned into a sum over Wick pairings which can be represented with edges carrying a new color, say the color $0$.

%This adds to the fermionic Feynman graphs an additional set of edges with color 0. An edge of color 0 must join two vertices. 

This subsection follows the original article \cite{Bonzom:2017pqs}.

For each color $i\in\{1, \dotsc, q\}$, a Feynman graph has cycles (i.e. closed paths) which alternate the colors $0$ and $i$. We call them \textbf{faces of colors $0i$}. This terminology is an extension of matrix models where those cycles are faces of ribbon graphs.

We denote by $F_{0i}(G)$ the number of faces of colors $0i$ for $i=1, \dotsc, q$ of a graph $G$, 
\begin{equation}
F_0(G) = \sum_{i=1}^q F_{0i}(G)
\end{equation}
the total number of faces which have the color $0$. We further denote by $E_0(G)$ the number of edges of color 
$0$ of the graph $G$ (which is, as in the SYK case of the previous section, half the number of vertices of the graph $G$). 
%It is easy to see that $G$ has a free sum for each face of colors $0i$ which sums up to $N$. It thus receives a total weight
In the large $N$ limit, the Feynman amplitude of a colored SYK graphs is given by 
$$N^{\chi_0(G)}$$
where the {\textbf{colored SYK degree}} is:
\begin{equation}
\label{SYKdeg}
 \chi_0(G)= F_0(G) - (q-1)E_0(G).
\end{equation}
%Gurau's theorem on the $1/N$ expansion of tensor models (see, for example, the version presented in \cite{uncolored}) ensures that this is bounded,
Using colored tensor model results (see, for example \cite{uncolored}) one can prove:
\begin{equation}
\chi_0(G) = F_0(G) - (q-1)E_0(G) \leq \begin{cases} 1 & \text{if $G$ is a vacuum graph,}\\
0 & \text{if $G$ is a 2-point graph.} \end{cases}
\end{equation}
The case of 4-point graphs will be discussed later. 

In the language of \cite{uncolored}, the graphs of the colored SYK models have a single bubble, i.e. a single connected component after removing the edges of color $0$, since this bubble is the underlying, connected fermionic graph at fixed couplings.

\subsubsection{LO, NLO of vacuum and 2-point graphs} \label{sec:2Pt}
%%%%%%%%%%%%

Notice that all 2-point graphs are obtained by cutting an edge $e$ of color $i\in\{1, \dotsc, q\}$ in a vacuum graph $G$. Since there is a single face, with colors $0i$, which goes through $e$ in $G$, cutting it decreases the exponent of $N$ by one exactly.

To study $\chi_0(G)$, we perform in $G$ the contraction of the edges of color 0 to get the graph $G_{/0}$,
\begin{equation}
\label{Contraction}
\begin{array}{c} \includegraphics[scale=.6]{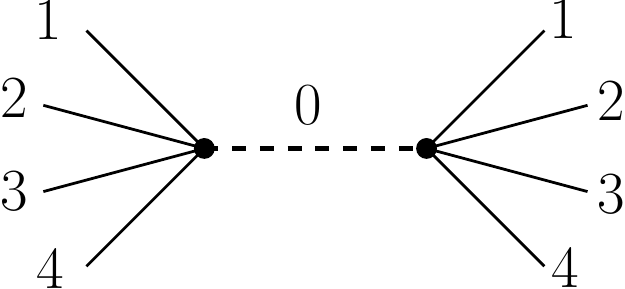} \end{array} \qquad \underset{/0}{\to} \qquad \begin{array}{c} \includegraphics[scale=.6]{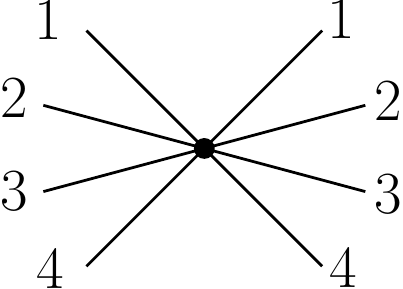} \end{array}
\end{equation}
This means that two vertices of $G$ connected by an edge of color 0 become a single vertex in $G_{/0}$. The map $G\mapsto G_{/0}$ is not one-to-one because of this. Nevertheless, in the complex case, where $G$ is bipartite, it can be made one-to-one by orienting the edges from, say, $\psi^i$ to $\bar{\psi}^i$, i.e. from white to black vertices. Then the edges of $G_{/0}$ are oriented and this is sufficient to reconstruct $G$. In the real case, $G$ is not always bipartite and there are typically several graphs $G$ for the same $G_{/0}$.

\emph{The main property of $G_{/0}$ is that all $q$ colors are incident exactly twice on each vertex}. Therefore, the edges of color $i$ form a disjoint set of cycles (we recall that a cycle is a closed path which visits its vertices only once). Let $\ell_i(G_{/0})$ be the number of cycles of edges of color $i$. From the construction of $G_{/0}$, its cycles of color $i$ are the faces of colors $0i$ of $G$,
\begin{equation}
F_{0i}(G) = \ell_i(G_{/0}).
\end{equation}
Let us introduce $L(G_{/0})$ the cyclomatic number of $G_{/0}$, i.e. its number of independent cycles, or first Betti number. As is well known, it is the number of edges of $G_{/0}$ minus its number of vertices plus one. The number of edges of $G_{/0}$ is the number of edges of $G$ with colors in $\{1, \dotsc, q\}$, thus $q E_0(G)$. The number of vertices of $G_{/0}$ simply is $E_0(G)$, so that
\begin{equation}
L(G_{/0}) = (q-1) E_0(G) + 1.
\end{equation}
This shows that 
\begin{equation}
\chi_0(G) = \sum_{i=1}^d \ell_i(G_{/0}) - L(G_{/0}) + 1
\end{equation}
which has a simple graphical interpretation: it is minus the number of multicolored cycles. Indeed, a cycle can be single-colored or multicolored. The former are counted by $\sum_{i=1}^q \ell_i(G_{/0})$ while $L(G_{/0})$ counts the total number of cycles. Therefore their difference leaves precisely the number of cycles $\ell_m(G_{/0})$ which are multi-colored, up to a sign,
\begin{equation} \label{DegreeCycles}
\chi_0(G) = - \ell_m(G_{/0}) + 1.
\end{equation}
The classification of graphs $G$ with respect to $\chi_0(G)$ is therefore obtained from $\ell_m(G_{/0})$.

%\subsubsubsection{Leading order}
%%%%%%%%%%%%

The LO large $N$ limit graphs are graphs which satistfy $\ell_m(G_{/0}) = 0$, i.e. $G_{/0}$ has no multicolored cycles. It means that it is made of single-colored cycles which are glued without forming additional cycles. The corresponding graphs $G$ are easily seen to be melonic. Indeed, one starts from $G_{/0}$ being a simple single-colored cycle of color $i\in\{1, \dotsc, q\}$ with loops of all other colors on its vertices. Then each vertex of $G_{/0}$ is replaced with a pair of vertices and each loop becomes an edge between them. The color 0 from the average over disorder is added between the vertices of each pair too. One gets a melonic cycle as follows,
\begin{equation}
G_{/0} = \begin{array}{c} \includegraphics[scale=.6]{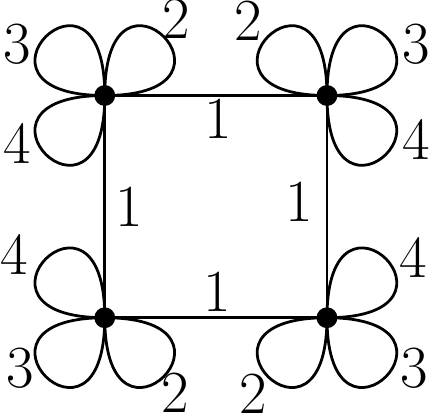} \end{array} \qquad \Rightarrow \qquad G = \begin{array}{c} \includegraphics[scale=.6]{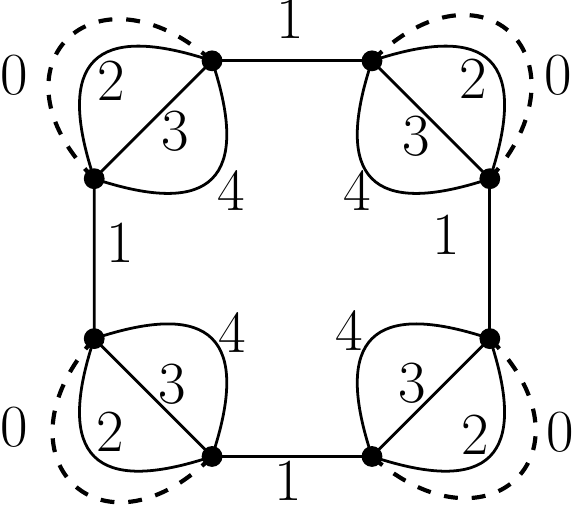} \end{array}.
\end{equation}
More general $G_{/0}$ are obtained by cutting a loop, say of color 2, and replacing it with a cycle and loops attached to its vertices. This corresponds to cutting an edge of color 2 in $G$ and gluing another melonic cycle. This recursive process generates all the graphs corresponding to the large $N$ limit.

The large $N$ $2-$point function is simply obtained by cutting an edge of color $i\in\{1, \dotsc, q\}$. From the above recursive process, one finds the following description of the large $N$, fully dressed propagator
\begin{equation} \label{Melonic2Pt}
\begin{array}{c} \includegraphics[scale=.6]{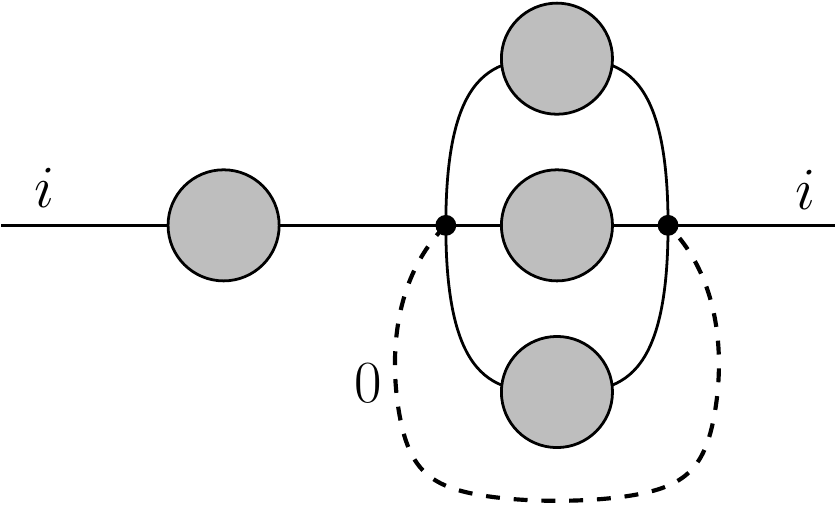} \end{array}
\end{equation}
where each gray blob reproduces the same structure.

%%%%%%%%%%%
%\subsubsection{Next-to-leading order and chains}
%%%%%%%%%%%

One can check that replacing an edge in $G$ with any LO 2-point function of the form \eqref{Melonic2Pt} does not change $\chi_0(G)$. Therefore, \emph{all solid edges in the remaining of the article are large $N$, fully dressed propagators}.

2-point functions in the representation as $G_{/0}$ are simply obtained by contracting all edges of color 0 of 2-point graphs $G$. Therefore, solid edges in $G_{/0}$ will also represent fully dressed propagators from now on.

At NLO, one finds graphs such that $\ell_m(G_{/0}) = 1$, i.e. $G_{/0}$ has a single multicolored cycle. Compared to the large $N$ limit, this means that one obtains $G_{/0}$ by gluing single-colored cycles (with loops attached to their vertices) so as to form a single multicolored cycle.

Considering that solid edges are fully dressed 2-point functions, the NLO graphs $G_{/0}$ are completely characterized by the length $n$ of the multicolored cycle with colors $i_1, i_2, \dotsc, i_n$. For instance at length $n=6$:
\begin{equation}
G_{/0}^{\text{NLO}} = \begin{array}{c} \includegraphics[scale=.6]{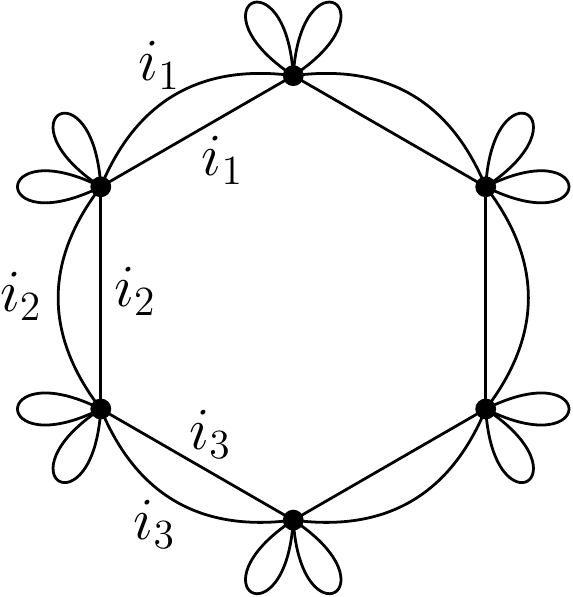} \end{array}
\end{equation}

To find the corresponding graphs $G$, one splits each vertex of $G_{/0}$ into two vertices connected by an edge of color 0 and so that each color is incident exactly once on each vertex. There are several ways to connect the edges of color $i_j$ and $i_{j+1}$ to a pair of vertex. Overall, this leads to the two following families of graphs,
\begin{equation} \label{VacuumNLO}
G^{\text{NLO}} = \begin{array}{c} \includegraphics[scale=.5]{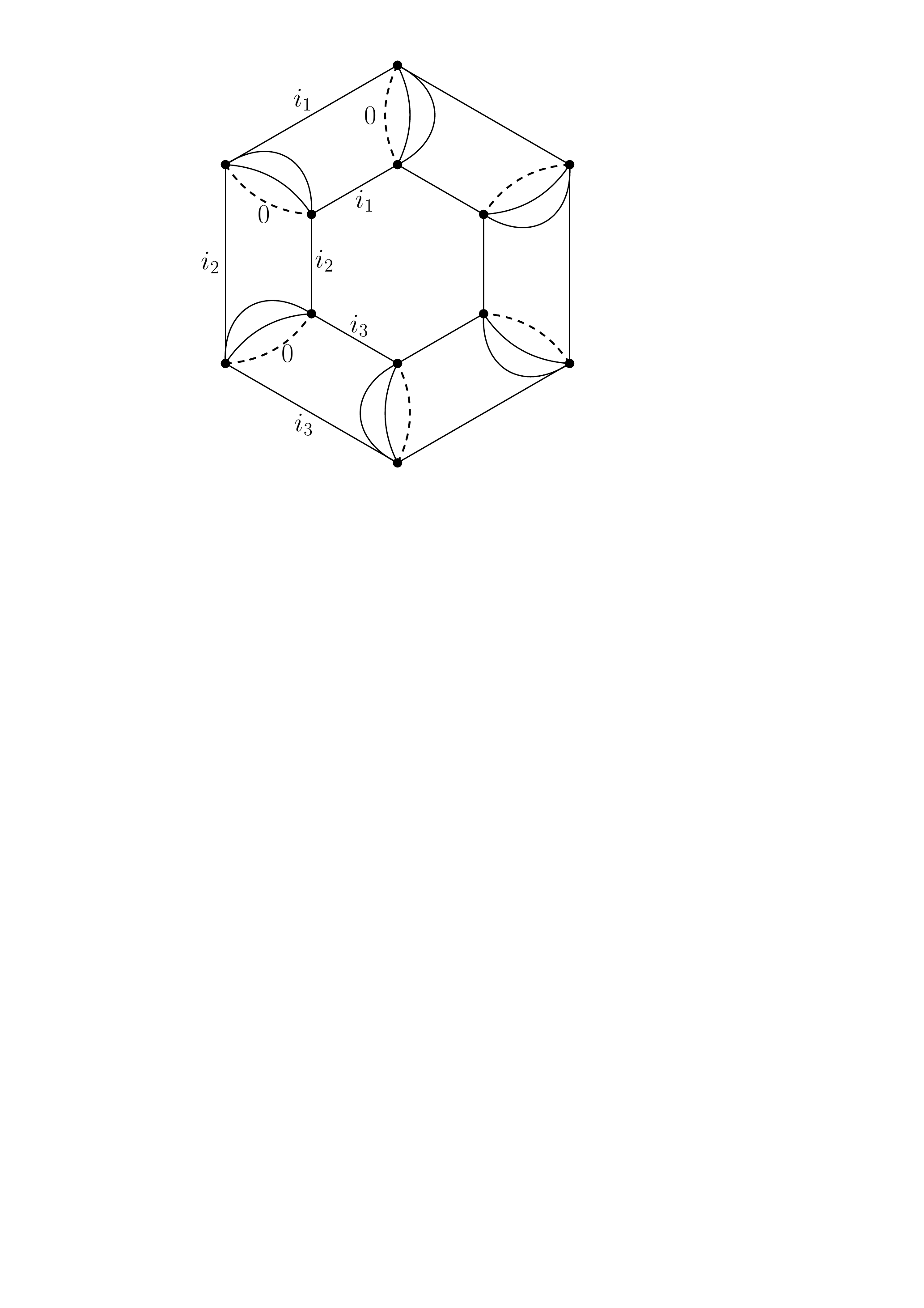} \end{array} \qquad \text{and} \qquad \tilde{G}^{\text{NLO}} = \begin{array}{c} \includegraphics[scale=.5]{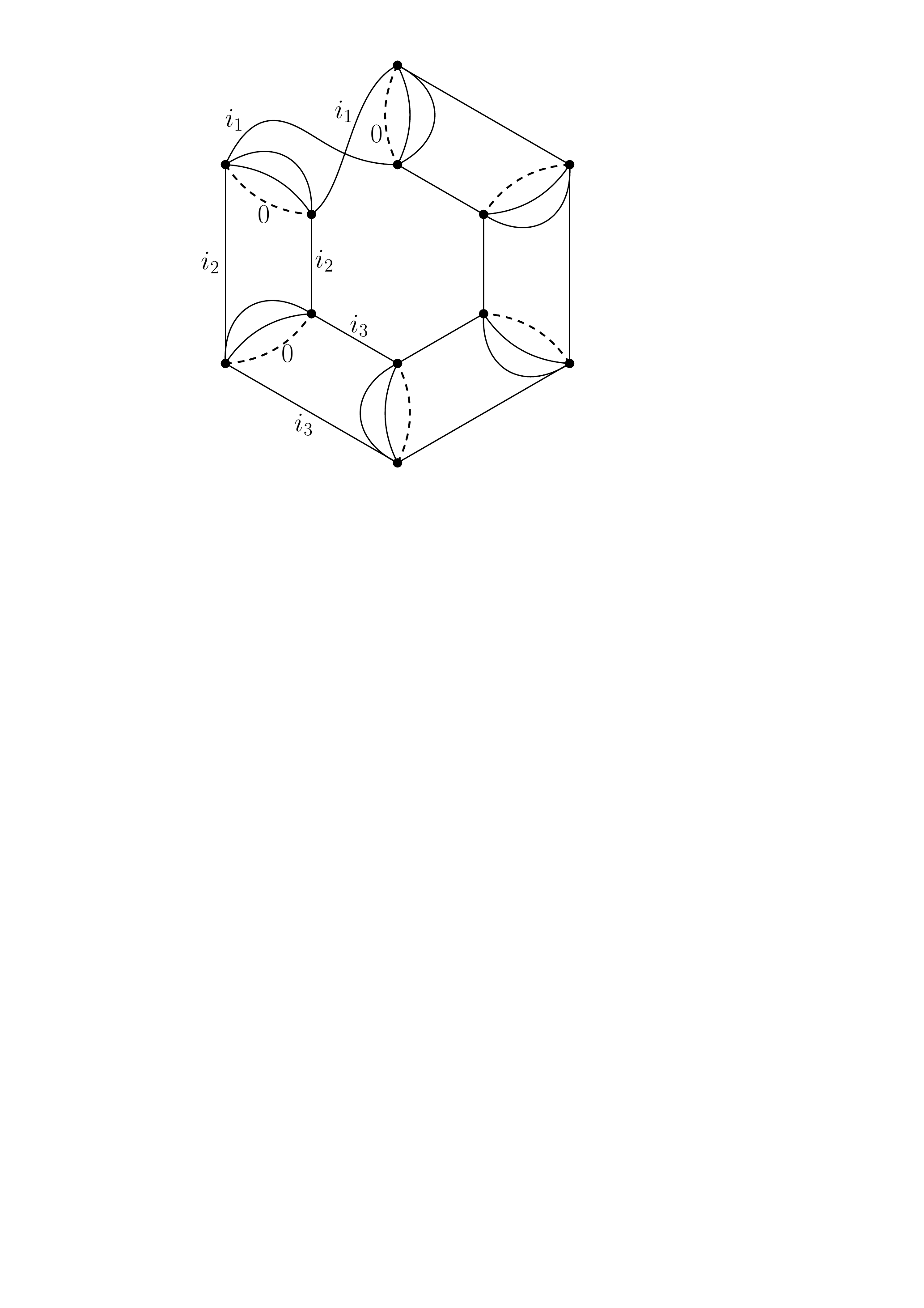} \end{array}
\end{equation}
Notice that $G^{\text{NLO}}$ is bipartite (recall that the melonic 2-point functions are) while $\tilde{G}^{\text{NLO}}$ is not.
%\footnote{In a complex theory, only the bipartite graphs would be admissible.}. 
The graph $\tilde{G}^{\text{NLO}}$ is obtained from $G^{\text{NLO}}$ by crossing two edges, say with colors $i_1$. Adding more crossings is always equivalent to $G^{\text{NLO}}$ (for an even number of crossings) or $\tilde{G}^{\text{NLO}}$ (for an odd number of crossings).

To remember that the graphs above can have arbitrary lengths, and also to offer a convenient representation of NLO 2-point functions (to come below), we introduce \emph{chains} which are 4-point graphs,
\begin{equation} \label{SYKChain}
\begin{array}{c} \includegraphics[scale=.6]{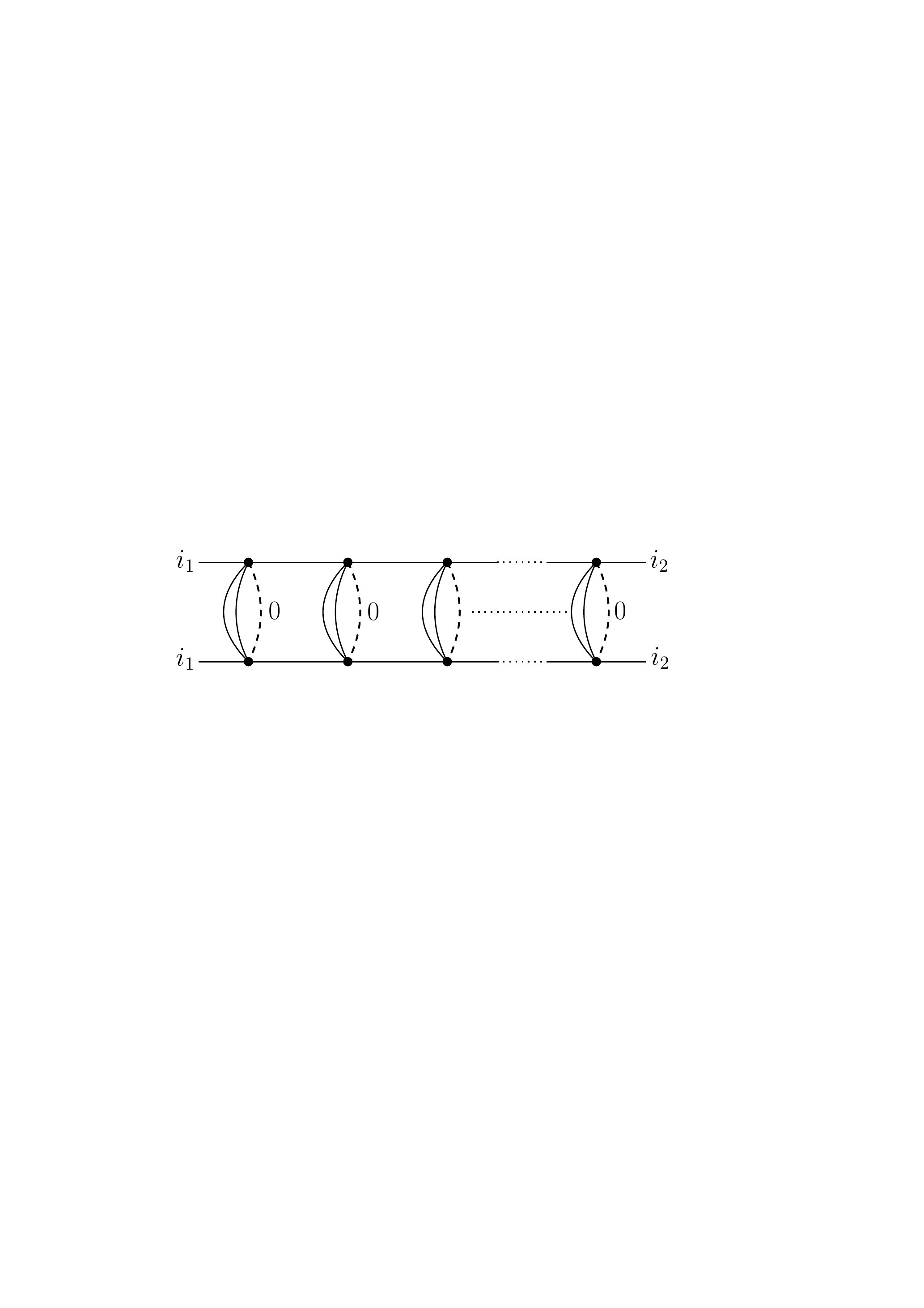} \end{array}
\end{equation}
A combinatorial detail of importance is that a chain can have down to two vertices only, and has at least two vertices unless stated otherwise. % Notice also that the external legs are amputated: they do not carry any 2-point insertions.
We will represent arbitrary choices of chains as boxes,
\begin{equation} \label{SYKChainVertex}
\begin{array}{c} \includegraphics[scale=.6]{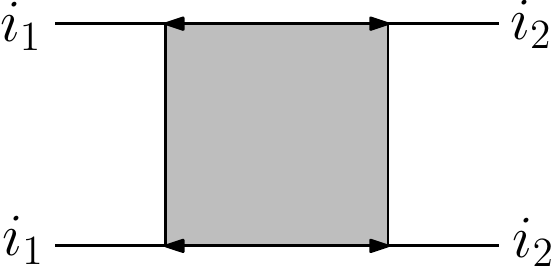} \end{array}
\end{equation}
where the arrows indicate the direction of the chain (the box by itself being symmetric).

This enables to represent the two families of NLO vacuum graphs as
\begin{equation}
\label{Vacuum_NLO_Schemes}
G^{\text{NLO}} = \begin{array}{c} \includegraphics[scale=.5]{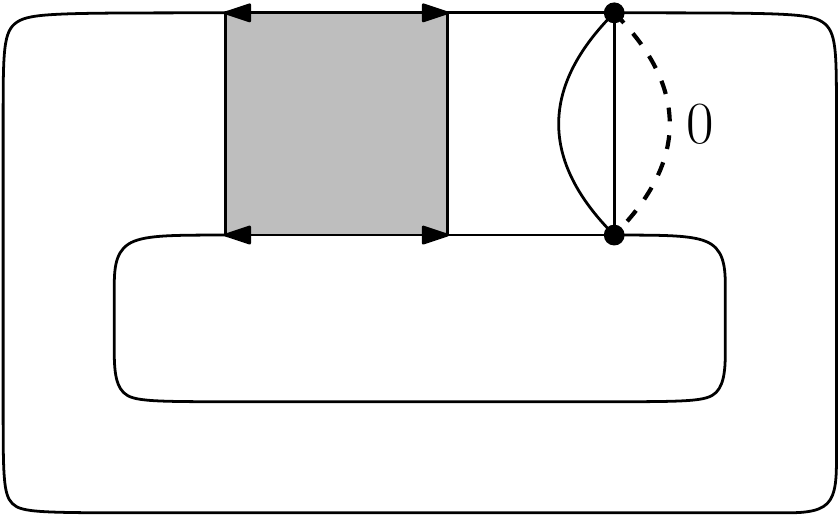} \end{array} \qquad \text{and} \qquad \tilde{G}^{\text{NLO}} = \begin{array}{c} \includegraphics[scale=.5]{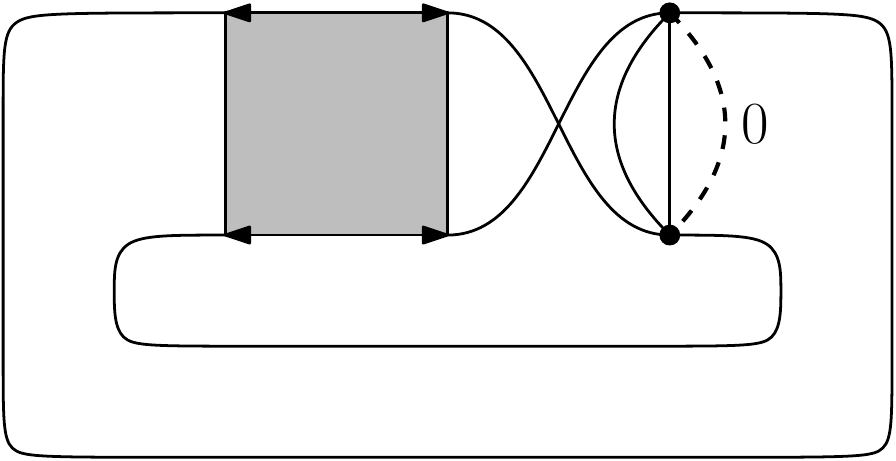} \end{array}
\end{equation}

To get the $2-$point NLO graphs from vacuum graphs, it is sufficient to cut an edge of a given color $i\in\{1, \dotsc, q\}$ in a NLO vacuum graph. However, we have to remember that we have used dressed propagators in \eqref{VacuumNLO}. For instance, $G^{\text{NLO}}$ really is
\begin{equation} \label{NLOGraph}
\begin{array}{c} \includegraphics[scale=.6]{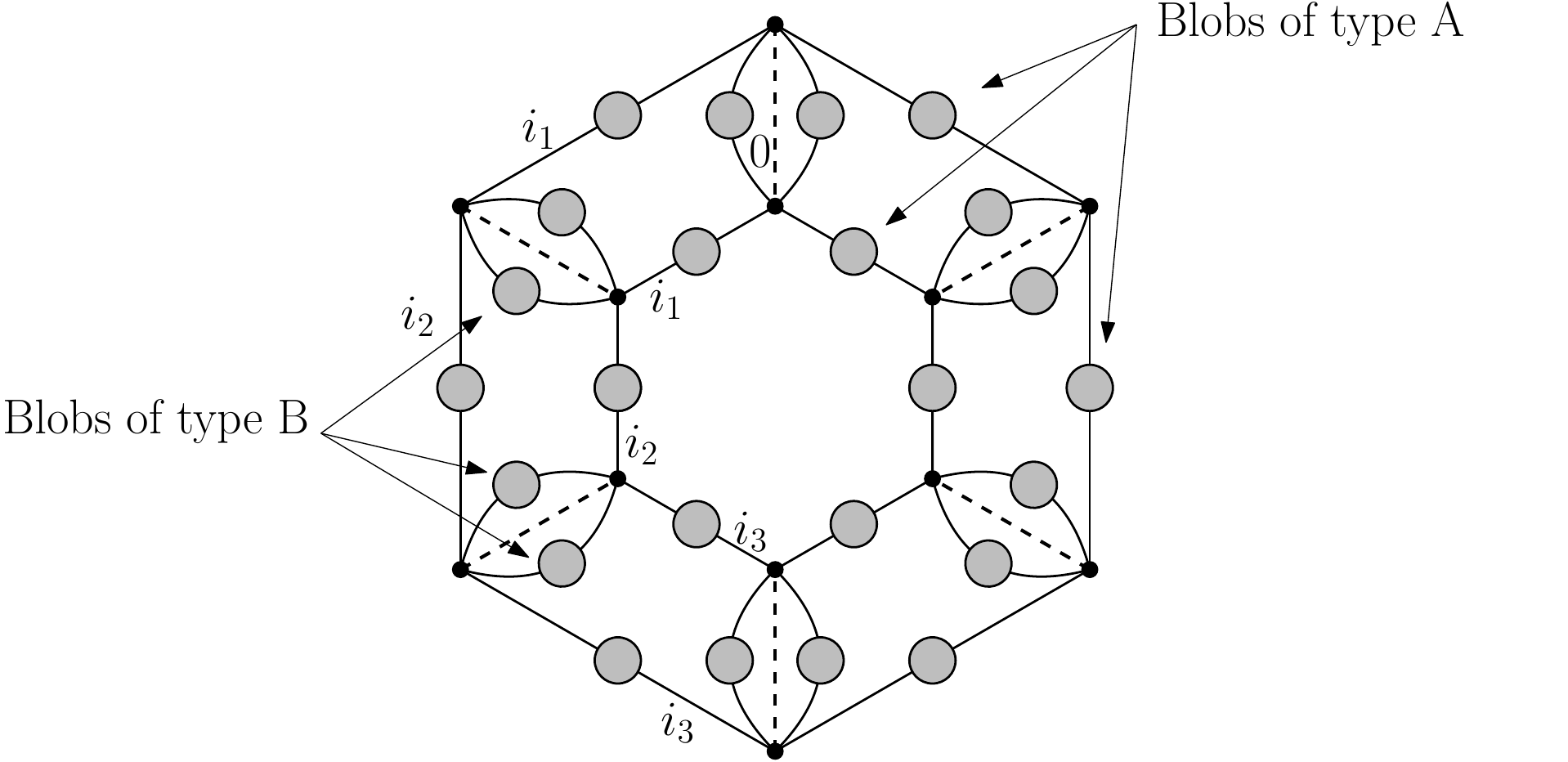} \end{array}
\end{equation}
where the gray blobs represent arbitrary, LO 2-point functions. One might (not necessarily but typically) cut an edge which is contained in a gray blob of \eqref{NLOGraph}. There are two cases to distinguish depending on where an edge is cut in \eqref{NLOGraph}, because there are two types of blobs in \eqref{NLOGraph}.
\begin{itemize}
\item Blobs of type A are inserted on the $2n$ edges of colors $i_1, \dotsc, i_n$ which are characterized as follows: such an edge connects two vertices which are not incident to the same edge of color 0.
\item Blobs of type B are the others: they are inserted on the edges whose end-points are incident to the same edge of color 0.
\end{itemize}

If the cut edge is chosen within a blob of type A, then there are two types of NLO 2-point graphs:
\begin{equation} \label{NLO2Pt1}
G_2^{\text{NLO}(1)} = \begin{array}{c} \includegraphics[scale=.45]{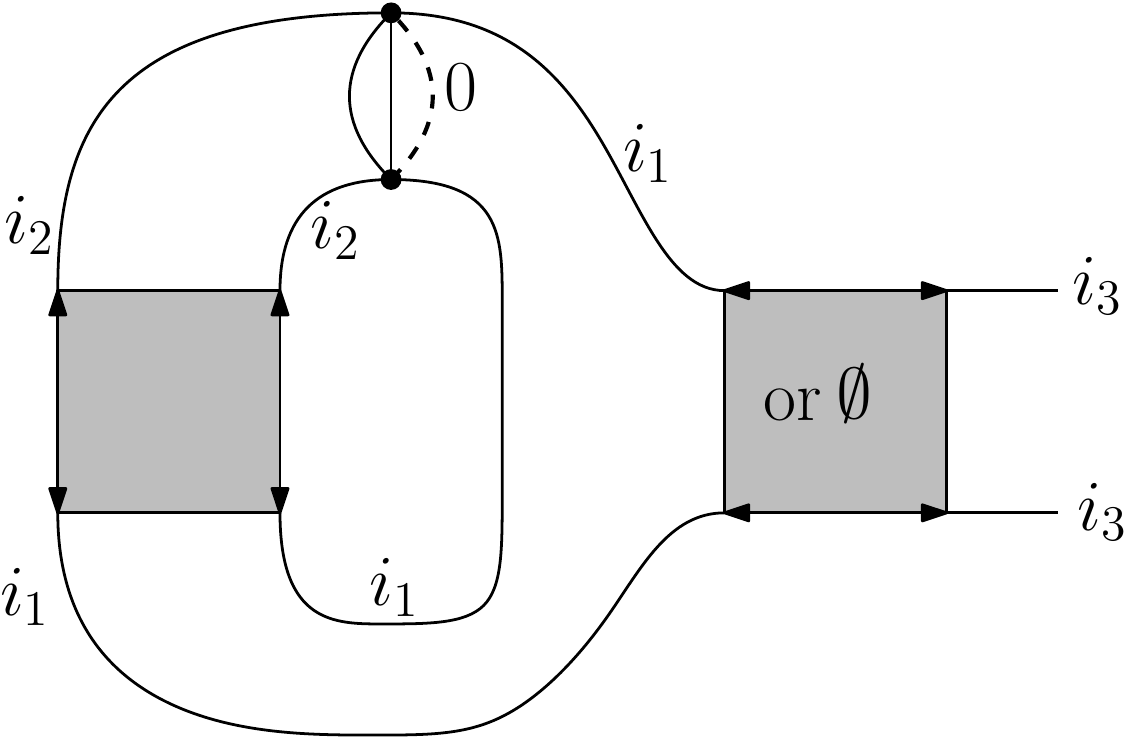} \end{array},% \hspace{1cm} 
\tilde{G}_2^{\text{NLO}(1)} = \begin{array}{c} \includegraphics[scale=.45]{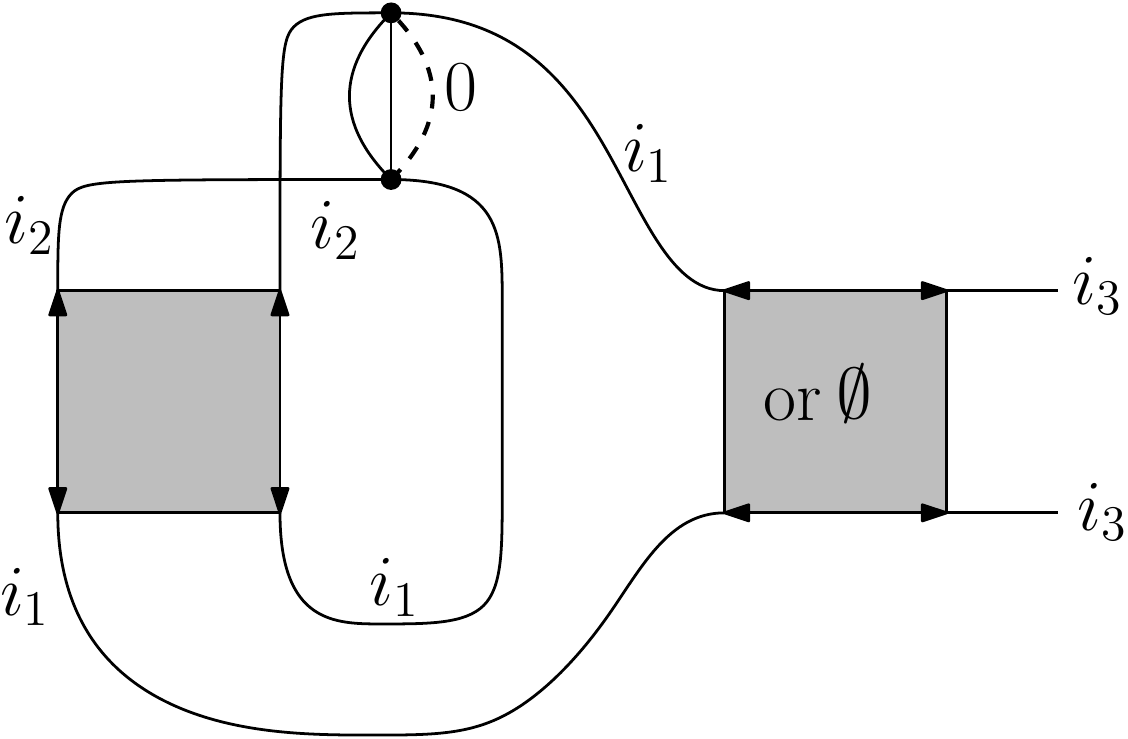} \end{array}
\end{equation}
Only $G_2^{\text{NLO}(1)}$ is bipartite (and would thus contribute in a complex model).

If the cut edge is within a blob of type B, then the NLO 2-point contributions are
\begin{equation} \label{NLO2Pt2}
G_2^{\text{NLO}(2)} = \begin{array}{c} \includegraphics[scale=.45]{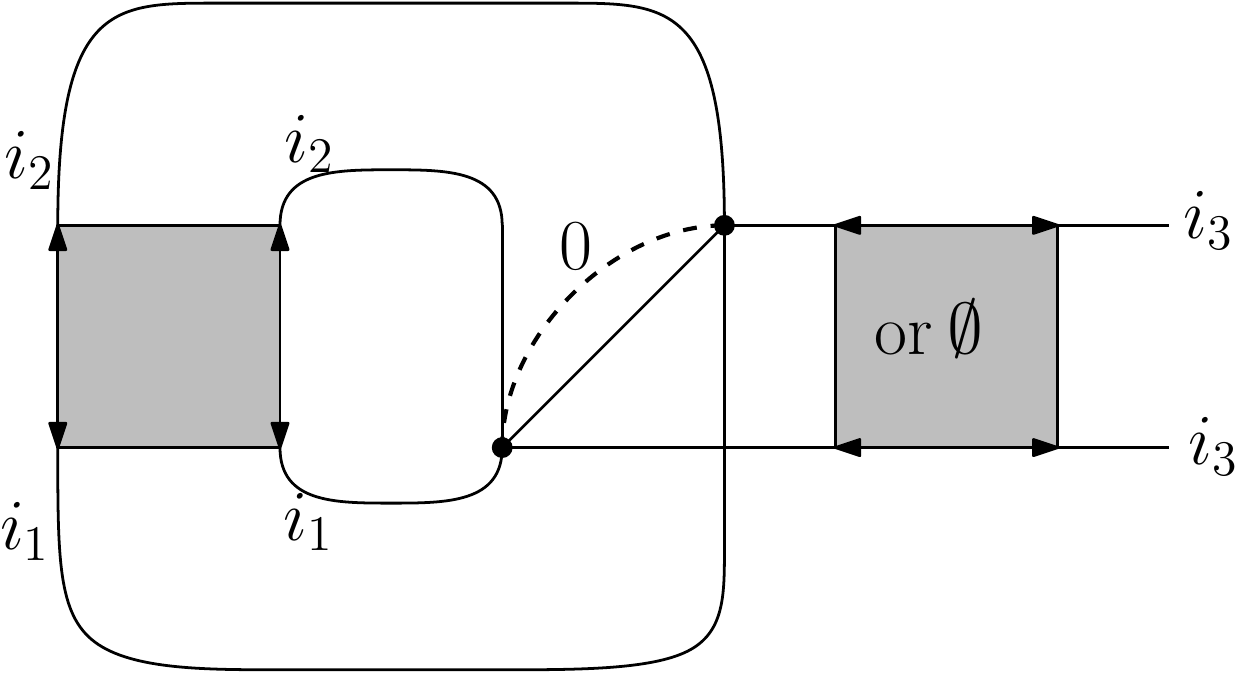} \end{array},% \hspace{1cm}
\tilde{G}_2^{\text{NLO}(2)} = \begin{array}{c} \includegraphics[scale=.45]{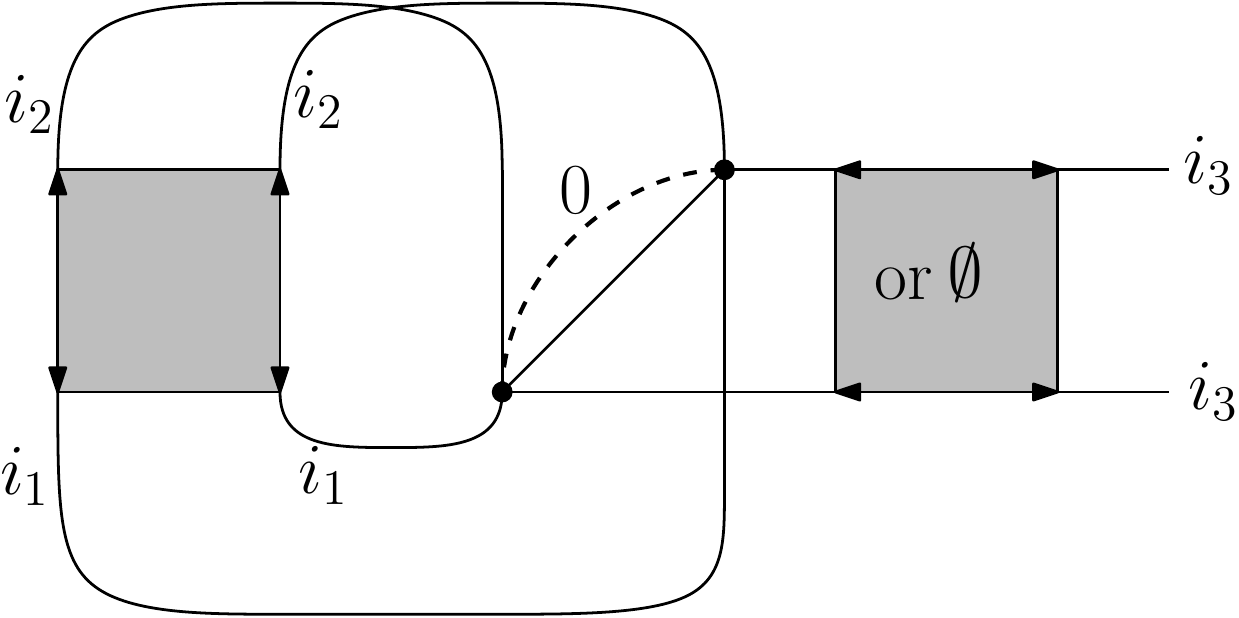} \end{array}
\end{equation}
Again, only $G_2^{\text{NLO}(2)}$ is bipartite.

{ Let us give a specific example of a Feynman diagram which can be obtained as a particular case of $G_2^{\text{NLO}(2)}$ above:
\begin{equation} \label{referee}
\begin{array}{c} \includegraphics[scale=.45]{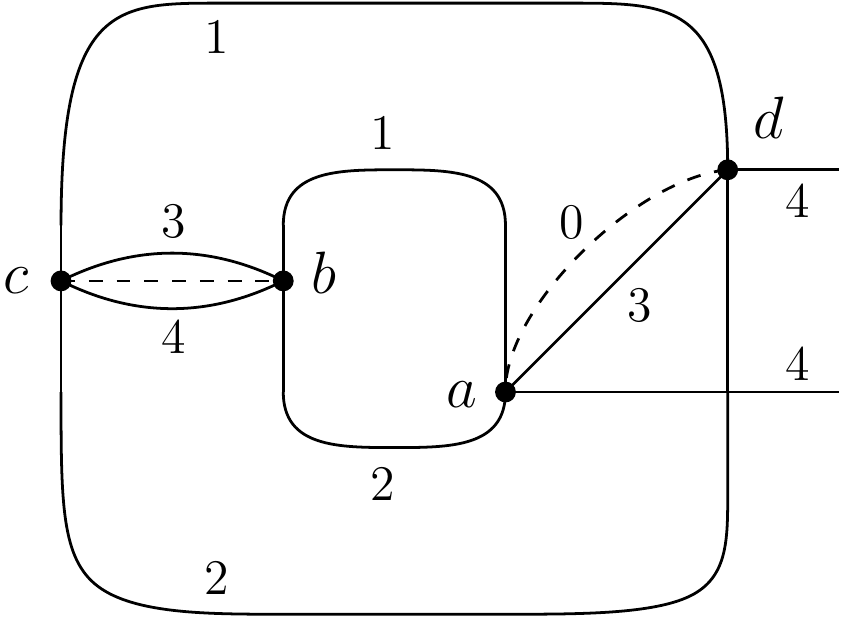} \end{array} = %  \hspace{3cm}
\begin{array}{c} \includegraphics[scale=.75]{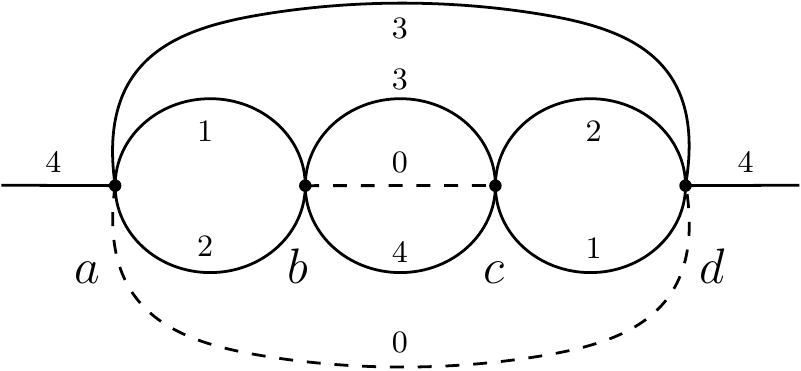} \end{array}
\end{equation}
Thus, on the LHS of \eqref{referee} one has the Feynman diagram obtained if the chain on the LHS of $G_2^{\text{NLO}(2)}$ ({\it i. e.} the chain attached to internal edges) has only two vertices, while the chain on the RHS of $G_2^{\text{NLO}(2)}$  ({\it i. e.}  the chain attached to external edges) is empty. On the RHS of \eqref{referee} we redraw the Feynman diagram thus obtained.
}

\bigskip

%%%%%%%%%%%%%%%%%%%%%%%%%%%%%%%%%%%%%%%%%%%%
%\subsubsection{Following orders of the partition function}
%%%%%%%%%%%%%%%%%%%%%%%%%%%%%%%%%%%%%%%%%%%%

The method we have used to identify LO and NLO contributions to the free energy and 2-point function can in principle be applied at any order. However, the number of diagrams grows importantly and the description becomes tedious. Here we therefore only give the diagrams which contribute to the NNLO of the partition function.

Graphs contributing to the NNLO are such that the corresponding $G_{/0}$ have exactly two independent multicolored cycles,
\begin{equation}
\ell_m(G^{\text{NNLO}}_{/0}) = 2.
\end{equation}
A reasoning similar to the NLO case of Section \ref{sec:2Pt} leads to families of graphs such as the following ones
\begin{equation} \label{VacuumNNLO}
%G^{\text{NNLO}} =
\begin{array}{c} \includegraphics[scale=.4]{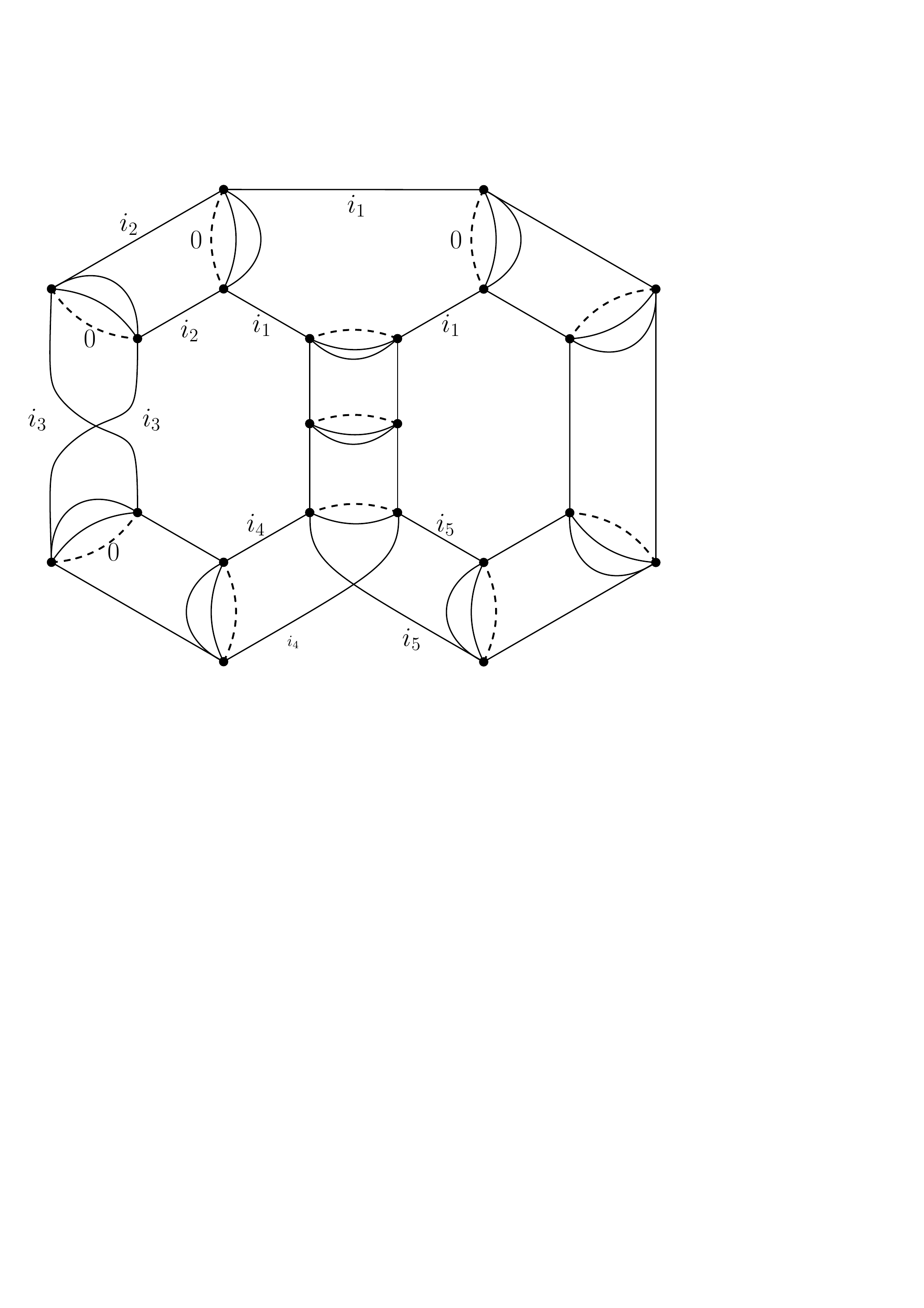} \end{array} \quad \text{and} \quad %\G'^{\text{NNLO}} = 
\begin{array}{c} \includegraphics[scale=.4]{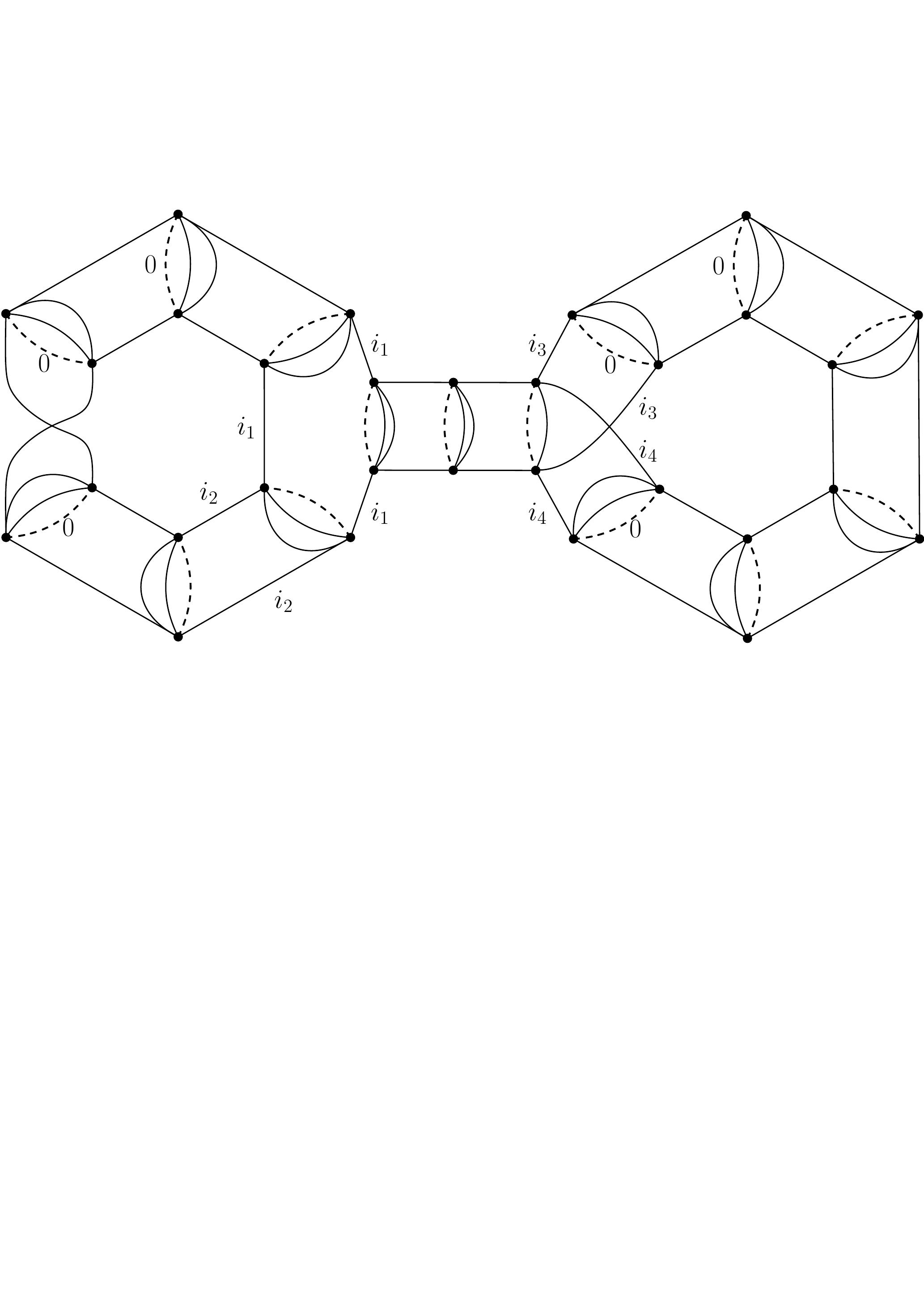} \end{array}
\end{equation}
A collection of diagrams is pictured below. To obtain all such graphs, one has to consider one crossing or no crossing in every loop in every possible way.
\begin{equation}
\label{SYK_Vacuum_NNLO1}
%G_1^{\text{NNLO}} = 
\begin{array}{c} \includegraphics[scale=.35]{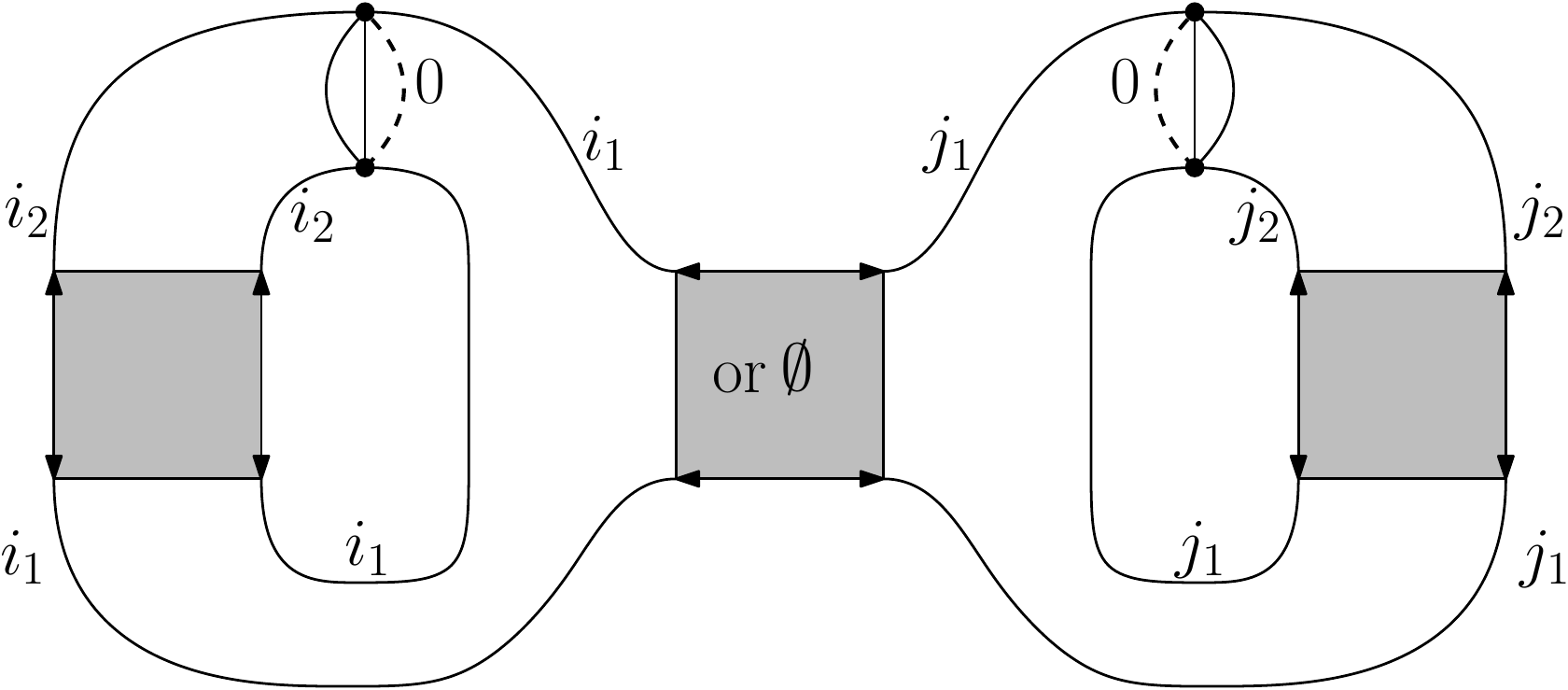} \end{array} \hspace{2.2cm}  %G_2^{\text{NNLO}} = 
\begin{array}{c}
\includegraphics[scale=.35]{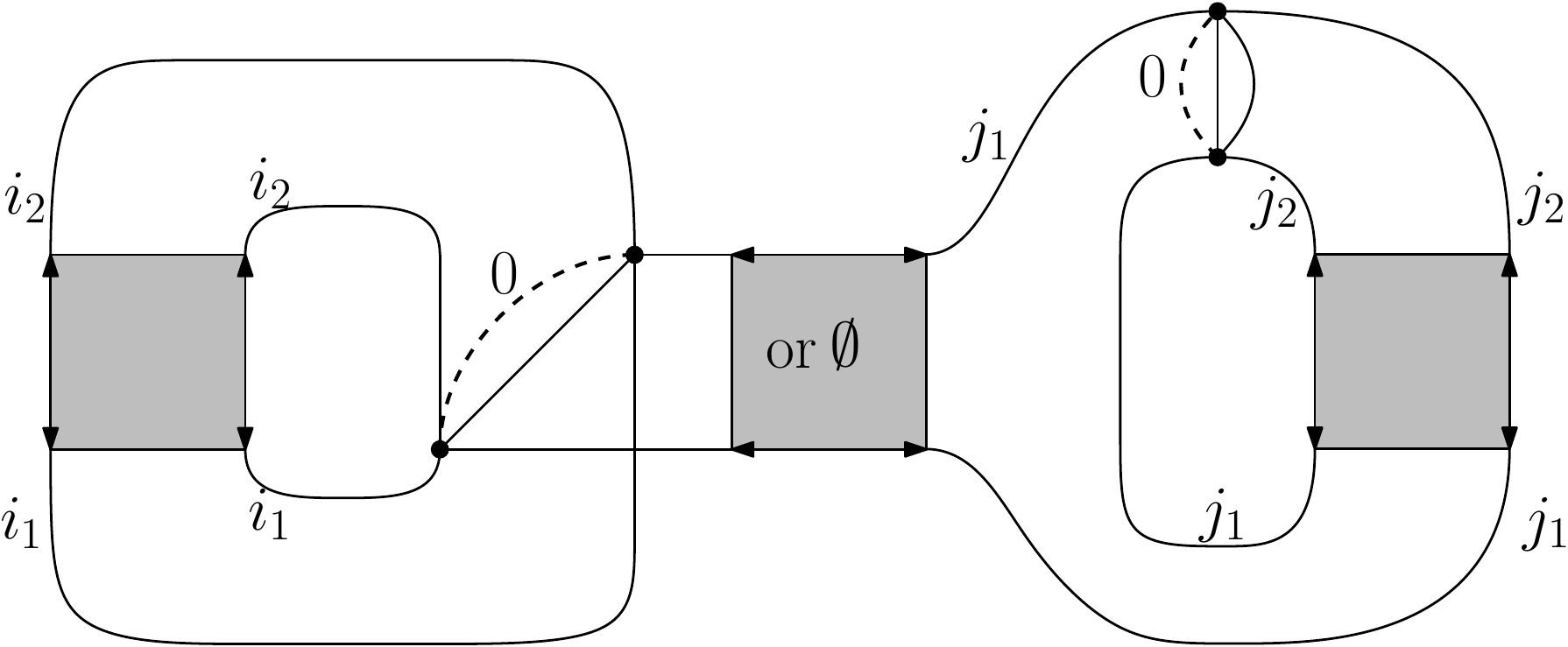} \end{array}
\end{equation}
\begin{equation}
\label{SYK_Vacuum_NNLO2}
%G_3^{\text{NNLO}} = 
\begin{array}{c} \includegraphics[scale=.4]{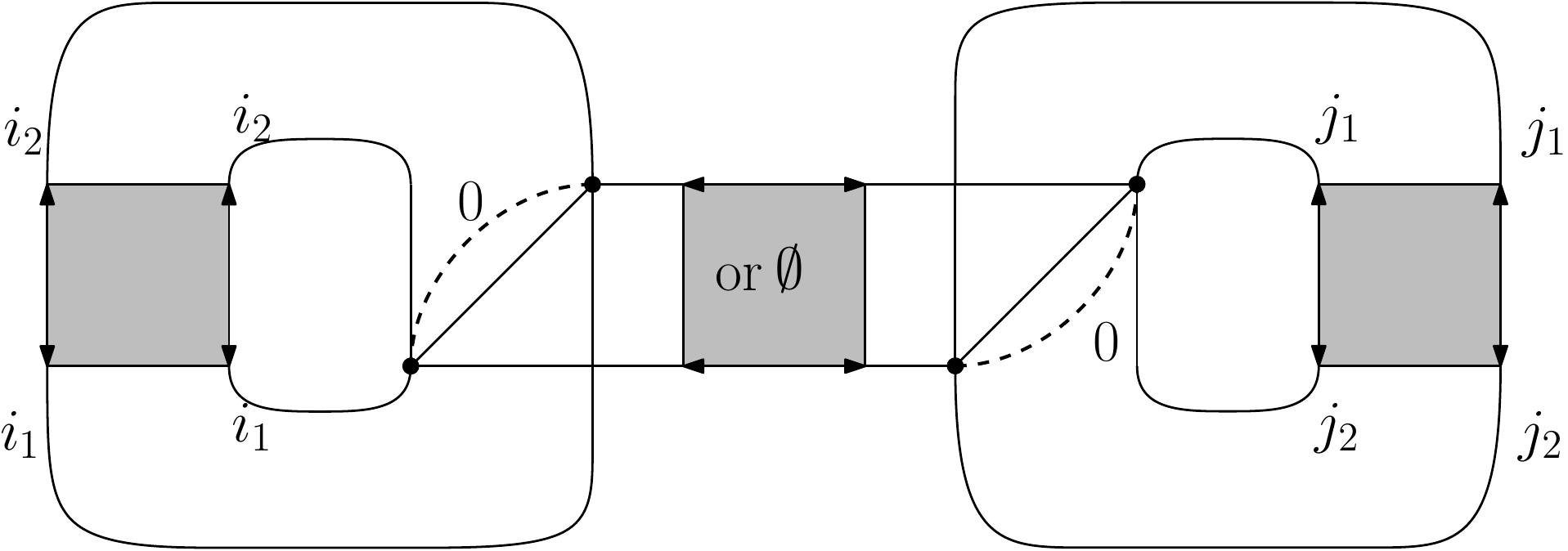} \end{array} \hspace{2cm} %G_4^{\text{NNLO}} = 
\begin{array}{c} \includegraphics[scale=.4]{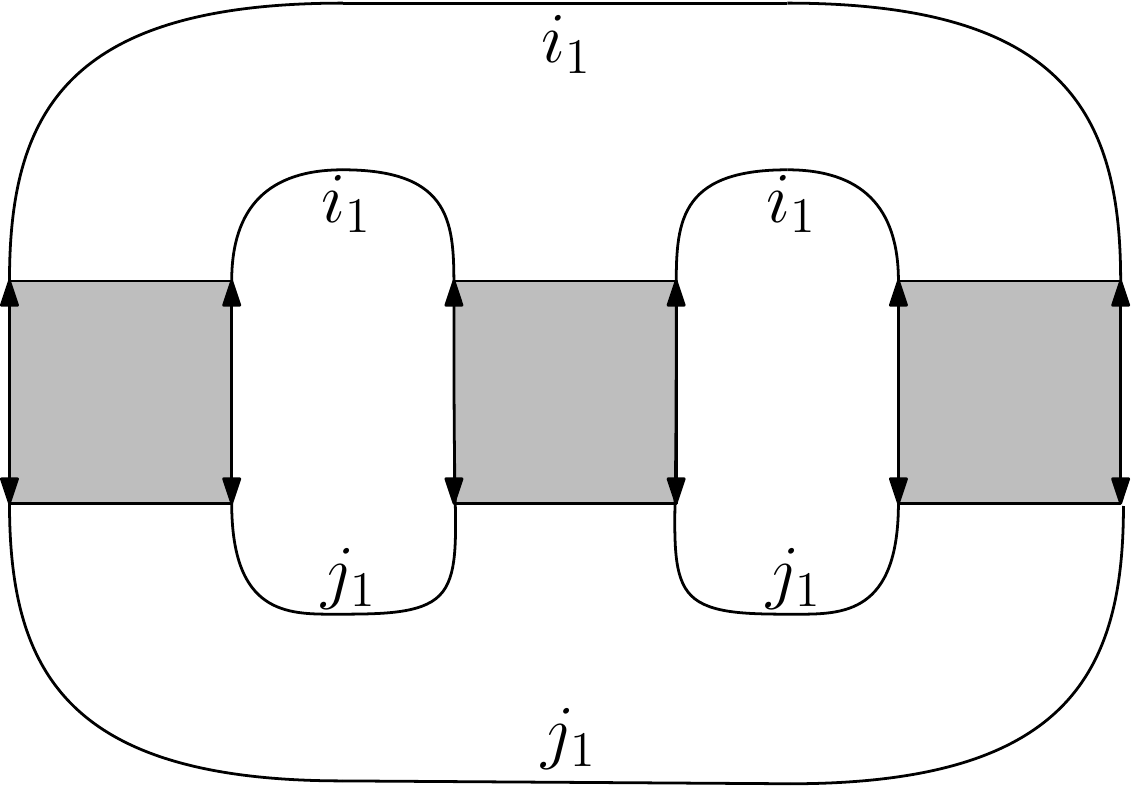} \end{array}
\end{equation}
\begin{equation}
\label{SYK_Vacuum_NNLO3}
%G_5^{\text{NNLO}} = 
\begin{array}{c} \includegraphics[scale=.4]{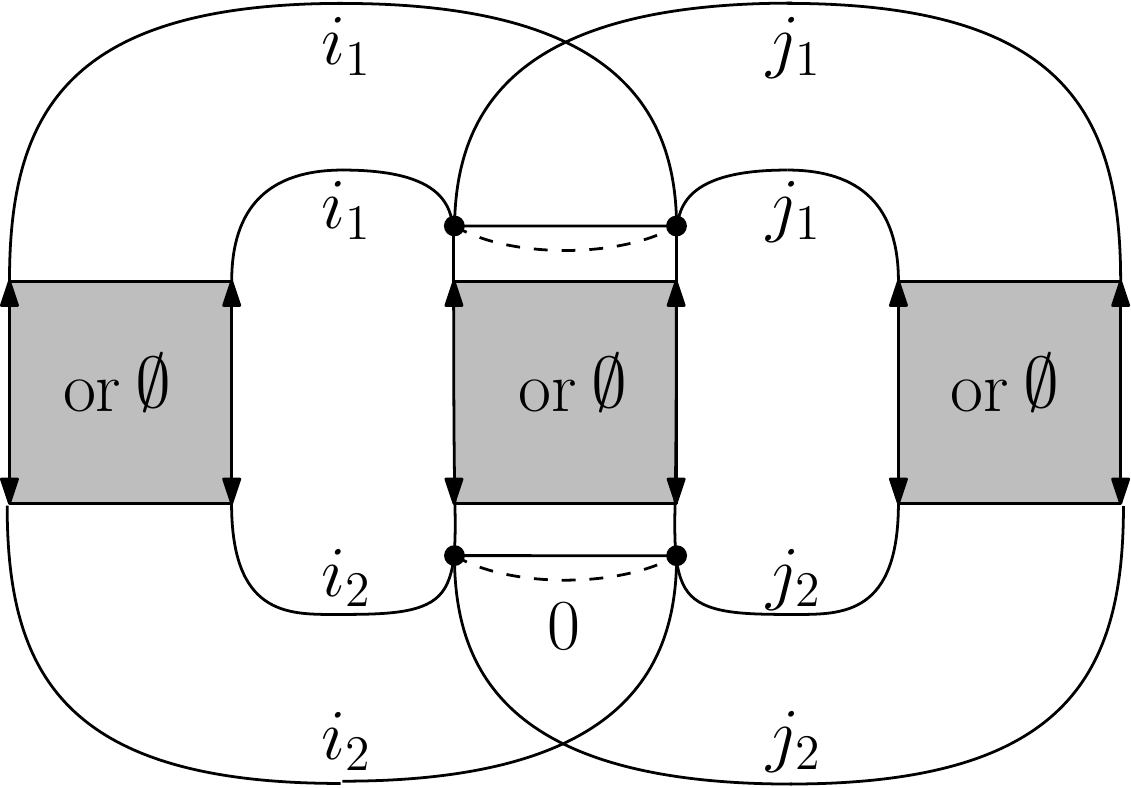} \end{array} \hspace{1cm}  %G_6^{\text{NNLO}} = 
\begin{array}{c} \includegraphics[scale=.4]{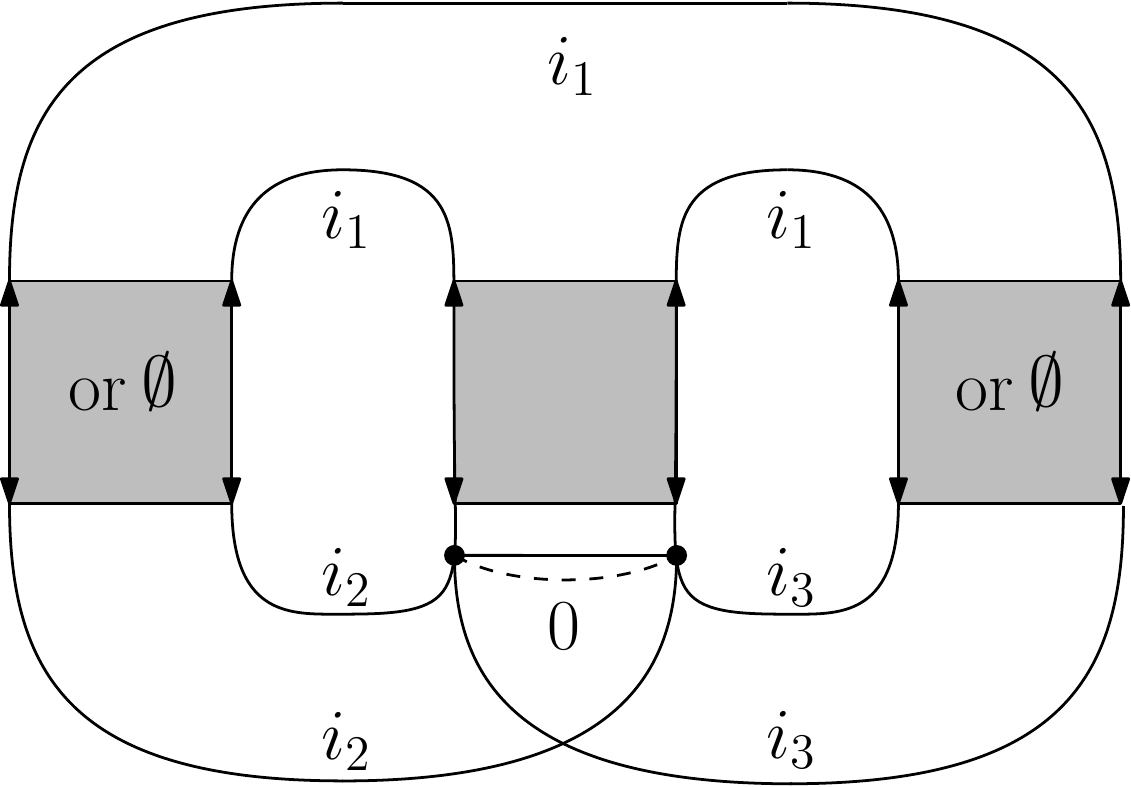} \end{array}
\hspace{1cm} 
\begin{array}{c} \includegraphics[scale=.4]{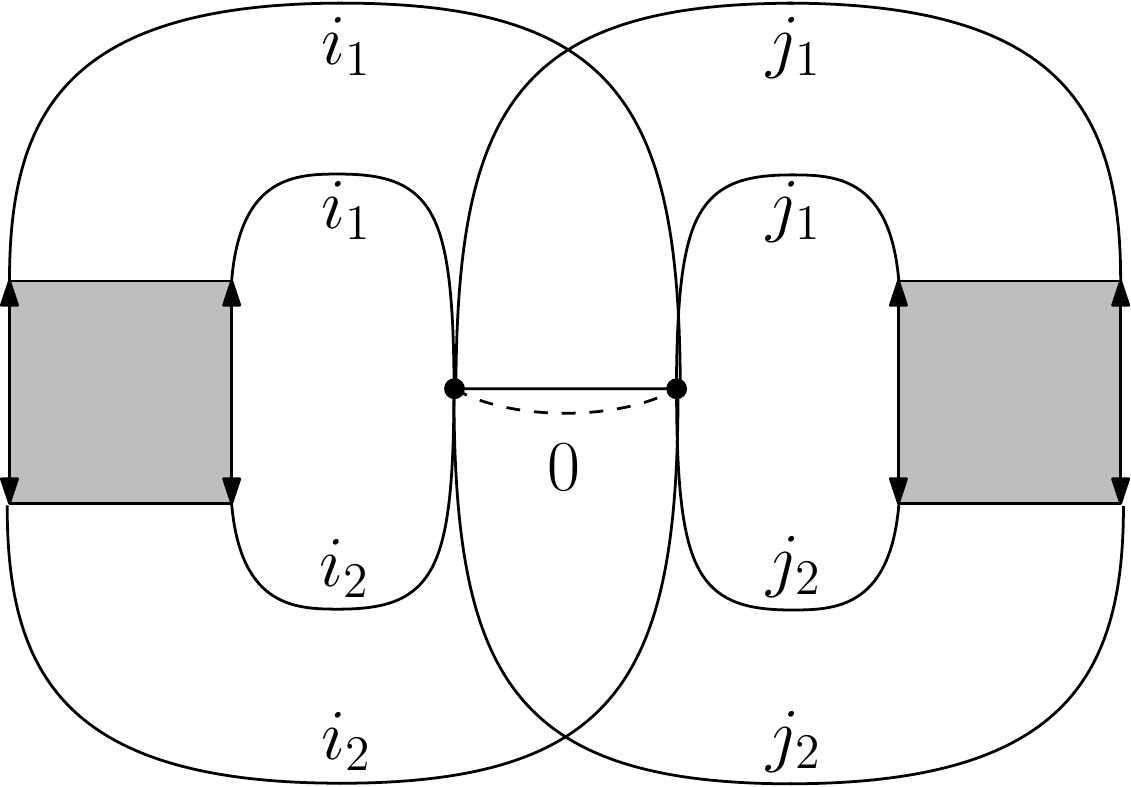} \end{array}
\end{equation}

%%%%%%%%%%%%%%%
\subsubsection{LO and NLO of 4-point functions}
%%%%%%%%%%%%%%%

One can check that the external edges of $4-$point graphs come in pairs where two edges of a pair share the same color. This gives two sets of $2-$point functions, depending on whether all external edges have the same color or not,
\begin{equation}
\langle \psi_i \psi_i \psi_i \psi_i \rangle \quad \text{for $i\in\{1, \dotsc, q\}$, and} \quad \langle \psi_i \psi_i \psi_j \psi_j\rangle \quad \text{for $i\neq j$},
\end{equation}
where $\psi_i, \psi_j$ are fermions of colors $i$ and $j$. Here we have dropped the time dependence since we are only concerned with the diagrammatics. 
%We have also left the vector indices of the fermions implicit, since they are just product of Kronecker deltas which follow easily from the graphs we will give.

There are no major diagrammatic differences between the two types of 4-point functions. We will thus treat both simultaneously.

$4-$point graphs can be obtained by cutting two edges in a vacuum graph. They can be two edges with the same color or two different colors in $\{1, \dotsc, q\}$. If $G$ is a vacuum graph, we denote $G_{e, e'}$ the 4-point graph obtained by cutting $e$ and $e'$. Obviously, if $G_4$ is a 4-point graph, there is a (possibly non-unique) way to glue the external lines two by two, creating two edges $e, e'$,  and to thus get a vacuum graph $G$ such that $G_4 = G_{e,e'}$.

Faces of $G$ and $G_{e, e'}$ are the same except for those which go along $e$ and $e'$. When $e$ and $e'$ have distinct colors, two different faces go along them in $G$ and are thus broken in $G_{e, e'}$. When $e$ and $e'$ have the same color, there can be one or two faces along them. Therefore, the weight received by $G_{e,e'}$ reads
\begin{equation}
w_N(G_{e,e'}) = N^{\chi_0(G_{e,e'})}, \qquad \text{with} \qquad \chi_0(G_{e,e'}) = \chi_0(G) - \eta(G_{e,e'}) \leq 1 - \eta(G_{e,e'}),
\end{equation}
where $\eta(G_{e,e'}) \in\{1, 2\}$ is the number of faces broken by cutting $e$ and $e'$ in $G$.

The classification thus seems a little intricate because of the two possible values for $\eta(G_{e,e'})$. We however claim that it is sufficient to only consider the graphs $G$ with edges $e, e'$ such that
\begin{equation}
\eta(G_{e,e'}) = 2.
\end{equation}

This is always the case when $e$ and $e'$ have different colors. Let us thus focus on the case where $e$ and $e'$ have the same color $i\in\{1, \dotsc, q\}$. Let $G_4$ be a 4-point graph with 4 external legs of color $i$. We claim that there is always one way to connect the external legs pairwise into two edges $e$ and $e'$ with two different faces along them. Denoting $G$ this vacuum graph, we thus interpret $G_4$ as the graph $G_{e,e'}$ with $\eta(G_{e,e'}) = 2$.

With the same notations, we have thus found that
\begin{equation}
\chi_0(G_{e,e'}) = \chi_0(G) - 2.% = -\ell_m(G_{/0}) - 1,
\end{equation}
The strategy is thus for both types of 4-point functions:
\begin{itemize}
\item use the classification of vacuum graphs which we have established: LO, NLO graphs, etc.
\item cut two edges of them such that $\eta(G_{e,e'})=2$.
\end{itemize}

In the large $N$ limit, cutting two edges in melonic graphs (such that $G_{e,e,'}$ remains connected, as well as $G_{e,e'}$ minus its edges of color 0) precisely leads to the chains introduced in \eqref{SYKChain} (one might add 2-point insertions on the external legs).

At NLO, one finds
\begin{equation} \label{NLO4Pt1A}
\begin{aligned}
&A_1 = \begin{array}{c} \includegraphics[scale=.6]{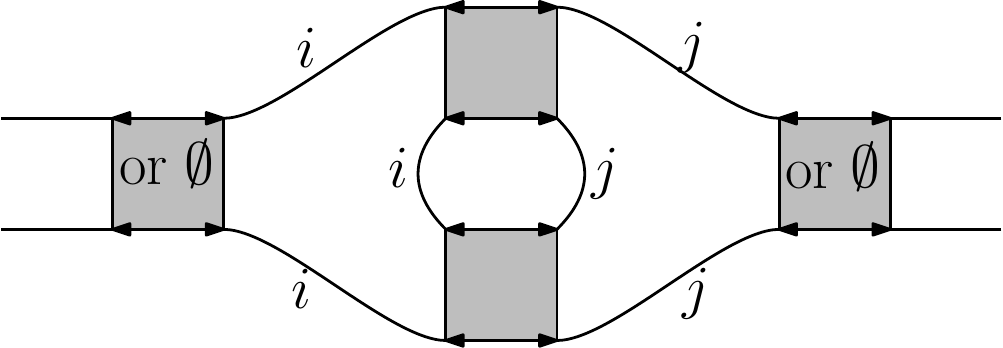} \end{array} & A_2 = \begin{array}{c} \includegraphics[scale=.6]{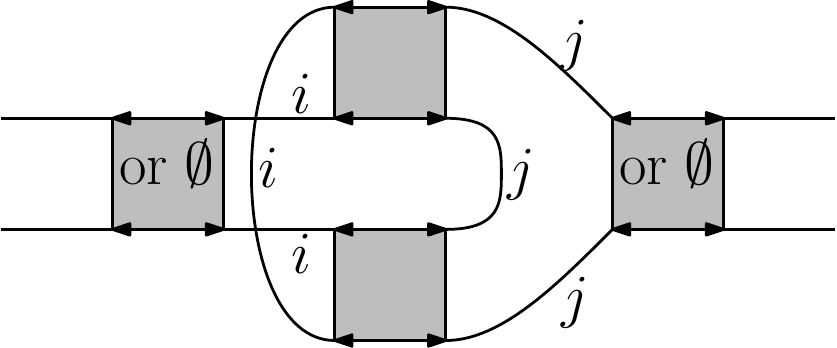} \end{array}\\
&A_3 = \begin{array}{c} \includegraphics[scale=.6]{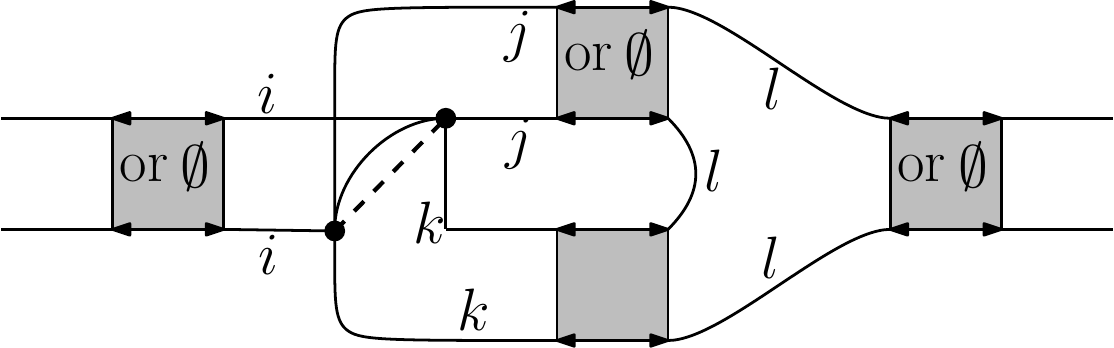} \end{array} & A_4 = \begin{array}{c} \includegraphics[scale=.6]{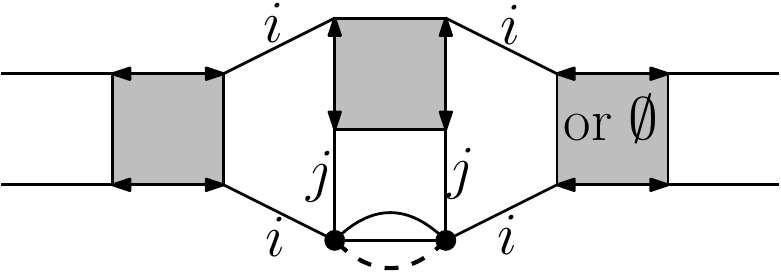} \end{array}
\end{aligned}
\end{equation}
by cutting an edge in $G_2^{\text{NLO}(1)}$ in \eqref{NLO2Pt1},
\begin{equation}
\label{NLO4Pt1B}
\begin{aligned}
&B_1 = \begin{array}{c} \includegraphics[scale=.6]{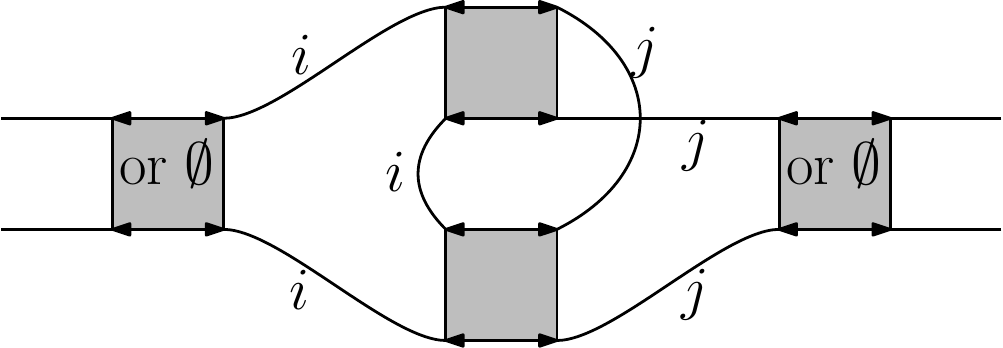} \end{array} & B_2 = \begin{array}{c} \includegraphics[scale=.6]{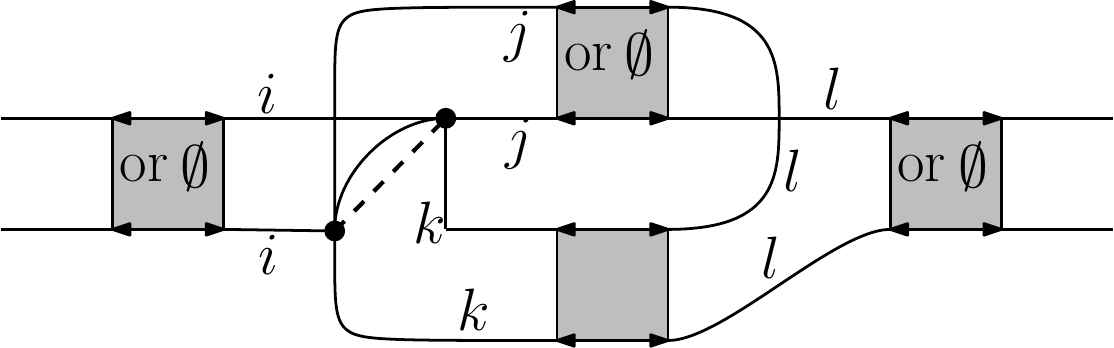} \end{array}\\
&B_3 = \begin{array}{c} \includegraphics[scale=.6]{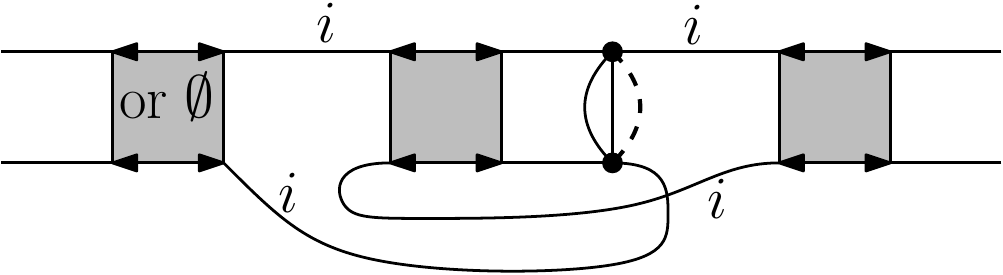} \end{array} & 
\end{aligned}
\end{equation}
by cutting an edge in $\tilde{G}_2^{\text{NLO}(1)}$ in \eqref{NLO2Pt1},
\begin{equation}
\label{NLO4Pt1C}
C_1 = \begin{array}{c} \includegraphics[scale=.6]{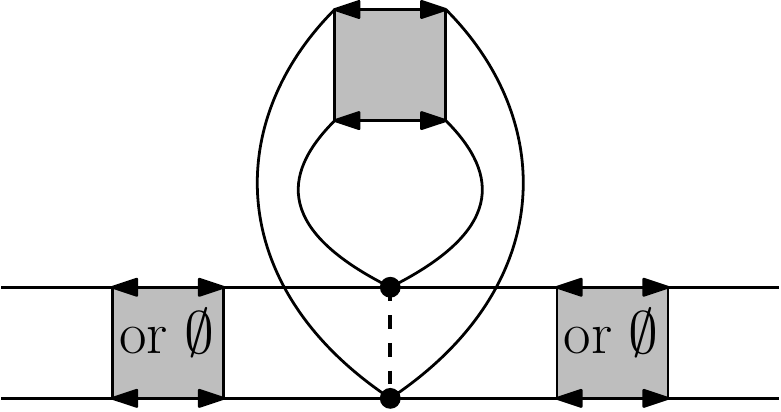} \end{array} \qquad C_2 = \begin{array}{c} \includegraphics[scale=.6]{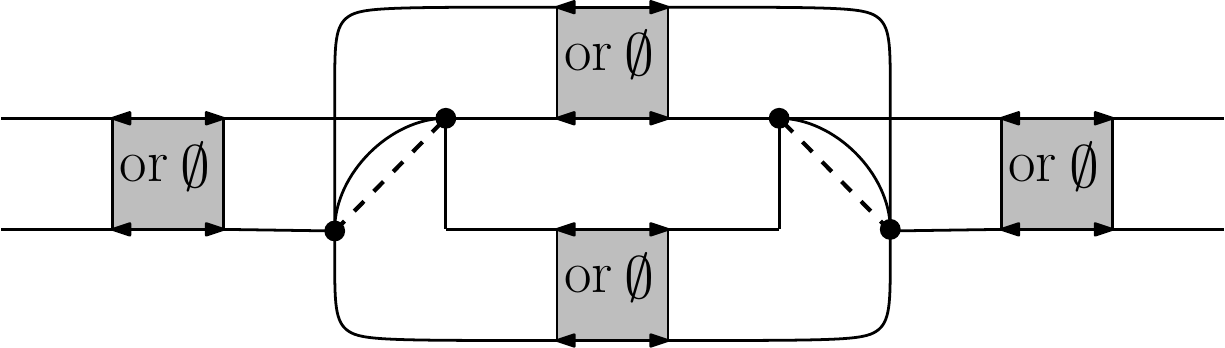} \end{array}
\end{equation}
by cutting an edge in $G_2^{\text{NLO}(2)}$ in \eqref{NLO2Pt2},
\begin{equation}
\label{NLO4Pt1D}
\begin{aligned}
&D_1 = \begin{array}{c} \includegraphics[scale=.6]{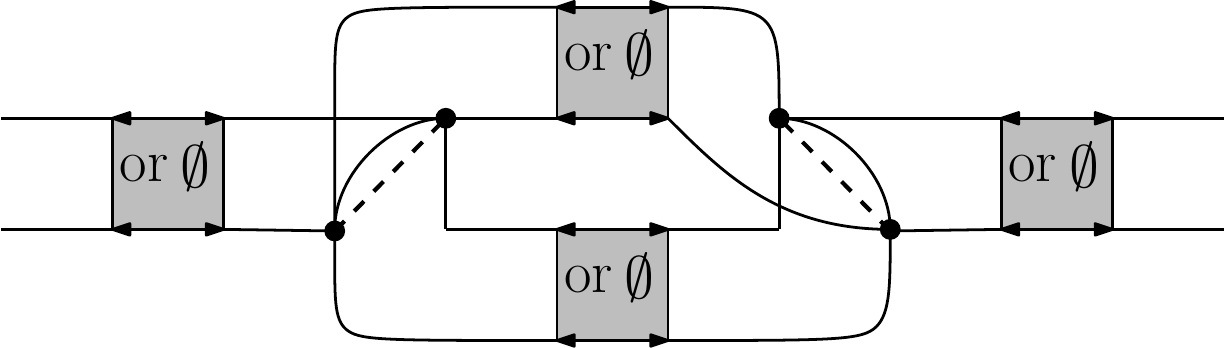} \end{array} & D_2 = \begin{array}{c} \includegraphics[scale=.6]{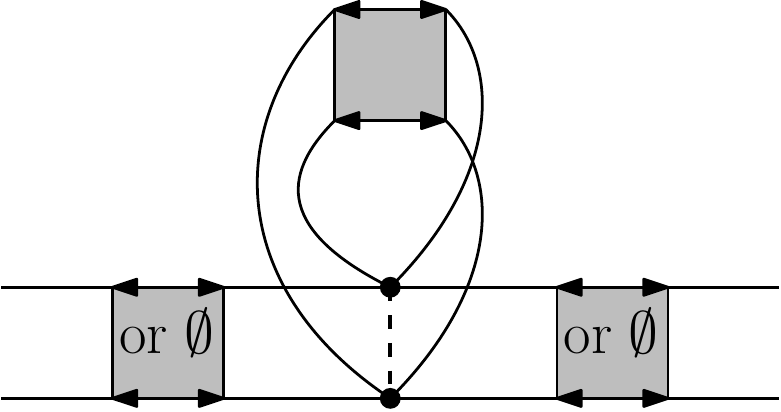} \end{array}
\end{aligned}
\end{equation}
by cutting an edge in $\tilde{G}_2^{\text{NLO}(2)}$ in \eqref{NLO2Pt2}, and finally the two following families
\begin{equation}
\label{NLO4Pt1E}
\begin{array}{c} \includegraphics[scale=.6]{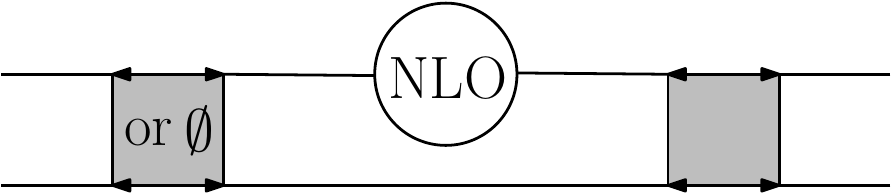} \end{array} \qquad
\begin{array}{c} \includegraphics[scale=.6]{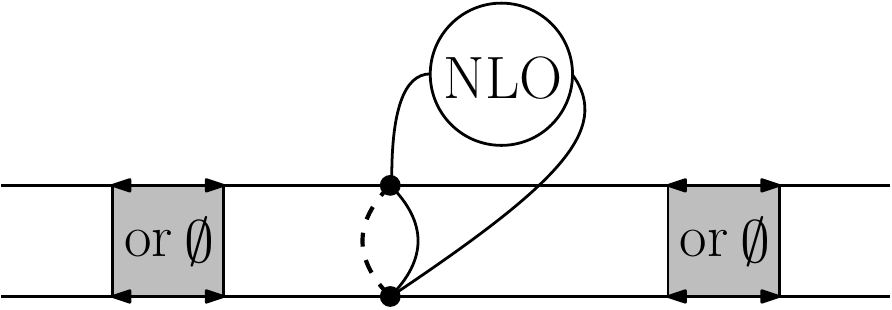} \end{array} 
\end{equation}
obtained by performing a $2-$point insertion of the NLO $2-$point function into the LO $4-$point chains.

The above description avoids redundancies. Notice that the colors are important. For instance, by specializing the middle chains in $A_2$ and $A_4$ to have a single pair of vertices, the same graph is obtained but with different colorings.

\subsection{Non-Gaussian disorder average in the complex model}
\label{sec:nongaus}

A possible generalization of the SYK model is achieved if we consider a non-Gaussian disorder; the quenched disorder for the couplings is given by a non-Gaussian distribution. Consider the complex version of the SYK model containing $q$ flavors with the non-Gaussian disorder, whose action is given by \eqref{act:noir}. Following \cite{TanasaKrajewskiPascalieLaudonio}, one can derive the effective action for this model and show that the effect of this non-Gaussian averaging is a modification of the variance of the Gaussian distribution of couplings at leading order in $N$. Still from \cite{TanasaKrajewskiPascalieLaudonio}, it is possible to prove that the leading order Feynman diagrams are those given by the quadratic term of the distribution (Gaussian universality). This Gaussian universality result for the colored tensor model was initially proved in \cite{universality} and was also exploited in \cite{Bonzom}, in a condensed matter physics setting, to identify an infinite universality class of infinite-range $p-$spin glasses with non-Gaussian correlated quenched distributions.
 
In order to obtain these results, we first need to average the partition function over the non-Gaussian disorder. The most convenient way to perform this is through the use of  replicas. We thus add an extra replica index $r=1,\dots,n$ to the fermions. One has: 

\begin{align}
\langle \log Z(j)\rangle_{j}=\lim_{n\rightarrow 0}\frac{\langle Z^{n}(j)\rangle_{j}-1}{n},
\end{align}
with

\begin{align}
Z^{n}(j)=\int\prod_{1\leq r\leq n}[d\psi_{r}][d\overline{\psi}_{r}]\exp\sum_{r}S_{j}(\psi_{r},\overline{\psi}_{r}) ,
\end{align}
where $S_{j}(\psi,\overline{\psi})$ is given by \eqref{act:noir}. The angle brackets stand for the averaging over $j$, which is performed with a non-Gaussian weight of the type

\begin{align}
\langle Z^{n}(j)\rangle_{j}=\frac{\int dj d\overline{j}\, Z^{n}(j)\exp \big[-\big[\frac{N^{q-1}}{\sigma^{2}}j\overline{j}+V_{N}(j,\overline{j})\big]\big]}{\int djd\overline{j} \exp\big[-\big[\frac{N^{q-1}}{\sigma^{2}}j\overline{j}+V_{N}(j,\overline{j})\big]\big]}.
\end{align}
We further impose that the potential $V_N$ is invariant under independent unitary transformations: 

\begin{align}
j_{i_{1},\dots,i_{q}}\rightarrow \sum_{j_{1},\dots,j_{q}}U^{1}_{i_{1}j_{1}}\cdots U^{q}_{i_{q}j_{q}}j_{j_{1},\dots,j_{q}, }\qquad
\overline{j}_{i_{1},\dots,i_{q}}\rightarrow \sum_{j_{1},\dots,j_{q}}\overline{U}^{1}_{i_{1}j_{1}}\cdots \overline{U}^{q}_{i_{q}j_{q}}\overline{j}_{j_{1},\dots,j_{q}}.\label{unitary}
\end{align}
Assuming that the potential $V_N$ is a polynomial (or an analytic function) in the couplings $j$ and $\overline{j}$, this invariance imposes that the potential can be expanded over non necessarily connected graphs. These graphs are made up by black and white vertices of valence $q$, whose edges connect only black to white vertices (bipartite graphs) and are labeled by a color $a=1,\ldots,q$ in such a way that, at each vertex, the $q$ incident edges carry distinct colors (we thus have edge-colored graphs). The construction of such graphs has already been explained in Sec. \ref{sec:defcolored} and each one can be denoted by a particular contraction of the tensors $j$ and $\overline{j}$. The contraction of their indices means that each white vertex carries a tensor $j$, each black vertex a tensor $\overline{j}$ and that the indices have to be contracted by identifying two indices on both sides of an edge, the place of the index in the tensor being defined by the color of the edge denoted by $c(e)$. We will refer to the graph $G$, using the shorthand $\langle j,\overline{j}\rangle_{ G}$, which is given by

\begin{align}
\langle j,\overline{j}\rangle_{ G}=\sum_{1\leq i_{v,a},\dots,i_{\overline{v},a}\leq N}
\prod_{\text{white}\atop\text{vertices }v}j_{i_{v,1},\dots,i_{v,q}} 
\prod_{\text{black}\atop\text{vertices }\overline{v}}\overline{j}_{\overline{i}_{\overline{v},1},\dots,\overline{i}_{\overline{v},q}} 
\prod_{\text{edges }\atop e=(v,\overline{v})}\delta_{i_{v,c(e)},i_{\overline{v},c(e)}}.\label{graphs}
\end{align}
The most general form of the potential $V_N$ is then expanded over these graphs as:

\begin{align}
V_{N}(j,\overline{j})=\sum_{\text{graph $ G$}}\lambda_{ G}\frac{N^{q-k( G)}}
{\text{Sym($ G$)}}
\langle j,\overline{j}\rangle_{ G} .
\end{align}
In this expression, $\lambda_{ G}$ is a real number, $k( G)$ is the number of connected components of $ G$ and Sym($ G$) its symmetry factor. 
The Gaussian term  corresponds to a dipole graph (a white vertex and a black vertex, connected by $q$ lines) and reads

\begin{align}
\frac{N^{q-1}}{\sigma^{2}}j\overline{j}=
\frac{N^{q-1}}{\sigma^{2}}\sum_{1\leq i_{1},\dots,i_{q}\leq N}
j_{ i_{1},\dots,i_{q}}\overline{j}_{ i_{1},\dots,i_{q}}
\end{align}
Introducing the pair of complex conjugate tensors $K$ and $\overline{K}$ defined by

\begin{align}
K_{ i_{1},\dots,i_{q}}=\text{i}^{\frac q2}\sum_{r}\int dt 
\psi^{1}_{i_{1},r}\cdots \psi^{q}_{i_{q};r}
\qquad
\overline{K}_{ i_{1},\dots,i_{q}}=\text{i}^{\frac q2}\sum_{r}\int dt 
\overline{\psi}^{1}_{i_{1},r}\cdots \overline{\psi}^{q}_{i_{q};r},\label{definitionK}
\end{align}
the averaged partition function reads 

\begin{align}
\label{average}
\langle Z^{n}(j)\rangle_{G}=\frac{\int[d\psi][d\overline{\psi}]
\exp\big[-\int dt \sum_{a,i_{a}}\overline{\psi}^{a}_{i_{a}}\partial_{t}\psi^{a}_{i_{a}}\big]\,\int dj d\overline{j}
\exp\big[-\big[\frac{N^{q-1}}{\sigma^{2}}j\overline{j}+V_{N}(j,\overline{j})+j\overline{K}+\overline{j}K\big]
\big]}{\int djd\overline{j} \exp \big[-\big[\frac{N^{q-1}}{\sigma^{2}}j\overline{j}+V_{N}(j,\overline{j})\big]\big]}.
\end{align}
In order to study the large $N$ limit of the average \eqref{average}, we introduce the background fields $L=-\frac{\sigma^{2}}{N^{q-1}}K$ and $\overline{L}=-\frac{\sigma^{2}}{N^{q-1}}\overline{K}$. Let us shift the variables $j$ and $\overline{j}$ by the background fields $L$ and $\overline{L}$. The numerator in the integral \eqref{average} reads

\begin{align}
\exp\bigg[-\frac{\sigma^{2}}{N^{q-1}}K\overline{K}\bigg]
\int dj d\overline{j}
\exp-\bigg[\frac{N^{q-1}}{\sigma^{2}}j\overline{j}+V_{N}\Big(j-\frac{\sigma^{2}}{N^{q-1}}K,\overline{j}-\frac{\sigma^{2}}{N^{q-1}}\overline{K}\Big)\bigg] 
\end{align}
and the effective potential in the shifted variables is

\begin{align}
V_{N}(s,L,\overline{L})=-\log
\int dj d\overline{j}
\exp-\bigg[\frac{N^{q-1}}{s}j\overline{j}+V_{N}\Big(j+L,\overline{j}+\overline{L}\Big)\bigg]\quad+N^{q}\log\frac{\pi s}{N^{q-1}}\label{effectivepotential}
\end{align}
In this framework, $s$ is a parameter that interpolates between the integral we have to compute, at $s= \sigma^{2}$ (up to a trivial multiplicative constant) and the potential we started with at $s=0$ (no integration and $j=\overline{j}=0$). 
The inclusion of the constant ensures that the effective potential remains zero when we start with a vanishing potential. %In other words
This comes to:

\begin{align}
\int dj d\overline{j}
\exp\bigg[-\bigg[\frac{N^{q-1}}{\sigma^{2}}j\overline{j}+V_{N}\Big(j-\frac{\sigma^{2}}{N^{q-1}}K,\overline{j}-\frac{\sigma^{2}}{N^{q-1}}\overline{K}\Big)\bigg]\bigg]\nonumber\\
=
\bigg(\frac{N^{q-1}}{\pi s}\bigg)^{N^{q}}
\exp \bigg[-V_{N}\bigg(s=\sigma^{2},L=-\frac{\sigma^{2}}{N^{q-1}}K,\overline{L}=-\frac{\sigma^{2}}{N^{q-1}}\overline{K}\bigg)\bigg].
\end{align}
After having performed the average over the non-Gaussian disorder and derived the effective potential, we use a Polchinski-like flow equation to show the Gaussian universality following the approach proposed in \cite{Krajewski} for tensor models and group field theory (see also \cite{Krajewski2}, \cite{Krajewski3} and \cite{Krajewski4}).

Using standard QFT manipulations (see for example, the book \cite{book-ZJ}), one can show that the effective potential $V_{N}(s,L,\overline{L})$ in eq. \eqref{effectivepotential} obeys the following differential equation:

\begin{align}
\frac{\partial V}{\partial s}=\frac{1}{N^{q-1}}\sum_{1\leq i_{1},\dots,i_{q}\leq N}\bigg(
\frac{\partial ^{2}V}{\partial L_{i_{1},\dots,i_{q}}\partial \overline{L}_{i_{1},\dots,i_{q}}}-
\frac{\partial V}{\partial L_{i_{1},\dots,i_{q}}}\frac{\partial V}{\partial \overline{L}_{i_{1},\dots,i_{q}}}
\bigg)\label{ERGE}
\end{align}
One can represent this equation in a graphical way as shown in Fig. \ref{fig:polchinski}. The first term on the RHS corresponds to an edge closing a loop in the graph and the second term in the RHS corresponds to a bridge (also known as 1PR) edge or $q$-cut (see Def. \ref{def:2Cut} of a $2$-cut, the generalization to the $q$-cut is trivial).
\begin{figure}
    \centering
    \includegraphics{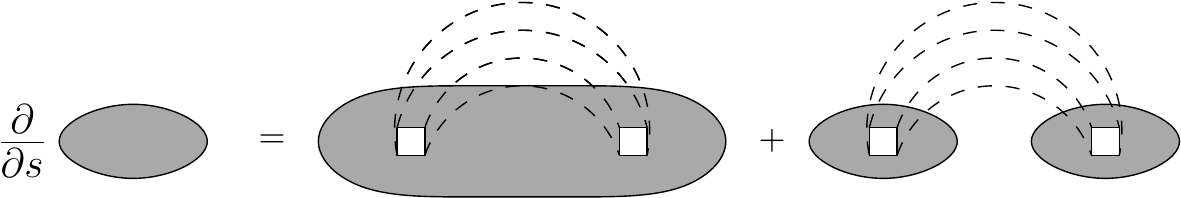}
    \caption{Graphical representation of equation \eqref{ERGE} for $q=4$.}
    \label{fig:polchinski}
\end{figure}
This equation is formally a Polchinski-like equation \cite{Polchinski84}, albeit there are no short distance degrees of freedom over which we integrate. In our context it simply describes a partial integration with a weight $s$ and will be used to control the large $N$ limit of the effective potential.

Since the effective potential is also invariant under the unitary transformations defined in eq. \eqref{unitary}, it may also be expanded over graphs as in \eqref{graphs},
\begin{align}
V_{N}(s,L,\overline{L})=
\sum_{\text{graph $ G$}}\lambda_{ G}(s)\frac{N^{q-k(q)}}
{\text{Sym($ G$)}}\langle L,\overline{L}\rangle_{ G},
\label{effectivegraphs}
\end{align}
with $s$ dependent couplings $\lambda_{ G}(s)$. Inserting this graphical expansion in the differential equation \eqref{ERGE}, we obtain a system of differential equations for the couplings,
\begin{align}
\frac{d\lambda_{ G}}{ds}=\sum_{ G'/(\overline{v}v)= G}
N^{k( G)-k( G')+e(v,\overline{v})-q+1}\,\lambda_{ G'}-
\sum_{( G'\cup G'') /(\overline{v}v)= G}\lambda_{ G'}\,\lambda_{ G''}
\label{system}
\end{align}
A derivation of the potential $V_N$ with respect to $L_{i_{1},\dots,i_{q}}$ (resp. $\overline{L}_{i_{1},\dots,i_{q}}$) removes a white vertex (resp. a black vertex). Then, the summation over the indices  in $i_{1},\dots,i_{q}$ in \eqref{ERGE} reconnects the edges, respecting the colors.  

In the first term on the RHS of \eqref{ERGE}, given a graph $ G$ in the expansion of the LHS, we have to sum over all graphs $ G'$ and pairs of a white vertex $v$ and a black vertex $\overline{v}$ in $ G'$ such that the graph $ G'/(\overline{v}v)$ obtained after reconnecting the edges (discarding the connected components made of single lines) is equal to $ G$ - see Fig. \ref{remove} and Fig. \ref{remove2}.

\begin{figure}[ht]
\centering
\includegraphics[scale=0.5]{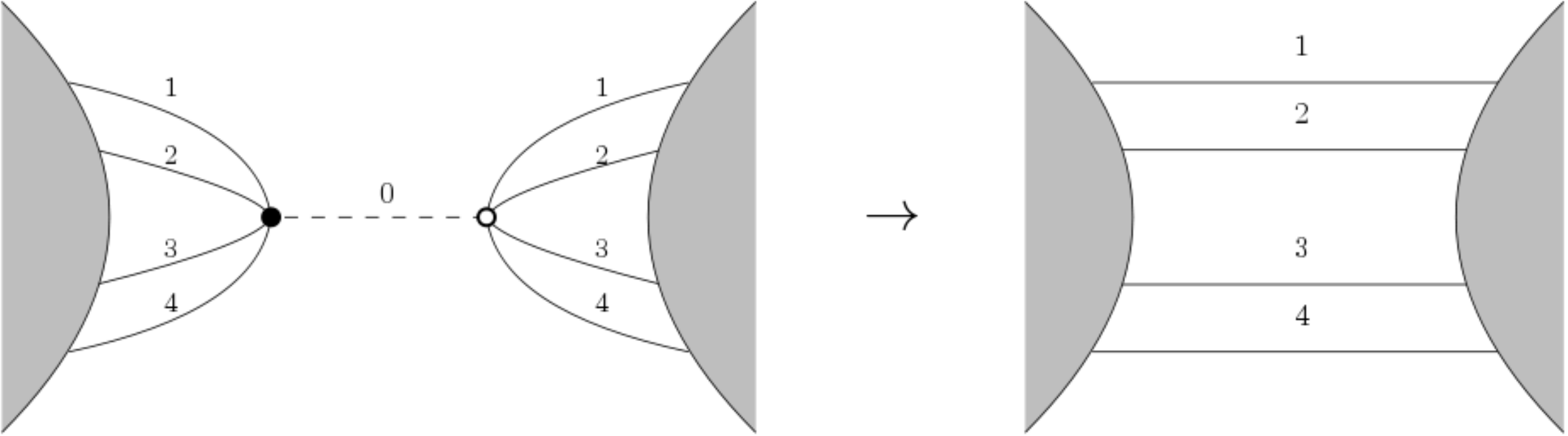}
\caption{Removal of a white and a black vertex and re-connection of the edges.}
\label{remove}
\end{figure}

\begin{figure}[ht]
\centering
\includegraphics[scale=0.5]{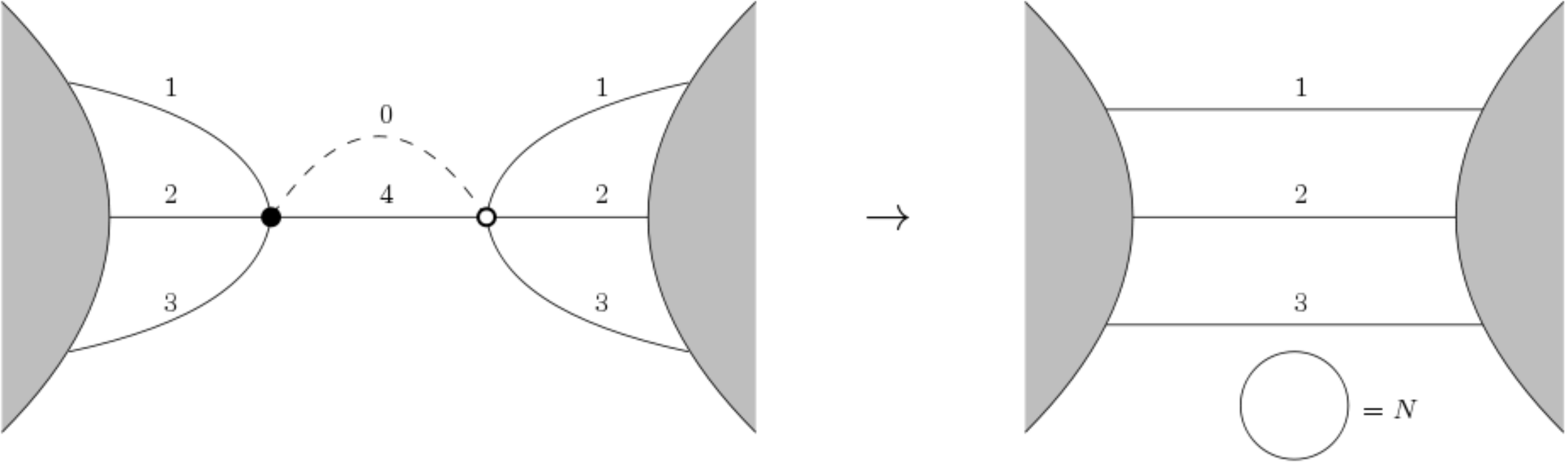}
\caption{Removal of a white and a black vertex and re-connection of the edges creating a loop.}
\label{remove2}
\end{figure}

The number $e(v,\overline{v})$ is the number of edges directly connecting $v$ and $\overline{v}$ in $ G$. After summation over the indices, each of these lines yields a power of $N$, which gives the factor  of $N^{e(v,\overline{v})}$. 

The operation of removing two vertices and reconnecting the edges can at most increase the number of connected components (including the graphs made of single closed lines) by $q-1$, so that we always have $k( G)-k( G')+e(v,\overline{v})-q+1\leq 0$. We obtain the equality if and only if $ G'$ is a melonic graph. 
Therefore, in the large $N$ limit, only melonic graphs survive in the first term on the RHS of \eqref{system} (this is further proof of the melonic dominance in the SYK model already shown in Theorem \ref{thm}).

In the second term, we sum over graphs $ G'$ and  white vertices $v\in G'$ and graphs $ G''$ and black vertices $\overline{v}\in G''$, with the condition that the graph obtained after removing the vertices and  reconnecting the lines $( G'\cup G'')/(\overline{v}v)$ is equal to $ G$. In that case, the number of connected components necessarily diminishes by $1$, so that all powers of $N$ cancel. \\
The crucial point in the system \eqref{system} is that only negative (or null) powers of $N$ appear. It can be written as
\begin{align}
\frac{d\lambda_{ G}}{ds}=
\beta_{0}\big(\left\{\lambda_{ G}\right\}\big)
+\frac{1}{N}\beta_{1}\big(\left\{\lambda_{ G}\right\}\big)
+\dots
\end{align}
As a consequence, if $\lambda_{ G}(s=0)$ is bounded, then $\lambda_{ G}(s)$ is also bounded for all $s$ (i.e. it does not contain positive powers of $N$).

Let us now substitute $L=-\frac{\sigma^{2}}{N^{q-1}}K$ and $\overline{L}=-\frac{\sigma^{2}}{N^{q-1}}\overline{K}$ in the expansion of the effective potential
\eqref{graphs},
\begin{align}
\label{final}
V_{N}\bigg(s=\sigma^{2},L=-\frac{\sigma^{2}}{N^{q-1}}K,\overline{L}=-\frac{\sigma^{2}}{N^{q-1}}\overline{K}\bigg)=
\sum_{\text{graph $ G$}}\lambda_{ G}(\sigma^{2})\frac{(-\sigma^{2})^{v( G)}N^{q-k(q)-(q-1)v( G)}}
{\text{Sym($ G$)}}\langle K,\overline{K}\rangle_{ G}.
\end{align}
Here $v( G)$ is the number of vertices of $ G$. 
The exponent of $N$ can be rewritten as $(q-1)(1-v( G))+1-k( G)$. It has it maximal value for $v( G)=2$ and $k( G)=1$, which corresponds to the dipole graph. 
This is a re-expression the Gaussian universality property of random tensors.

Taking into account the non-Gaussian quenched disorder, we now derive the effective action for the bilocal invariants

\begin{align}
\widetilde{G}_{r,r'}^{a}(t,t')=\frac{1}{N}\sum_{i}\psi^{a}_{i,r}(t_{1})\overline{\psi}^{a}_{i,r'}(t').\label{bilocal}
\end{align}
Note that these invariants carry one flavour label $a$ and two replica indices $r,r'$. \\
To this end, let us come back to the partition function \eqref{average}. We then express the result of the average over $j$ and $\overline{j}$ as a sum over graphs $ G$ using the expansion of the effective potential \eqref{final} and replacing the tensors $K$ and $\overline{K}$ in terms of the fermions $\psi$ and $\overline{\psi}$ (see eq. \eqref{definitionK}). \\
Then, each graph $ G$ involves the combination

\begin{align}
\langle K,\overline{K}\rangle_{ G}=\sum_{1\leq i_{v,a},\dots,i_{\overline{v},a}\leq N}
&\prod_{\text{white}\atop\text{vertices }v}\sum_{r_{v}}\int dt_{v}
\psi^{1}_{i_{v,1},r_{v}}(t_{v})\cdots\psi^{q}_{i_{v,q},r_{v}}(t_{v})\nonumber \\
&\prod_{\text{black}\atop\text{vertices }\overline{v}}
\sum_{r_{\overline{v}}}\int dt_{\overline{v}}
\overline{\psi}^{1}_{\overline{i}_{\overline{v},1},r_{\overline{v}}}\cdots
\overline{\psi}^{q}_{\overline{i}_{\overline{v},q},\overline{v}} (t_{\overline{v}})\prod_{\text{edges }\atop e=(v,\overline{v})}\delta_{i_{v,c(e)},i_{\overline{v},c(e)}}.
\label{graphs2}
\end{align}
After introducing the Lagrange multiplier $\widetilde{\Sigma}$ to enforce the constraint \eqref{bilocal} and assuming a replica symmetric saddle-point, the effective action of our model writes:
\begin{align}
   \frac{\mathcal{S}_{eff}[\mathrm{G},\Sigma]}{N} = &- \sum \limits_{f=1}^{q} \log{\det{\big(\delta(t_1-t_2)\partial_{t}-\widetilde{\Sigma}_f(t_1,t_2) \big)}}
    +\int \mathrm{d}\mathbf{t} \sum \limits_{f=1}^{4}\widetilde{\Sigma}_f(\mathbf{t}) \widetilde{G}_f(\mathbf{t}) \\
    &-\sum_{ G} N^{-(v( G)-2)(q/2-1)+1-k( G)}\mu_{ G}(\sigma^2,\{\lambda_{ G'}\}) \langle \widetilde{G} \rangle_{ G},
\end{align}

The term $\langle \widetilde{G} \rangle_{ G}$ associated to a graph $ G$ is constructed as follows:
\begin{figure}
    \centering
    \includegraphics{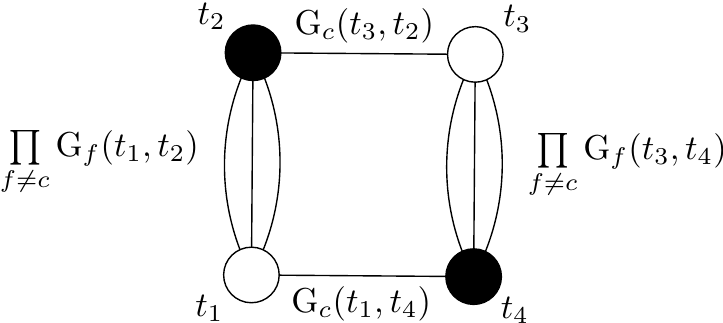}
    \caption{Graphical representation of the term $\langle\mathrm{G}\rangle_{ G}$ for the quartic melonic graph for $q=4$.}
    \label{fig:effpillow}
\end{figure}

\begin{itemize}
\item to each vertex associate a real variable $t_{v}$;
\item to an edge of colour $c$ joining $v$ to $v'$ associate $\widetilde{G}_{c}(t_{v},t_{v'})$;
\item multiply all edge contributions and integrate over vertex variables. 
\end{itemize}

We then add up these contributions, with a weight $\lambda_{ G}$ and a power of $N$ given by
\begin{align}
N^{q-k( G)}\times (N^{-(q-1)})^{v( G)} \times N^{e( G)}=N \times N^{-(v( G)-2)(q/2-1)+1-k( G)} ,
\end{align}
with $e( G)$ the number of edges of $ G$, obeying $2e( G)=qv( G)$. \\
At leading order in $N$, only the Gaussian terms survives ({\it i.e.} the graph $ G$ with $v( G)=2$ and $ k( G)=1$), except for the matrix model case ($q=2$). In this case, all terms corresponding to connected graphs survive. 
Let us emphasize that the variance of the Gaussian distribution of coupling  is thus modified, as a consequence of the non-Gaussian averaging of our model. Remarkably, for $q>2$, this is the only modification at leading order in $N$.

The actual value of the covariance (which we denote by $\sigma'$) induced by non Gaussian disorder is most easily computed using a Schwinger-Dyson equation, see \cite{Bonzom:2012cu}. In our context, the latter arises from

\begin{align}
\sum_{i_{1}\dots i_{q}}
\int dj d\overline{j}\, \frac{\partial}{\partial \overline{j}_{i_{1}\dots i_{q}}} 
\bigg\{
{j}_{i_{1}\dots i_{q}}
\exp-\Big[\frac{N^{q-1}}{\sigma^{2}}j\overline{j}+V_{N}(j,\overline{j})\Big]
\bigg\}=0.
\end{align}
At large $N$, it leads to the algebraic equation
\begin{align}
1=\frac{\sigma'^{2}}{\sigma^{2}}+
    \sum_{\text{melonic graph $ G$}}\frac{\lambda_{ G} }
{\text{Sym($ G$)}}\,(\sigma')^{v( G)}
\end{align}

\medskip

Finally, it is interesting to note that this effective action, despite being non local, is invariant under reparametrization (in the IR) at all orders in $1/N$:
\begin{align}
G(t,t')\rightarrow \bigg(\frac{d\phi}{dt}(t)\bigg)^{\Delta}\bigg(\frac{d\phi}{dt'}(t')\bigg)^{\Delta}G(\phi(t),\phi(t')).
\end{align}
Indeed, changing the vertex variables as $t_{v}\rightarrow \phi(t_{v})$, the jacobians exactly cancel  with the rescaling of $G$ since $\Delta=1/q$ and all vertices are are $q$-valent.

\section{Concluding remarks and perspectives}
\label{sec:conclusion}

In this review paper we first showed the melonic dominance at large $N$ in the SYK model. We then considered  a colored version of the SYK model, which is a particular case of the Gross and Rosenhaus generalization of the SYK model \cite{Gross}, and is the real version of the model studied in \cite{gurau-ultim}. We have then 
analyzed the diagrammatics of the two- and four-point functions of this model, exhibiting the LO and NLO diagrams in the large $N$ expansion. 
%This method can be carried on to higher orders (the main difficulty being then the larger number of diagrams to account for). 
In particular, we have applied it to extract the NLO of the 4-point function, thus going beyond chain diagrams (also called ladders in \cite{maldacena}).
Finally, we have investigated the effects of non-Gaussian average over the random couplings $J$ in a complex clored SYK model. 
%To our knowledge, this is the first study of the effects of the relaxation of the Gaussianity condition in SYK models when no double scaling limit is taken.

A perspective for future work appears to us to be the computation of the corresponding Feynman amplitudes of the NLO diagrams we have shown in this review. Moreover several other SYK-like tensor models exist in the literature (the Klebanov-Tarnopolsky model \cite{CTKT}, or the supersymetric model \cite{brown}). It would thus be interesting to apply the diagrammatic techniques we have developed here for the study of these models as well, in order to compare the LO and NLO behavior of all these SYK-like models.

Another interesting perspective appears to us to be the investigation of the effects of a perturbation from Gaussianity in the case of $q=2$ (fermions with  a random mass matrix) and in the case 
of the real SYK model. The main technical complication in this latter case comes from the fact that one has to deal with graphs which are not necessary bipartite - the removal and reconnection of edges of these graphs (which is the main technical ingredient of our approach) being much more involved. It would thus be interesting to check weather or not in this case also, non-Gaussian perturbation leads to a modification of the variance of the Gaussian distributions of the couplings $J$ at leading order in $N$, as we proved to be the case for the complex version of the SYK model studied here.

\section*{Acknowledgements}
%If you'd like to thank anyone, place your comments here
R. Pascalie is partially supported by the Deutsche Forschungsgemeinschaft (DFG, German Research
Foundation) under Germany's Excellence Strategy EXC 2044--390685587, Mathematics Münster: Dynamics--Geometry--Structure.
A.~Tanasa is partially supported by the 
%CNRS Infiniti ModTens grant
PN 09 37 01 02 grant. 

%\end{acknowledgements}

%\appendix

% BibTeX users please use one of
\bibliographystyle{plain}  
\bibliography{biblioTanasa}  % name your BibTeX data base

\bigskip

\flushright

{\small
M. Laudonio\\
LaBRI, Universit\'e de Bordeaux, CNRS UMR 5800, Talence, France,\\
Department of Applied Mathematics, University of Waterloo, Waterloo, Ontario, Canada\\
\medskip
R. Pascalie\\ 
LaBRI, Universit\'e de Bordeaux, CNRS UMR 5800, Talence, France,\\
Mathematisches Institut der Westfalischen Wilhelms-Universitaat,  M\"unster, Germany, EU\\
\medskip
A. Tanasa \\
LaBRI, Universit\'e de Bordeaux, CNRS UMR 5800, Talence, France,\\
H. Hulubei Nat.  Inst.  Phys.  Nucl.  Engineering, Magurele, Romania\\
I. U. F., Paris, France\\}

\end{document}